\newcommand{\mon}{\textsc{MOn-1qfa}}
\newcommand{\lat}{\textsc{LQFA}}
\newcommand{\lats}{\textsc{LQFA}s}
\newcommand{\mons}{\textsc{MOn-1qfa}s}
\newcommand{\lmo}{\textbf{LMO}}
\newcommand{\pmo}{\textbf{PMO}}
\newcommand{\pt}{\textbf{PT}}
\newcommand{\lipt}{\text{liId}\textbf{PT}}
\newcommand{\id}{\text{liId}}
\newcommand{\C}{\mathbb{C}}
\newcommand{\R}{\mathbf{R}}
\newcommand{\J}{\mathbf{J}}
\newcommand{\bg}{\mathbf{BG}}
\newtheorem{definition}{Definition}
\newtheorem{theorem}{Theorem}
\newtheorem{proposition}{Proposition}
\newtheorem{corollary}{Corollary}
\newtheorem{lemma}{Lemma}
\begin{document}

\title{Algebraic Characterization of the Class of Languages recognized by Measure Only Quantum Automata}
\date{September 2013}

\author{Carlo Comin
\thanks{The author wish to thank Prof. Alberto Bertoni for having introduced him to the problem and 
for the stimulating discussions and research directions that finally lead to the results of this paper,  
Dr. Maria-Paola Bianchi for the kind collaboration on the structural drawing-up of the article. 
Also wish to thank the anonymous referees for their careful review and for their valuable comments, 
which improved both the content and quality of the manuscript.}\\
Dipartimento di Informatica, \\
Universit\'a degli Studi di Milano\\ via Comelico 39/41, Milano, Italy.\\
carlo.comin{@}studenti.unimi.it \\
}
\maketitle


\begin{abstract}
 We study a model of one-way quantum automaton where only measurement operations are allowed (\mon). 
We give an algebraic characterization of $\lmo(\Sigma)$, showing that the syntactic monoids 
of the languages in $\lmo(\Sigma)$ are exactly the $J$-trivial literally idempotent syntactic monoids, 
where $J$ is the \emph{Green}'s relation determined by two-sided ideals.
We also prove that $\lmo(\Sigma)$ coincides with the literal variety of literally idempotent piecewise testable languages. 
This allows us to prove the existence of a polynomial-time algorithm for deciding whether a regular language 
belongs to $\lmo(\Sigma)$ and to discuss definability issues in terms of the existential first-order logic $\Sigma_1[<]$ 
and the linear temporal logic without the next operator LTLWN. 
\end{abstract}


\section{Introduction}
This paper gives a characterization of the class of languages recognized by a model of quantum automata, 
by using tools from algebraic theory, in particular, varieties of languages and syntactic monoids.
Many models of one-way quantum finite automata are present in the literature: the oldest is 
the Measure-Once model \cite{BC01,BP02}, characterized by unitary evolution operators and 
a single measurement performed at the end of the computation. On the contrary, in other models, 
evolutions and measurements alternate along the computation \cite{Aal06, KW97}. 
The model we study is the Measure-Only Quantum Automaton (\mon), introduced in \cite{BMP10}, 
in which we allow only measurement operations, not evolution.
All these quantum models are generalized by Quantum Automata with Control Language \cite{BMP03}.
We also remind that the use of quantum measurements as computational steps is an active area of research, 
falling under the umbrella-name of MBQC (measurements-based quantum computation); 
we cite for instance the teleportation based model and the one-way quantum computer of cluster state 
computation \cite{GC99, Le04, Ni03}.\\\\
Originally, the \mon\, model has been introduced by \emph{Bertoni, et al.}, upon the framework of trace theory and 
free partial commutative monoids \cite{BMP10, BBMS81, DR95}, giving evidence that \emph{Mauri}'s 
approach to link trace theory to that of free partially commutative monoids can be useful also in the area of 
\emph{quantum computing}. Those insights finally lead to the results contained in \cite{BMP10},
 in which the authors proved that every formal series induced by the probabilistic behaviour of a \mon\, 
 admits a linear representation with projectors, moreover, that the whole family of such formal series is closed 
 under $f$-complement and Hadamard product and, as a main result, that the class of languages recognized 
 by \mons\, over compatibility alphabets $(\Sigma, E)$ is a boolean algebra of recognizable languages with finite variation. 
\\\\ In the present work we study \mons\, over finite non-empty set alphabets $\Sigma$, thus relaxing the requirement which forces us to fix a 
compatibility relation $E$. We shall denote the class of formal languages recognized by \mons\, over $\Sigma$ as $\lmo(\Sigma)$. 
By doing so we study a particular class of the languages studied in \cite{BMP10} and 
prove a characterization result in terms of the properties satisfied by syntactic monoids of those languages; 
in particular we lean upon the notion of $J$-triviality (where $J$ is one of the \emph{Green}'s relation, 
determined by two-sided monoid ideals) and the literal idempotency relation, 
i.e. $\sigma^2\sim \sigma$ for every letter $\sigma\in\Sigma$. 
This  general approach, to study properties of formal languages as a reflex-effect of the algebraic properties and equations
 satisfied by their syntactic monoids, is a well-established and celebrated principle in the algebraic theory of automata and 
 formal languages, having roots in finite semigroup and finite monoid theory. 
 The pillars of the formerly mentioned algebraic framework stand in the seminal works of \emph{Eilenberg}, 
 \emph{Sh\"utzenberger}, \emph{Simon}, \emph{Straubing}, \emph{Pin}, et al.; 
 for instance, we would like to cite it here the celebrated 
Eilenberg Variety Theorem \cite{Ei76}, the algebraic characterization of the piecewise testable languages in 
 terms of $J$-trivial syntactic monoids  given by \emph{Simon} \cite{Si75}, the algebraic characterization of the star-free languages
  by means of aperiodic syntactic monoids of \emph{Sh\"utzenberger} and the broad algebraic theory of varieties of languages and 
 pseudovarieties of monoids developed by \emph{Straubing} \cite{HS94}, \emph{Pin} \cite{Pi86}, \emph{et al.} 
 Those powerful algebraic masterworks have already 
 been used successfully in order to characterize classes of languages recognized by finite-state quantum devices; 
 as in the work of \emph{Ambainis, et al.} \cite{Aal06}, in which the authors established intimate ties between the family  
 of so-called Latvian automata and the pseudovariety $\bg$ of block-group syntactic monoids. 
As an aftermath of our mathematical journey, with this work we stand to defend that a formal language $L$ is
 recognized by some \mon\, if and only if it is a literal idempotent piecewise testable language, that is, 
 if and only if its syntactic monoid is $J$-trivial and it is such that its associated syntactic homomorphism literally satisfies 
 the idempotency pseudoidentity $x^2 = x$; this particular literal variety of the literally idempotent regular languages has already
 been studied algebraically by Kl\'ima and Pol\'ak in \cite{KP08}. As a corollary, we solve the polynomial-time 
 decidability question for $\lmo(\Sigma)$\, by proving the existence of a polynomial (quadratic) time algorithm for 
 deciding whether a regular language belongs to $\lmo(\Sigma)$. 
 We also discuss the definability of languages 
 recognized by \mons\, in terms of \emph{easy} formulas of the existential first-order logic $\Sigma_1[<]$ and 
 \emph{easy} formulas of the linear temporal logic without the next operator LTLWN. We would like to remind that the 
 logical approach to automata and formal languages is also a well-established area of mathematical research, 
 pioneered by the work of \emph{B\"uchi} \cite{Bu60}; to the best of our knowledge, he was the first to conceive and prove that 
 the regular languages are exactly those languages definable in the monadic second-order logic. 
 We highly remark that it is an extremely beauty asset of automata and formal language theory to be a 
 bridge between quantum and algebraic insights. 
\section{Preliminaries}

\subsection{Notation}
We shall denote by $\mathbb{N} := \{0, 1, 2, \cdots \}$ the set of natural numbers and by $\mathbb{N}_0 := \{1,2,\cdots\}$ 
the set of the positive ones. A semigroup is any set $S$ together with a binary operation $\cdot$ that is associative. 
A monoid $M$ is a semigroup containing an identity element $1_M$. If $S$ is a semigroup with no identity, we denote by $S^1$ the monoid obtained by $S\cup\{1_M\}$ such that $1_M$ is an identity in $S$. If $S$ is a monoid, then $S^1=S$.
If $S,S'$ ($M,M'$) are semigroups (monoids), and the map $\varphi:S\rightarrow S'$ ($\varphi:M\rightarrow M'$) is such 
that $\varphi(x\cdot x')=\varphi(x)\cdot \varphi(x')$ for every $x\in S$ ($x\in M$), then we say that $\varphi $ is a semigroup (monoid) homomorphism; we denote by $\varphi: X\hookrightarrow Y$ injective homomorphisms, by $\varphi: X \twoheadrightarrow Y$ 
surjective and by $\varphi: X \leftrightarrow Y$ bijective homomorphisms.\\
Given a finite alphabet $\Sigma$, we write $\Sigma^*$ to denote the free monoid generated by $\Sigma$. The free monoid $
\Sigma^*$ includes all possible words whose letters belong to $\Sigma$ and the empty word $\epsilon$. If $w\in\Sigma^*$ is a word, we denote its length by $|w|$ and its letters by $w_1, \ldots, w_{|w|}$. A subset $L\subseteq\Sigma^*$ is a formal language. A deterministic finite-state automaton is a tuple $A : = \langle \Sigma, Q, \delta, q_0, F \rangle$ where 
the transition function $\delta$ is from $\Sigma\times Q$ to $Q$ and $L_A:=\{w\in\Sigma^* | \delta(w)\in F\}$, where $\delta(w):=\delta(\cdots\delta(\delta(q_0, w_1), w_2), w_n)$ for any $w:=w_1\cdots w_n$ s.t. $|w|=n$, is the language recognized by $A$. A language recognized by some finite-state automaton is regular.\\ 
Let $L$ be a regular language and let $A_L : = \langle \Sigma, Q, \delta, q_0,
F \rangle$ be the minimal deterministic automaton recognizing $L$. For any word $w=w_1\cdots w_n\in\Sigma^*$, 
we define its variation as the cardinality 
$\text{var}_L(w) :=  \#\{0\leq k < n \mid \delta(w_1\cdots w_k)\neq \delta(w_1\cdots w_{k+1})\}$.
We say that $L$ has \emph{finite variation} if and only if $ \text{sup}_{x\in\Sigma^*} \text{var}_L(x) < \infty $.
\subsection{Linear algebra for quantum systems}
In this section we briefly outline some notions of linear algebra; this allows us to describe the concepts of quantum observable, 
quantum measurement and, more generally, quantum finite-state computing device. This summary follows the one given in 
\cite{BMP10}.
We denote the field of complex numbers by $\mathbb{C}$. Given a complex number $z\in\mathbb{C}$, its complex conjugate is 
denoted by $z^*$, and its modulus by $|z|=\sqrt{zz^*}$. We denote by $\mathbb{C}^{m \times n}$ the set of $m\times n$ matrices 
with complex valued entries. For any $M\in\mathbb{C}^{m\times n}$ and for any $1\leq i\leq m$ and $1\leq j\leq n$, we denote 
by $M_{ij}$ or $(M)_{ij}$ the $(i,j)$-th entry of $M$. The adjoint matrix of $M\in\mathbb{C}^{m\times n}$ is denoted by 
$M^{\dag}$ and it is defined by $M^{\dag} = (M^*)^T = (M^T)^*$, provided that $M^*$ is defined by $M^*_{ij} := (M_{ij})^*$ and that 
$M^T$ denotes matrix transposition. 
The trace of a square matrix $M\in\mathbb{C}^{n\times n}$ is given by $\text{Tr}(M):=\sum_{i=1}^{n} M_{ii}$. 
If $A\in\mathbb{C}^{m\times n}$ and $B\in\mathbb{C}^{p\times q}$, then their direct sum 
is the $(m+p)\times (n+q)$ matrix defined as follows: 
\[
A \oplus B := \left(
\begin{array}{c c}
A & \mathbf{0} \\
\mathbf{0} & B \\
\end{array}
\right) \;\;\; 
\] where $\mathbf{0}$ denotes null matrices of suitable dimensions. 
An Hilbert space of finite dimension $m$ is the linear space $\mathbb{C}^{1\times m}$ equipped with sum and product by 
elements in $\mathbb{C}$, in which the inner product is defined as $(\pi, \xi):=\pi\xi^{\dag}$. If $(\pi, \xi)=0$ then we say 
that $\pi$ is orthogonal to $\xi$. The norm of vector $\pi$ is given by $||\pi||:=\sqrt{(\pi, \pi)}$. Two subspaces $X, Y$ are 
orthogonal if any vector in $X$ is orthogonal to any vector in $Y$ and, in this case, the linear space generated by $X\cup Y$ is 
denoted by $X\oplus Y$. A matrix $M\in\mathbb{C}^{m\times m}$ can be view as a morphism $\pi\mapsto \pi M$ of the 
Hilbert space $\mathbb{C}^{1\times m}$ in itself and it is said to be Hermitian whenever $M=M^{\dag}$. 
Given an Hermitian matrix $O$, we denote 
by $\lambda_1, \ldots, \lambda_s$ its eigenvalues and by $E_1, \ldots, E_s$ the corresponding eigenspaces. It is well-known 
that each eigenvalue $\lambda_k$ is real, that $E_i$ is orthogonal to $E_j$ for any $i\neq j$ and that 
$E_1\oplus \cdots \oplus E_s = \mathbb{C}^{1\times m}$. Each vector $\pi$ can be uniquely decomposed such as 
$\pi = \pi_1 + \cdots + \pi_s $, where $\pi_j \in E_j$; the linear transformation $\pi\mapsto\pi_j$ is called projector $P_j$ on the 
subspace $E_j$. A linear operator is a projector if and only if it is Hermitian and idempotent. Every Hermitian matrix $O$ is 
uniquely determined by its eigenvalues and its eigenspaces, or by its projectors. By the spectral decomposition theorem it holds 
that, for some $s\in\mathbb{N}_0$, $O=\sum^{s}_{i=1} \lambda_{i} P_i$ and we denote by 
$V(O):=\{\lambda_i\}_{i=1}^s$ the spectrum of $O$. 
Given the set $e_1, \ldots, e_m\in\{0,1\}^m$ of pure states, a quantum state is a superposition 
$\pi = \sum_{k=1}^{m} \pi_k e_k$ where the coefficients $\pi_k$ are complex amplitudes and $||\pi|| = 1$. A 
quantum observable is represented by an Hermitian operator $O=\sum_{i=1}^s \lambda_i P_i$ where the spectrum $V(O)$ is the set of possible results of a measurement of $O$. A measurement of $O$ on $\pi$ will return $\lambda_j$ with probability 
$|| \pi P_j ||^2$ and the state after the quantum measurement becomes $\pi P_j/||\pi P_j||^2$.
\subsection{Varieties of formal languages, pseudovarieties of finite monoids and literal idempotency}
This section is devoted to the recall of some general definitions and results from the algebraic 
theory of automata and formal languages. For more details, we refer the reader to, e.g. \cite{Ei76,Pi86}.
\begin{definition}[Syntactic monoid]
Let $L\subseteq\Sigma^*$ be a language over the alphabet $\Sigma$. The syntactic congruence $\sim_L$ w.r.t. 
$L$ is defined as follows: for every $x,y\in\Sigma^*$, $x\sim_L y$ if and only if for every $a,b\in\Sigma^*$ it holds that $axb\in L \iff ayb\in L
$. For any language $L\subseteq\Sigma^*$, we say that the quotient monoid $M(L):=\Sigma^*/ \sim_L$ is the syntactic monoid of $L$. 
\end{definition}
\begin{definition}[M-pseudovariety]
We say that a class of finite monoids $\mathbf{M}$ is a pseudovariety if and only if the following three conditions holds:
(i) If $M\in \mathbf{M}$ and $N$ is a submonoid of $M$, then $N\in\mathbf{M}$. 
(ii) $M\in\mathbf{M}$ and $Q$ is a homomorphic image, i.e. quotient monoid, of $M$, then $Q\in\mathbf{M}$. 
(iii) If $M,N\in\mathbf{M}$, then the direct product monoid $M\times N$ is also in $\mathbf{M}$.
\end{definition}
\begin{definition}[Literal pseudovariety \cite{KP08}] Let $\mathbf{H}$ be a class of surjective homomorphisms from free monoids over non-empty sets onto finite 
monoids. Then $\mathbf{H}$ is a literal pseudovariety if it is closed with respect to the homomorphic images, literal 
substructures and products of finite families; formally, the following three conditions must be satisfied: 
(i) For each $(\varphi : \Sigma^*\twoheadrightarrow M)\in\mathbf{H}$ and surjective monoid homomorphism 
$\sigma:M\twoheadrightarrow N$, it holds $\sigma\varphi\in \mathbf{H}$.
(ii) For each $(\varphi : \Sigma^*\twoheadrightarrow M)\in\mathbf{H}$ and for each free monoid homomorphism 
$f:\Gamma^*\rightarrow \Sigma^*$ such that $f(\Gamma)\subseteq \Sigma$, it holds 
$(\varphi f : \Gamma^* \twoheadrightarrow (\varphi f)(\Gamma^*))\in\mathbf{H}$.
(iii) For each non-empty set $\Sigma$, the mapping of $\Sigma^*$ onto the one element monoid $\{1\}$ is in $\mathbf{H}$, 
moreover, for each $\varphi:\Sigma^*\twoheadrightarrow M, \psi:\Sigma^*\twoheadrightarrow N$, the natural homomorphism 
of $\Sigma^*$ onto $\Sigma^* / (\text{ker}\,\varphi \cap \text{ker}\,\psi$) is in $\mathbf{H}$.
\end{definition}

\begin{definition}[*-variety and literal variety of languages] Let $\mathbf{M}$ be a class of monoids and let $\Sigma$ be an alphabet. We denote by $V_{\Sigma}(\mathbf{M})$ 
the class of regular languages on $\Sigma$ having syntactic monoid in $\mathbf{M}$. 
We say that a class of regular languages $V:\Sigma\rightarrow 2^{\Sigma^*}$ is a $*$-variety of \emph{Eilenberg} if $V$
 is closed under boolean operations, right and left quotient, and inverse homomorphism. Replacing closure under inverse 
 homomorphism by closure under inverse literal homomorphism, we get the notion of literal variety of languages. More precisely, 
 closure under literal homomorphisms holds whenever the following condition is satisfied: for each alphabets $\Gamma$ and 
 $\Sigma$ and a free-monoid homomorphism $f:\Gamma^*\rightarrow \Sigma^*$ such that $f(\Gamma)\subseteq \Sigma$, then 
 $L\in V(\Sigma)$ implies $f^{-1}(L)\in V(\Gamma)$.
 \end{definition}
%
%
%
A fundamental result is due to \emph{Eilenberg}, who showed that there exists a bijection $V$ from the psuedovarieties of 
monoids and the $*$-varieties of formal languages \cite{Pi86}. 
This result, known as the Eilenberg Variety Theorem, has been extended in literature  
in many ways; for instance, \emph{Straubing} considered the more general notion of $\mathbb{C}$-variety 
\cite{St02} and the work of \emph{\'Esik, et.al.} \cite{EI03}, \cite{EL03} focused on literal varieties and the corresponding links with literal pseudovarieties. 
Due to \emph{Kunc} \cite{Ku03} we also have equational logic for those classes of languages.
This result of \emph{Eilenberg} is thus prominent and of remarkable importance in the algebraic theory of automata and formal languages.
\newline We also recall some fundamental equivalence relations studied by \emph{Green} in 1951 \cite{Pi86}. They allow us to describe some 
$*$-varieties of languages arising from the notion of triviality with respect to \emph{Green}'s equivalence classes. 
For instance, we introduce the pseudovariety of $J$-trivial monoids and the $*$-variety of the piecewise testable languages, as defined below.
\begin{definition}[Green's relations, \cite{Pi86}]
We denote by $L, R$ and $J$ the \emph{Green}'s relations determined by left, right and two-sided ideals, respectively.
In more detail, let $S$ be a semigroup, for any $a,b\in S^{1}$ we have 
\begin{enumerate}
\renewcommand\labelenumi{\theenumi}
\renewcommand{\theenumi}{(\roman{enumi})}
\item $a L b \iff S^1 a = S^1 b$.
\item $a R b \iff aS^1 = b S^1$.
\item $a J b \iff S^1 a S^1 = S^1 b S^1$.
\end{enumerate}
Let $S$ be a semigroup and $G$ be one of \emph{Green}'s relations, we say that $S$ is $G$-trivial if and only if $aGb$ implies $a=b$, for every $a,b\in S$.
In this paper we denote by $\R $ the pseudovariety of $R$-trivial finite monoids and by $\J$ the pseudovariety of $J$-trivial finite monoids.
We also define $\overline\J$ as the class of $J$-trivial finite monoids $M$ such that every surjective 
homomorphism $\varphi:\Sigma^*\twoheadrightarrow M$, from the free monoid generated by a non-empty set onto M, literally satisfies the idempotency pseudoidentity $x^2=x$, i.e. $\varphi(\sigma)\varphi(\sigma)=\varphi(\sigma)$, for every $\sigma\in\Sigma$.\end{definition}
\begin{definition}[Literally idempotent piecewise testable languages]
We say that a language $L\in\Sigma^*$ is literally idempotent if and only if for all $x,y\in\Sigma^*$ and $a\in\Sigma$, 
$xa^2y\in L\Leftrightarrow xay\in L$. We say that $L$ is a piecewise testable language if and only if it lies in the boolean 
closure of the following class of languages, defined for each $k\geq 0$:
\[\Sigma^* a_1 \Sigma^* a_2 \Sigma^*\cdots \Sigma^* a_k \Sigma^*, \text{ for } a_1, a_2, \ldots, a_k\in\Sigma \]
Moreover, we say that $L$ is literally idempotent piecewise testable if and only if it lies
in the boolean closure of the following class of languages, defined for each $k\geq 0$: 
\[\Sigma^* a_1 \Sigma^* a_2 \Sigma^*\cdots \Sigma^* a_k \Sigma^*, \text{ for } a_1, a_2, \ldots, a_k\in\Sigma \text{ and } 
 a_i \neq a_{i+1} \text{ for every } 1 \leq i < k \]
We denote by $\pt$ the class of the piecewise testable languages, by $\id$ the class of literally idempotent languages and by $\lipt$ the class of the literally idempotent piecewise testable languages.
\end{definition}
One of the first known instances of the Eilenberg Variety Theorem is based on the $J$ relation.
In fact, it has been proved by \emph{Simon} the following remarkable characterization result: 
\begin{theorem}[Simon, \cite{Si75}]
$L$ is a piecewise testable language 
if and only if its syntactic monoid is $J$-trivial. 
\end{theorem}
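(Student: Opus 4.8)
The plan is to prove the two implications separately, after recasting piecewise testability into a congruence-theoretic form. For each $k \geq 0$ define the relation $\sim_k$ on $\Sigma^*$ by $u \sim_k v$ if and only if $u$ and $v$ contain exactly the same scattered subwords of length at most $k$, i.e. the same words $a_1 \cdots a_j$ with $j \le k$ occurring as (not necessarily contiguous) subsequences. Since there are only finitely many words of length $\le k$, each $\sim_k$ has finite index, and because the length-$\le k$ subwords of a product $uv$ are determined by those of $u$ together with those of $v$, the relation $\sim_k$ is a congruence of finite index on $\Sigma^*$. The generators $\Sigma^* a_1 \Sigma^* \cdots a_k \Sigma^*$ are precisely the languages ``some fixed word of length $k$ occurs as a subword'', so the Boolean algebra they generate (across all heights up to $k$) coincides with the set of unions of $\sim_k$-classes. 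Consequently $L \in \pt$ if and only if $\sim_k \subseteq \sim_L$ for some $k$; this is the statement I will match against $J$-triviality of $M(L)$.

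For the easy direction assume $L \in \pt$, so $\sim_k \subseteq \sim_L$ for some $k$, and set $M_k := \Sigma^*/\sim_k$. First I would check that $M_k$ is $J$-trivial: if $[u] \, J \, [v]$ in $M_k$, then $v \sim_k p\,u\,q$ and $u \sim_k r\,v\,s$ for some $p,q,r,s \in \Sigma^*$; since adjoining letters on either side can only enlarge the set of length-$\le k$ subwords, the subword sets of $u$ and $v$ contain one another, forcing $u \sim_k v$, i.e. $[u]=[v]$. Since $\sim_k$ refines $\sim_L$, the syntactic morphism factors through $M_k$, so $M(L)$ is a homomorphic image of the $J$-trivial monoid $M_k$; as $\J$ is a pseudovariety, hence closed under homomorphic images, we get $M(L) \in \J$.

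The converse is the substantial half and the real content of Simon's theorem. Assuming $M := M(L)$ is $J$-trivial with syntactic morphism $\varphi : \Sigma^* \twoheadrightarrow M$, I must produce a bound $k$ with $\sim_k \subseteq \ker \varphi$, equivalently show that knowing all subwords of $u$ up to some length $k$ depending only on $M$ already determines $\varphi(u)$. The strategy I would follow is an induction exploiting the structure of finite $J$-trivial monoids: because $J$-classes are singletons, the relation $m \leq_J m'$ defined by $M^1 m M^1 \subseteq M^1 m' M^1$ is a genuine partial order on $M$, and $J$-triviality forces aperiodicity, so some power $\varphi(u)^{n}$ is idempotent and satisfies $\varphi(u)^{n+1}=\varphi(u)^{n}$. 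Tracking the content (set of letters) $c(u)\subseteq\Sigma$ of a word, one factors $u$ and $v$ at the first and last occurrences of the letters of their common content and applies the induction hypothesis to the factors, which live over strictly smaller alphabets; the idempotent stabilization is precisely what lets one glue the factorizations back together and conclude $\varphi(u)=\varphi(v)$ once $u\sim_k v$ for $k$ large enough.

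The main obstacle is exactly this last combinatorial core: establishing the explicit bound on $k$ and verifying that the factor-and-glue induction is sound, in particular that the inductive hypothesis over smaller alphabets can be transported across the idempotents without losing information about $\varphi$. This is where Simon's original argument is delicate, and I would expect to invest most of the effort in the careful bookkeeping of occurrences and contents; by contrast, once the bound is in hand the conclusion $L\in\pt$ is immediate from the reformulation of the first paragraph.
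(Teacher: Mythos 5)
The paper does not actually prove this statement: it is quoted as Simon's theorem and attributed to \cite{Si75}, so there is no internal proof to compare against. Judged on its own terms, your proposal correctly sets up the standard congruence-theoretic reformulation ($L\in\pt$ iff $\sim_k\,\subseteq\,\sim_L$ for some $k$) and gives a complete, correct argument for the easy direction: the quotient $\Sigma^*/\sim_k$ is $J$-trivial because two-sided multiplication can only enlarge the set of scattered subwords of length at most $k$, and $J$-triviality passes to the homomorphic image $M(L)$ since $\J$ is a pseudovariety.

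The genuine gap is the converse, which is the entire content of Simon's theorem. You correctly identify the strategy (induction on the content $c(u)\subseteq\Sigma$, factorization at first and last occurrences of letters, aperiodicity giving $\varphi(u)^{n+1}=\varphi(u)^n$), but you do not carry out the combinatorial core: no bound on $k$ is established, and the ``factor-and-glue'' step --- showing that if $u\sim_k v$ for $k$ large enough then the factorizations over smaller alphabets can be recombined through the idempotents to force $\varphi(u)=\varphi(v)$ --- is exactly the delicate part that you defer. As written, the implication ``$M(L)$ is $J$-trivial $\Rightarrow$ $L\in\pt$'' is asserted with a plausible plan but not proved; without the key lemma (in Simon's original treatment, roughly that $u\sim_{2n-1}v$ with $n$ related to $|M|$ implies the existence of a word $w$ containing both $u$ and $v$ as subwords with $\varphi(w)=\varphi(u)=\varphi(v)$, using that in a $J$-trivial monoid $m\leq_J m'$ and $m'\leq_J m$ force $m=m'$ at each stage of the factorization), the argument does not close. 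Since this is a classical cited result rather than something the paper establishes, the omission does not affect the paper, but the proposal is not a self-contained proof.
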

In \cite{KP08}, \emph{Kl\'ima} and \emph{Pol\'ak} proved the following characterization for the literally idempotent piecewise testable 
languages.
\begin{theorem}[Kl\'ima, Pol\'ak, \cite{KP08}]\label{thm:klimapolak} Let $L\subseteq\Sigma^*$ be a formal language. 
Then the following propositions are equivalent: 
\begin{enumerate}
\renewcommand\labelenumi{\theenumi}
\renewcommand{\theenumi}{(\roman{enumi})}
\item $L$ lies in the boolean closure of the following class of languages: \\
$\Sigma^* a_1 \Sigma^* a_2 \Sigma^*\cdots \Sigma^* a_k \Sigma^*$ for $a_1, a_2, \ldots, a_k\in\Sigma$ and 
 $a_i \neq a_{i+1}$ for every $1 \leq i < k $,\\ that is to say 
 $L\in\lipt$.
\item $L$ is picewise testable and literally idempotent, that is to say $L\in \pt\cap\id(\Sigma) = V_{\Sigma}(\mathbf{J})\cap \id(\Sigma)$.
\item the syntactic monoid of $L$ is $J$-trivial and satisifies the pseudoidentity $\sigma^2=\sigma$ literally, \\ we denote this fact by $L\in V_{\Sigma}\left(\overline{\J}\right)$.
\end{enumerate}
\end{theorem}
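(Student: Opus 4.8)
The plan is to prove the three-way equivalence of Theorem~\ref{thm:klimapolak} by establishing a cycle of implications, say $(i)\Rightarrow(ii)\Rightarrow(iii)\Rightarrow(i)$, treating each arrow separately. The statement ties together a concrete combinatorial description of the languages (the boolean closure of the literally alternating piecewise patterns), a semantic characterization (piecewise testable and literally idempotent), and a purely algebraic one (the syntactic monoid lies in $\overline{\J}$). Since Simon's theorem is already available to us, the backbone of the argument will be to factor everything through his result: piecewise testability is exactly $J$-triviality of the syntactic monoid, so the only genuinely new content is understanding how the \emph{literal idempotency} condition $xa^2y \sim_L xay$ interacts with that structure.

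First I would prove $(i)\Rightarrow(ii)$. Each generating language $\Sigma^* a_1 \Sigma^* \cdots \Sigma^* a_k \Sigma^*$ is visibly piecewise testable, so by Simon its syntactic monoid is $J$-trivial; piecewise testability is closed under boolean operations by definition, so any $L\in\lipt$ is piecewise testable. For literal idempotency I would argue directly at the level of the generators: in a pattern where consecutive letters satisfy $a_i\neq a_{i+1}$, inserting or deleting a repeated letter $a$ inside a word cannot change whether the alternating subsequence $a_1\cdots a_k$ is embeddable, because an alternating pattern can never require two equal letters in adjacent positions. This gives literal idempotency on each generator, and I would then check that the property $xa^2y\in L \Leftrightarrow xay\in L$ survives boolean combinations (it does, since it is a condition on membership stated uniformly for all $x,y$).

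Next, $(ii)\Rightarrow(iii)$ is essentially a translation step. Piecewise testability already yields $J$-triviality of $M(L)$ via Simon. It remains to show that literal idempotency of the language, expressed as $xa^2y\sim_L xay$, is equivalent to the syntactic homomorphism $\varphi:\Sigma^*\twoheadrightarrow M(L)$ literally satisfying $\sigma^2=\sigma$, i.e. $\varphi(a)^2=\varphi(a)$ for every letter $a$. This is a direct unwinding of the definition of the syntactic congruence: $\varphi(a^2)=\varphi(a)$ for all $a\in\Sigma$ says exactly that $a^2$ and $a$ are syntactically equivalent, which is the literal idempotency of $L$. I would be careful here to note that one only needs idempotency on the \emph{generators} $\varphi(\Sigma)$, not on all of $M(L)$, and that this is precisely what membership in $\overline{\J}$ demands.

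The implication $(iii)\Rightarrow(i)$ is where I expect the main obstacle, since it requires reconstructing an explicit boolean combination of alternating patterns from the algebraic hypotheses. Given that $M(L)$ is $J$-trivial with $\varphi$ literally idempotent, Simon's theorem already guarantees $L$ is a boolean combination of ordinary piecewise patterns $\Sigma^* a_1\Sigma^*\cdots a_k\Sigma^*$. The work is to replace these by \emph{literally alternating} ones. The key observation is that literal idempotency lets us collapse any maximal run of a repeated letter: a pattern containing an adjacent repeat $\cdots a_i\Sigma^* a_{i+1}\cdots$ with $a_i=a_{i+1}$ defines the same language, modulo $\sim_L$, as the pattern with that repeat deleted, because $\Sigma^* a\Sigma^* a\Sigma^* \cap L = \Sigma^* a\Sigma^* \cap L$ under literal idempotency (one copy already forces the other). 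Formalizing this reduction — showing that every non-alternating generator can be rewritten, up to the syntactic equivalence, as a boolean combination of strictly alternating ones, and that the rewriting terminates — is the technical heart of the proof. I would organize it as an induction on the number of adjacent repeats in the defining pattern, using the idempotency relation to eliminate them one at a time, and then appeal to closure under boolean operations to conclude $L\in\lipt$.
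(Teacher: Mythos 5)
First, note that the paper does not prove this theorem at all: it is imported verbatim from Kl\'ima and Pol\'ak \cite{KP08} and used as a black box, so there is no in-paper proof to compare against. Judged on its own merits, your directions $(i)\Rightarrow(ii)$ and $(ii)\Rightarrow(iii)$ are correct and essentially routine: an alternating pattern $a_1\cdots a_k$ embeds in $xa^2y$ iff it embeds in $xay$ (an embedding cannot place two adjacent, hence distinct, pattern letters on the two copies of $a$), literal idempotency passes through boolean combinations, and $\varphi(a)^2=\varphi(a)$ is by definition the statement $a^2\sim_L a$.

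The gap is in $(iii)\Rightarrow(i)$, and it sits exactly where you locate ``the technical heart.'' Your key lemma, that under literal idempotency of $L$ one has $\Sigma^*a\Sigma^*a\Sigma^*\cap L=\Sigma^*a\Sigma^*\cap L$, is false. Take $\Sigma=\{a,b\}$ and $L=a^+$, which is literally idempotent, piecewise testable, and indeed lies in $\lipt$; then the word $a$ belongs to $\Sigma^*a\Sigma^*\cap L$ but not to $\Sigma^*a\Sigma^*a\Sigma^*\cap L$. Literal idempotency of $L$ says that membership of $xay$ and $xa^2y$ in $L$ agree; it says nothing about whether a \emph{fixed} word happens to contain one or two occurrences of $a$, which is what the pattern language tests. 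The deeper obstruction is that the individual generators appearing in the boolean combination furnished by Simon's theorem need not themselves be literally idempotent (e.g.\ $\Sigma^*a\Sigma^*a\Sigma^*$ is not), so no generator-by-generator rewriting that preserves each generator's language can work; only the global combination defining $L$ is literally idempotent. A correct argument has to exploit this globally, e.g.\ by showing that $L$ is saturated by the congruence generated jointly by $a^2\sim a$ and a suitable Simon congruence $\sim_n$, and that the classes of the resulting ``same alternating subwords up to length $n$'' equivalence are boolean combinations of alternating patterns. That is the actual content of the Kl\'ima--Pol\'ak proof, and your induction on adjacent repeats does not substitute for it.
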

\subsection{\mons\, and \lats}
The \mon\, model has been originally introduced in the context of free partial commutative monoids with idempotent 
generators \cite{BMP10}. 
Let $(\Sigma, E)$ be a compatibility alphabet, that is a finite simple graph over a finite non-empty set $\Sigma$. Let us denote by $\underline{E}$ the least congruence containing
$(cc,c)$ for all $c\in\Sigma$ and $(ab,ba)$ for all $a,b\in\Sigma$ such that $(a,b)\in E$. We say that $\Sigma^*/\underline{E}$ is a 
free partially commutative monoid with idempotent generators and we denote it by $\text{FI}(\Sigma, E)$. 
Each element $t\in\text{FI}(\Sigma, E)$, that is an equivalence class of $\underline{E}$, shall be seen as a language $t\subseteq\Sigma^*$.\\
Following \cite{BMP10}, a \mon\ over the compatibility alphabet $(\Sigma, E)$ is a tuple of the form 
$A := \langle (\Sigma, E) \cup \{\#\},$ $(O_c)_{c\in \Sigma\cup\{\#\}}, \pi_0, F \rangle$ 
where the complex $m$-dimensional vector $\pi_0\in\C^{1\times m}$, with unitary norm $||\pi_0||=1$, 
is called the quantum initial state of $A$ and for every $c\in\Sigma$, $O_c\in\C^{m\times m}$
is (the representative matrix of) an Hermitian operator and denotes an observable. Also, 
the subset $F\subseteq V(O_{\#})$ of the eigenvalues of $O_{\#}$ is called the spectrum of the quantum final accepting 
states of $A$.\\ Here we study \mons\, over finite non-empty set alphabets $\Sigma$, thus relaxing the requirement which forces us to fix a 
compatibility relation $E$.
\begin{definition}[\mons\, over $\Sigma$]\label{def:monOverSigma}
Let $\Sigma$ be a finite non-empty set. A \mon\ over the alphabet $\Sigma$ is a tuple of the form 
$
A := \langle \Sigma \cup \{\#\},$ $(O_c)_{c\in
  \Sigma\cup\{\#\}}, \pi_0, F \rangle
$. 
The complex $m$-dimensional vector $\pi_0\in\C^{1\times m}$, with unitary norm $||\pi_0||=1$, 
is called the quantum initial state of $A$.
For every $c\in\Sigma\cup\{\#\}$, $O_c\in\C^{m\times m}$
is (the representative matrix of) an Hermitian operator and denotes an observable.
The subset $F\subseteq V(O_{\#})$ of the eigenvalues of $O_{\#}$ is called the spectrum of the quantum final accepting 
states of $A$. We also say that $A$ is of finite dimension $m\in\mathbb{N}_0$. 
\end{definition}
\begin{definition}[\mons\, language recognition]
Let $A := \langle \Sigma \cup \{\#\},$ $(O_c)_{c\in\Sigma\cup\{\#\}}, \pi_0, F \rangle$ be a \mon\, over $\Sigma$.
Starting from the initial state $\pi_0$, at each step of the computation an input letter $c$ is read, 
then the system is measured by applying observable $O_c$. At the end of the input word $w$, we 
measure the system with the end-marker observable $O_{\#}$ and we probabilistically accept or reject $w$ according to whether this last measure is in $F$ or not. 
We formalize this process by means of density matrices.
For every $c\in\Sigma\cup\{\#\}$ let $P_j(c)$ to be the $j$-th projector of $O_c$, and let $k(c)$ to be the cardinality of all such projectors, that is to say the cardinality of $V(O_c)$. 
Then, for every word $w\in\Sigma^*$, we define the following density matrix $\sigma(w)$:
\[
 \sigma(w) = \left\{
  \begin{array}{l l}
    \pi_0^{\dag}\pi_0 &\text{ se $w=\epsilon$}\\
    \sum_{j=1}^{k(c)} P_j(c)\sigma(y) P_j(c)& \text{ if $w=cy$ for some $c\in\Sigma, y\in\Sigma^*$} \\
  \end{array} \right.
\]
The probability that $A$ accepts $w$ is defined as $p_A(w) := \text{Tr}\left( \sum_{r_j\in F} P_{\text{j}}(\#)\sigma(w) P_{\text{j}}(\#)\right)$ 
where $r_j\in F$ vary among final accepting eigenvalues.
If $\lambda$ is a positive real number, we say that the language $L_A:=\{w\in\Sigma^* | p_A(w) > \lambda\}$ is recognized by $A$ with $\lambda$ cut-point. 
The cut-point $\lambda$ is said to be isolated whenever there exists a positive real $\delta$ such that 
$|p_A(w)-\lambda| \geq \delta$ for every $w\in\Sigma^*$. 
\end{definition}
As we shall see, the \mon\, model is intimately related to another model of one-way quantum finite-state automata, known as the Latvian model, introduced by $\emph{Ambainis, et al.}$ \cite{Aal06}. This fact will turn out to be a cornerstone in order to prove our result. In this section we review the Latvian model and its known characterization.
\begin{definition}[\lats]
A Latvian Automaton (\lat) with $m$ elementary pure states is a 5-tuple of the following form
\[
\langle \Sigma\cup\{\sharp\}, \{U_{c}\}_{c\in\Sigma\cup\{\sharp\} }, \{O_{c}\}_{c\in\Sigma\cup\{\sharp\}}, \pi_0, F \rangle 
\] such that every $U_{c}\in\mathbb{C}^{m\times m}$ is a unitary matrix, i.e. $U_{c}^{-1}=U_{c}^{\dag}$, and  
every $O_{c}\in\mathbb{C}^{m\times m}$ is a quantum observable, i.e. an Hermitian matrix. 
For each $O_{c}$ we denote the set of its orthogonal eigenspaces by $\{E_1, \ldots, E_{k} \}$. 
For each $E_i$ we denote by $P_i$ the corresponding orthogonal projection matrix. The initial state of the system is denoted by 
 $\pi_0\in\mathbb{C}^{1\times m}$. We denote by $F\subseteq V(O_{\#})$ the final accepting eigenvalues. 
\end{definition}
\begin{definition}[\lats\, language recognition]
Starting from the initial state $\pi_0$, at each step of the computation an input letter $c$ is read, 
then the unitary operator $U_c$ is applied and the system is measured by means of $O_c$. At the end of the input word $w$ we 
measure the system with the end-marker observable $O_{\#}$ and we probabilistically accept or reject $w$ according to whether this last measure is in $F$ or not. 
A \lat, recognizes languages by bounded (double-sided) 
error mode of acceptance \cite{Aal06}. More precisely, we say that the automaton $A$ recognizes the language $L$ with bounded 
(two-sided) error if $M$ accepts any $w\in L$ and rejects any $w\not\in L$ with probability at least $p$, where $p > \frac{1}{2}$.
Observe that bounded (two-sided) error mode of acceptance is equivalent to $\delta$-isolated $\lambda$ cut-point mode of acceptance, provided that $\lambda = \frac{1}{2}$.
\end{definition}
It has been proved in \cite{Aal06}, as a main result, the following characterization in terms of block-group syntactic monoids.
\begin{definition}[Block-group monoids]
Let $M$ be a finite monoid and let $L,R$ denote respectively the left and the right Green's equivalence relations.
We say that $M$ is a block-group if and only if every $R$-class and every $L$-class of $M$ contains at most one idempotent element. 
The class of block-group monoids is a pseudovariety of monoids, we denote it by $\bg$.
\end{definition}
\begin{theorem}[Ambainis et.al. \cite{Aal06}] \label{thm:ambainis}
Let $L\in\Sigma^*$ be a formal language over the alphabet $\Sigma$. Then $L$ is recognized by some 
\lat\, if and only if its syntactic monoid is a block-group.
\end{theorem}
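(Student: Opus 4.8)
The plan is to prove both implications by routing through the Eilenberg correspondence. I would first verify that the class $\mathcal{C}$ of languages recognized by some \lat\, is a $*$-variety, i.e. closed under Boolean operations, left and right quotients, and inverse homomorphisms; by the Eilenberg Variety Theorem it then equals $V_{\Sigma}(\mathbf{V})$ for a unique pseudovariety $\mathbf{V}$ of finite monoids, and the whole statement reduces to the identity $\mathbf{V}=\bg$. Closure under complement is immediate: one replaces the accepting spectrum $F$ by $V(O_{\#})\setminus F$ and uses that the $\delta$-isolation of the cut-point $\tfrac{1}{2}$ is symmetric under exchanging acceptance and rejection. For intersection (hence, together with complement, for union) I would run two given \lats\, in parallel on the tensor space $\C^{1\times m}\otimes\C^{1\times m'}$ and combine their terminal measurements, recovering isolation of the new cut-point by a standard amplification argument. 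Closure under inverse homomorphism is obtained by substituting, for each input letter, the finite block of unitary-and-observable pairs prescribed by the homomorphism.

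For the inclusion $\bg\subseteq\mathbf{V}$ I would lean on a structural description of $\bg$ that realizes every block group as a divisor of a cascade (semidirect) product built from a $J$-trivial monoid and a finite group, e.g. the known identity expressing $\bg$ as the Mal'cev product of $\mathbf{J}$ and $\mathbf{G}$. The two factors are each realizable quantumly: finite-group languages are recognized by purely unitary \emph{Measure-Once} automata, which are a special case of \lats; and $J$-trivial, i.e. piecewise testable, languages (by Simon's theorem) are recognized by measurement-driven \lats\, as well. The crucial observation is that the \lat\, dynamics — a unitary step followed by a projective measurement at each letter — is exactly the quantum analogue of the cascade composition of a group component with a measurement component, so the two building blocks can be wired together inside a single \lat; closing under the variety operations of the first paragraph then yields all of $V_{\Sigma}(\bg)$.

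For the converse $\mathbf{V}\subseteq\bg$, fix $L$ recognized by a \lat\, $A$ with $\delta$-isolated cut-point $\tfrac{1}{2}$, and study the superoperators $\mathcal{E}_c(\rho)=\sum_j P_j(c)\,U_c\,\rho\,U_c^{\dag}\,P_j(c)$ acting on density matrices, so that a word $w$ acts by the composite $\mathcal{E}_w$ and $p_A(w)$ is an affine functional of $\mathcal{E}_w(\pi_0^{\dag}\pi_0)$. Each $\mathcal{E}_c$ is trace-preserving and contractive, so the monoid they generate has well-behaved limits, and an idempotent $e=\varphi(u)$ of the syntactic monoid corresponds to a word $u$ whose superoperator has stabilized. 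To force the block-group condition I must show that no two distinct idempotents $e\neq f$ are $\mathcal{R}$- or $\mathcal{L}$-related. Here the reversibility of the unitary part is decisive: the $\mathcal{R}$- and $\mathcal{L}$-relations encode, respectively, right- and left-prolongability of the stabilized configurations, and were a second idempotent to share an $\mathcal{R}$- or $\mathcal{L}$-class, pumping the associated words would drive $p_A$ arbitrarily close to $\tfrac{1}{2}$, contradicting $\delta$-isolation.

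The hard part will be this converse direction, and specifically the control of arbitrarily long products $\mathcal{E}_w$ of completely positive contractions: I must prove that the accepting probability cannot drift into the forbidden band $(\tfrac{1}{2}-\delta,\tfrac{1}{2}+\delta)$ unless the algebraic block-group condition already holds. This is essentially an ergodic and limiting analysis of products of CP maps, and reconciling the quantitative isolation hypothesis with the purely qualitative conclusion that every $\mathcal{R}$-class and every $\mathcal{L}$-class carries at most one idempotent is where the genuine difficulty lies.
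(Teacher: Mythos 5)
The paper gives no proof of Theorem \ref{thm:ambainis}: it is imported as a black box from Ambainis et al.\ \cite{Aal06}, so your proposal can only be measured against the argument in that reference. Your overall architecture does track it (closure properties plus Eilenberg, a decomposition of $\bg$ for the constructive direction, a limiting analysis of the per-letter superoperators for the converse), but there are two genuine gaps. For $\bg\subseteq\mathbf{V}$, the concluding step ``closing under the variety operations of the first paragraph then yields all of $V_{\Sigma}(\bg)$'' fails: applying Boolean operations, quotients and inverse homomorphisms to the union of the group languages and the piecewise testable languages only produces the join $\J\vee\mathbf{G}$, which is strictly smaller than the variety of languages associated with the semidirect product $\J\ast\mathbf{G}$. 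What is actually required is the wreath product principle, realized concretely by a cascade in which the unitary components of a single \lat{} simulate a group transducer whose output letters drive a $\J$-recognizer implemented by the measurements; that wiring is the entire content of this direction, and it additionally rests on the deep theorem that $\bg=\J\ast\mathbf{G}$ (equivalently $\mathbf{PG}$), which is not a routine ``known identity.''

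For the converse you correctly locate the difficulty and then leave it unresolved, which is the decisive gap: a proposal whose hard half is an acknowledged open step is not a proof. The intended contradiction is also misaimed. Pumping two distinct $R$- or $L$-equivalent idempotents $e=\varphi(u)$, $f=\varphi(v)$ does not push $p_A$ into the forbidden band around $\frac{1}{2}$; rather, one shows that the superoperators of suitable powers of $u$ and $v$ converge to the same completely positive map on the reachable part of the state space, so that for every context $x,y$ the quantities $p_A(xu^{n!}y)$ and $p_A(xv^{n!}y)$ become arbitrarily close \emph{to each other}, and this contradicts $\delta$-isolation only because $e\neq f$ supplies a context separating membership in $L$. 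Making that convergence precise --- the spectral analysis of trace-preserving contractions, the stabilization of the eigenvalue-one subspace, and the asymmetric treatments the $R$- and $L$-conditions require because the unitary acts before the measurement --- is the substance of the theorem, and it is exactly what the proposal defers.
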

\subsection{Outline of the main result}
We formalize our main result in Theorem \ref{th:MAIN}, in this section we also outline the core ideas of its proof.
\begin{theorem}[Algebraic Characterization of $\lmo(\Sigma)$ languages]\label{th:MAIN}
Let $L\in\Sigma^*$ be a formal language. Then the following four propositions are equivalent.
\begin{enumerate}
\renewcommand\labelenumi{\theenumi}
\renewcommand{\theenumi}{(\roman{enumi})}
\item $L$ is recognized with isolated cut-point by some \mon.
\item the syntactic monoid of $L$ is a $J$-trivial finite monoid and literally satisfies the idempotency pseudoidentity $x^2=x$. 
\item $L$ is a literal idempotent and piecewise testable language.
\item $L$ lies in the boolean closure of languages of the following form, for any $k\geq 0$, 
\[\Sigma^* a_1 \Sigma^* a_2 \Sigma^*\cdots \Sigma^* a_k \Sigma^*, \text{ for } a_1, a_2, \ldots, a_k\in\Sigma \text{ and } 
 a_i \neq a_{i+1} \text{ for every } 1 \leq i < k. \]
\end{enumerate}
\end{theorem}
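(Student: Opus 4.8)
The plan is to prove Theorem~\ref{th:MAIN} by establishing the equivalence of the four statements, exploiting the fact that equivalences (ii)$\Leftrightarrow$(iii)$\Leftrightarrow$(iv) are already furnished for free by Theorem~\ref{thm:klimapolak} of Kl\'ima and Pol\'ak: statement (iv) is literally the definition of $L\in\lipt$, statement (iii) is the characterization $L\in\pt\cap\id(\Sigma)$, and statement (ii) is the algebraic condition $L\in V_{\Sigma}(\overline{\J})$. Thus the entire mathematical burden reduces to linking the \mon-recognition condition (i) with any one of these three equivalent conditions, and I would choose to anchor it to (ii), the syntactic-monoid characterization, since that is the form most amenable to the algebraic machinery assembled in the preliminaries.

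First I would prove the implication (i)$\Rightarrow$(ii). Suppose $L$ is recognized with isolated cut-point by some \mon\ $A$. The key structural observation is that a \mon\ is a degenerate \lat: it is precisely a Latvian automaton in which every unitary evolution operator $U_c$ equals the identity. Consequently, recognition with an isolated cut-point (equivalently, $\delta$-isolated $\tfrac12$ cut-point, i.e. bounded two-sided error) by a \mon\ entails recognition with bounded two-sided error by a \lat. By Theorem~\ref{thm:ambainis} of Ambainis \emph{et al.}, the syntactic monoid $M(L)$ is therefore a block-group, i.e. $M(L)\in\bg$. This already forces that every $R$-class and every $L$-class contains at most one idempotent. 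The extra structure I must extract from the \emph{measure-only} (no evolution) restriction is twofold: $J$-triviality and literal idempotency. Literal idempotency is the more transparent of the two and I expect it to follow directly from the dynamics: because the density-matrix recursion $\sigma(cy)=\sum_j P_j(c)\,\sigma(y)\,P_j(c)$ applies the projectors of $O_c$, and projectors are idempotent ($P_j(c)^2=P_j(c)$), reading the letter $c$ twice in immediate succession produces the same state-collapse as reading it once, whence $\sigma(xc^2y)=\sigma(xcy)$ for all contexts; this yields $p_A(xc^2y)=p_A(xcy)$, so $xc^2y\in L\Leftrightarrow xcy\in L$, which is exactly the literal-idempotency relation $\varphi(c)^2=\varphi(c)$ in $M(L)$.

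The genuinely delicate part is upgrading block-group plus literal idempotency to full $J$-triviality, and I expect this to be the main obstacle. The strategy here is algebraic: a literally idempotent block-group monoid should be forced to be $\R$-trivial, and indeed $J$-trivial, because the absence of evolution operators collapses the noncommutative freedom that a general \lat\ enjoys. Concretely, I would argue that the idempotency of the generators together with the one-idempotent-per-$L$-class and per-$R$-class conditions of $\bg$ rules out nontrivial $J$-classes: any two $J$-equivalent elements, written as images of words, can be connected through idempotent generators, and the block-group constraint then forces them to coincide. This is where I anticipate needing careful finite-semigroup reasoning, possibly invoking that within $\bg$ the literal-idempotency pseudoidentity $x^2=x$ restricted to generators, combined with the structural ideals defining $J$, pins down each $\mathcal{J}$-class to a single point. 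The cleanest route may be to show directly that $M(L)$ satisfies the defining pseudoidentities of $\J$ by exhibiting that $L$ is piecewise testable, invoking Simon's Theorem~1 as the bridge.

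For the reverse implication (ii)$\Rightarrow$(i), I would construct an explicit \mon\ recognizing a given $L\in V_{\Sigma}(\overline{\J})$. Since by Theorem~\ref{thm:klimapolak} such an $L$ lies in the boolean closure of the literally idempotent piecewise-testable generators $\Sigma^*a_1\Sigma^*\cdots\Sigma^*a_k\Sigma^*$ with $a_i\neq a_{i+1}$, and since the family of \mon-recognizable languages was shown in~\cite{BMP10} to be a boolean algebra, it suffices to recognize each such elementary language by a \mon\ and then close under boolean operations. I would design, for each elementary generator, a \mon\ whose Hermitian observables $O_c$ have projectors that track, in an idempotent fashion, the progress of matching the subword pattern $a_1\cdots a_k$; reading a letter either advances the pattern-matching or leaves it fixed, exactly mirroring the idempotent and order-only sensitivity of these languages. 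The isolation of the cut-point follows because the acceptance probability takes only finitely many values bounded away from $\tfrac12$. Assembling these pieces closes the cycle and establishes all four equivalences.
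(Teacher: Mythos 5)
Your overall architecture matches the paper's: (ii)$\Leftrightarrow$(iii)$\Leftrightarrow$(iv) is delegated to Theorem~\ref{thm:klimapolak}; the forward direction embeds \mons\ into \lats\ and invokes Theorem~\ref{thm:ambainis} to get the block-group property, with literal idempotency extracted from idempotency of the projectors; the converse builds explicit \mons\ for the generators $L[a_1,\ldots,a_k]$ and closes under the boolean operations of Proposition~\ref{prop:linearRep}. So the skeleton is right.

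The genuine gap sits exactly where you flag ``the genuinely delicate part'': the passage from block-group plus literal idempotency to $J$-triviality. You assert that an idempotent-generated block-group ``should be forced to be'' $R$-trivial and hence $J$-trivial, but you do not prove it, and your fallback --- show $M(L)\in\J$ ``by exhibiting that $L$ is piecewise testable, invoking Simon's Theorem as the bridge'' --- is circular, since by Simon's theorem exhibiting piecewise testability \emph{is} exhibiting $J$-triviality. The paper closes this gap with an ingredient you never invoke: the \emph{finite variation} property of languages recognized by \mons\ with isolated cut-point (Proposition~\ref{prop:linearRep}(vi), a quantitative consequence of the linear representation with projectors: each state change of the minimal automaton costs at least $\delta^2/m$ in squared norm). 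Proposition~\ref{prop:FVRtriviale} shows finite variation is equivalent to $R$-triviality of the syntactic monoid, and Proposition~\ref{prop:rjtrivial} shows that $R$-trivial block-groups are $J$-trivial. Your proposed purely algebraic shortcut (idempotent-generated block-groups are $J$-trivial) is in fact recoverable from the $\mathbf{BG}=\mathbf{PG}$ circle of results, but it is a nontrivial theorem that you neither prove nor cite, so as written the crucial implication is missing. A smaller inaccuracy in the converse direction: the acceptance probability of the constructed automata does not ``take only finitely many values bounded away from $\tfrac12$''; the paper's $A[a_1,\ldots,a_k]$ accepts words of $L[a_1,\ldots,a_k]$ with probability at least $2^{-2k}$ and rejects all other words with probability exactly $0$, the isolation resting on the quantitative Lemma~\ref{lemma:ampiezzapositivalimitata}, with the cut-point moved to $\tfrac12$ afterwards via Proposition~\ref{prop:taglioImmateriale}.
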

In the subsequent sections, we shall prove Theorem \ref{th:MAIN} by carefully analyzing both the behavior of \mons\, recognizing languages over $\Sigma$ and the structure of their syntactic monoids. The outline of this proof is given below:
\begin{enumerate}
\renewcommand\labelenumi{\theenumi}
\renewcommand{\theenumi}{(\roman{enumi})}
\item we prove that \mons\, are a sub-family of \lats\,, which implies that if a language is recognized by some \mon\, then its syntactic monoid is a block-group \cite{Aal06} .
\item from the fact that reading twice the same symbol letter does not alter the system, we use some properties of syntactic monoids to conclude that if a language is recognized by some \mon\,, then it is a literal idempotent piecewise testable language.
\item we explicitly exhibit a family of \mons\, recognizing literal idempotent piecewise testable languages, thus proving that all such languages are indeed recognized by some \mon\,.
\end{enumerate}

\section{Preliminary analysis of \mons\, over $\Sigma$}
In this section we begin to analyze the model of \mons\, over finite non-empty set alphabets.
As a first instance of this, the following proposition expresses the probabilistic behavior of a \mon\, as a sum of square amplitudes.
\begin{proposition}\label{prop:sumofsquares}
Let $A := \langle \Sigma\cup \{\#\}, (O_c)_{c\in\Sigma\cup \{\#\}}, \pi_0, F \rangle$ 
be a \mon\, over $\Sigma$ such that, for every $c \in\Sigma$, $O_c$ admits the following spectral decomposition
$O_{c} = \sum_{j=1}^{k(c)} \lambda_{j} P_{j}(c)$.
Then, for every $w\in\Sigma^*$ such that $w:=w_1\cdots w_n$ and such that $k(w_i)$ is 
the cardinality of the spectrum of $O_{w_i}$ for each $1\leq i\leq n$, 
the probabilistic behavior $p_A(w)$ of $w$ w.r.t. $A$ is given by the following equation:
\[
p_A(w) = \sum_{r_j \in F}\sum_{j_1, \ldots, j_n} \big|\big| \pi_0 \cdot\Big(\prod_{i=1}^{n}P_{j_i}(w_i)\Big)\cdot P_{r_j}(\#)\big|\big|_2^2
\]
provided that, for every $1 \leq i \leq n$, $j_i$ goes from $1$ to $k(w_i)$ and provided that $r_j$ goes among 
the projectors of $O_{\#}$ that are in $F$ only.
\end{proposition}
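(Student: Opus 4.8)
The plan is to split the proof into two independent ingredients: a non-recursive closed form for the density matrix $\sigma(w)$, obtained by unwinding the recursive definition by induction on $|w|$, and an elementary linear-algebraic identity that turns the trace of a projector-sandwiched \emph{rank-one} operator into a squared Euclidean norm.

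First I would obtain the closed form of $\sigma(w)$. Writing $\rho_0 := \pi_0^{\dag}\pi_0$ for the initial density matrix and, for each index tuple $\vec{\jmath}=(j_1,\dots,j_n)$ with $1\le j_i\le k(w_i)$, abbreviating
\[
B_{\vec{\jmath}} := P_{j_1}(w_1)\,P_{j_2}(w_2)\cdots P_{j_n}(w_n),
\]
I claim that $\sigma(w)=\sum_{\vec{\jmath}} B_{\vec{\jmath}}\,\rho_0\,B_{\vec{\jmath}}^{\dag}$. This follows by induction on $n=|w|$: the base case $w=\epsilon$ is the definition $\sigma(\epsilon)=\rho_0$, while for $w=cy$ the recursive clause $\sigma(cy)=\sum_{j}P_j(c)\,\sigma(y)\,P_j(c)$ combined with the inductive hypothesis for $y$ yields the stated product, with the projector of the first letter $c$ sitting outermost. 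Since each $P_j(c)$ is a projector, hence Hermitian, the adjoint $B_{\vec{\jmath}}^{\dag}$ is simply the same projectors multiplied in reverse order.

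Next I would feed this into the definition of the acceptance probability. By linearity of the trace,
\[
p_A(w)=\sum_{r_j\in F}\text{Tr}\big(P_{r_j}(\#)\,\sigma(w)\,P_{r_j}(\#)\big)=\sum_{r_j\in F}\sum_{\vec{\jmath}}\text{Tr}\big(P_{r_j}(\#)\,B_{\vec{\jmath}}\,\pi_0^{\dag}\pi_0\,B_{\vec{\jmath}}^{\dag}\,P_{r_j}(\#)\big).
\]
The decisive observation is that $\rho_0=\pi_0^{\dag}\pi_0$ is rank one, so each summand factors through a single vector. Setting $v:=\pi_0\,B_{\vec{\jmath}}^{\dag}\,P_{r_j}(\#)$ and using that the projectors are Hermitian, one gets $v^{\dag}=P_{r_j}(\#)\,B_{\vec{\jmath}}\,\pi_0^{\dag}$, so that the sandwiched operator equals $v^{\dag}v$; the cyclic invariance of the trace then gives $\text{Tr}(v^{\dag}v)=\text{Tr}(vv^{\dag})=\|v\|_2^2$, and idempotency of $P_{r_j}(\#)$ is exactly what collapses its two flanking copies into the single occurrence inside $v$. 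Reading the product in $v$ out letter by letter and summing over $\vec{\jmath}$ and over $r_j\in F$ then reproduces the asserted identity.

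I expect the main obstacle to be bookkeeping rather than conceptual depth: one must track the order in which the projectors accumulate through the inductive unwinding and through the adjoint step, so that, after using Hermiticity, the factor inside the norm is presented in the left-to-right order $\pi_0\cdot\prod_{i=1}^{n}P_{j_i}(w_i)\cdot P_{r_j}(\#)$ displayed in the statement. Once the closed form for $\sigma(w)$ and the rank-one trace identity $\text{Tr}\big(P\,\pi_0^{\dag}\pi_0\,P\big)=\|\pi_0 P\|_2^2$ for a projector $P$ are established, the remainder is a routine substitution.
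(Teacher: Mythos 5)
The paper itself states Proposition~\ref{prop:sumofsquares} without proof, and your two ingredients --- a closed form for $\sigma(w)$ obtained by unwinding the recursion, and the rank-one trace identity $\mathrm{Tr}(v^{\dag}v)=\mathrm{Tr}(vv^{\dag})=\|v\|_2^2$ --- are exactly the right ones; the trace step is correct as you present it (Hermiticity of the projectors alone suffices there; idempotency of $P_{r_j}(\#)$ is not actually needed once each flanking copy is absorbed into $v$ and $v^{\dag}$).

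There is, however, one genuine gap, and it sits precisely at the point you wave off as ``bookkeeping.'' Unwinding the recursion as printed ($w=cy$ peels off the \emph{first} letter and places its projectors outermost) gives, as you say, $\sigma(w)=\sum_{\vec{\jmath}}B_{\vec{\jmath}}\,\rho_0\,B_{\vec{\jmath}}^{\dag}$ with $B_{\vec{\jmath}}=P_{j_1}(w_1)\cdots P_{j_n}(w_n)$. Your vector is then $v=\pi_0 B_{\vec{\jmath}}^{\dag}P_{r_j}(\#)=\pi_0\,P_{j_n}(w_n)\cdots P_{j_1}(w_1)\,P_{r_j}(\#)$: the projectors occur in the order of the \emph{reversed} word, not in the left-to-right order $\pi_0\cdot\prod_{i=1}^{n}P_{j_i}(w_i)\cdot P_{r_j}(\#)$ of the statement, and no relabelling of the summation indices can repair this, because the letters $w_i$ (not just the indices $j_i$) are attached to fixed positions. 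The two orders give genuinely different numbers in general: your sum computes $p_A$ of the reversed word, and for instance the automaton of Theorem~\ref{th:PTtoMON} recognizing $\Sigma^*a\Sigma^*b\Sigma^*$ has $p_A(ab)\neq p_A(ba)$. The repair is to use the closed form $\sigma(w)=\sum_{\vec{\jmath}}C_{\vec{\jmath}}^{\dag}\,\rho_0\,C_{\vec{\jmath}}$ with $C_{\vec{\jmath}}=\prod_{i=1}^{n}P_{j_i}(w_i)$, i.e.\ with the \emph{last} letter's projectors outermost; this is what the intended semantics demands (measure $O_{w_1}$ first, then $O_{w_2}$, and so on, each measurement conjugating the current density matrix) and what the rest of the paper uses, and it corresponds to reading the recursion as peeling off the last letter rather than the first. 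With that convention $v=\pi_0\,C_{\vec{\jmath}}\,P_{r_j}(\#)$ already carries the projectors in the asserted order and the remainder of your argument goes through verbatim. You should make this orientation explicit rather than assert that reading out $v$ ``reproduces the asserted identity.''
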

Then we observe that \mons\, over $\Sigma$ satisfy similar properties already proved in \cite{BMP10} for \mons\, over partially commutative monoids with idempotent generators.
\begin{proposition}[Bertoni, et al. \cite{BMP10}]\label{prop:linearRep}
We let $A$ to be a \mon\, over $\Sigma$ and we let the formal power series $p_A(w):\Sigma^*\rightarrow [0,1]$ to represent the probability that $A$ accepts $w$, 
thus $p_A$ is the probabilistic event induced by $A$. 
We define $\pmo(\Sigma)$ as the class of probabilistic events induced by \mons\, over $\Sigma$. Then the following propositions holds.
\begin{enumerate}
\renewcommand\labelenumi{\theenumi}
\renewcommand{\theenumi}{(\roman{enumi})}
\item The formal power series generated by a \mon\, $A$ on $\Sigma$ with $m$ states admits a linear representation $\langle \xi, (P(c))_{c\in\Sigma}, \eta \rangle$ where 
$\xi\in\mathbb{C}^{1\times m}$ and $||\xi|| = 1$, $P(c)\in\mathbb{C}^{m\times m}$ is a projector for all $c\in\Sigma$, and $\eta\in\mathbb{C}^{1\times m}$ satisfies $||\eta||\leq\sqrt{m}$.
\item \label{prop:closureHadamard} The class $\pmo(\Sigma)$ is closed under the operations of Hadamard product and $f$-complement;  
where for any formal series $\varphi, \psi:\Sigma^*\rightarrow \mathbb{C}$ and for any $w\in\Sigma^*$ 
the Hadamard product is defined as $(\varphi \odot \psi)(w) := \varphi(w)\cdot \psi(w)$ and the f-complement as $\varphi^{\text{f-c}}(w):=1-\varphi(w)$.
\item Let $L\in\lmo(\Sigma)$ and $\chi_L$ be its characteristic function. For any $\epsilon > 0$, there exists $\phi\in\pmo(\Sigma)$ such that $|\phi(w)-\chi_L(w)|<\epsilon$, for all $w\in\Sigma^*$.
\item Given a formal series $\phi:\Sigma^*\rightarrow [0,1]$, let $\langle \xi, (P(c))_{c\in\Sigma}, \eta \rangle$ be a linear representation of $\phi$, where $||\xi|| = 1$ and $P(c)$ is a projector for $c\in\Sigma$, $P(\epsilon)=I$. Suppose that $|\phi(w)-\lambda|\geq \delta > 0$ for all $w\in\Sigma^*$, and let $L:=\{w\in\Sigma^* | \phi(w)>\lambda\}$. Then $L$ is a regular language of $\Sigma^*$. Moreover, there exists a finite-state automaton $\langle \Sigma, Q, (\underline{\delta}_c)_{c\in\Sigma}, q_0, F \rangle$ recognizing $L$ such that, for any $w,u\in\Sigma^*$, the following holds: 
$\underline{\delta}_w(q_0)\neq \underline{\delta}_u(q_0)$ implies $||\xi P(w) - \xi P(u)||\geq \frac{\delta}{|| \eta ||}$.
\item Let $\Sigma$ be a finite non-empty set. The class $\lmo(\Sigma)$ is closed under intersection, complement and union.
\item $\lmo(\Sigma)$ is a boolean algebra of regular languages in $\Sigma^*$ with finite variation. In particular, if $L$ is a 
language recognized by a \mon\, over $\Sigma$ with $m$ states and isolation $\delta$, then 
$\sup_{x\in\Sigma^*}\text{var}_L(x)\leq \frac{m}{\delta^2}$.
\end{enumerate}
\end{proposition}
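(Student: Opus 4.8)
The six assertions interlock, and the plan is to establish them in the stated order, treating (i) and (ii) as the two engines that the rest consume: (i) converts the quadratic density-matrix dynamics into a genuine linear representation, while (ii) supplies the algebraic closure needed for (iii), (v) and (vi). For (i) I would start from the closed form of Proposition~\ref{prop:sumofsquares} and \emph{linearize by vectorization}. The one-step map $\rho\mapsto\sum_j P_j(c)\,\rho\,P_j(c)$ on $m\times m$ matrices is linear, so after identifying each density matrix with a vector it is represented by $P(c):=\sum_j \overline{P_j(c)}\otimes P_j(c)$; since the $P_j(c)$ are the pairwise orthogonal projectors of the single observable $O_c$, a short computation gives $P(c)^2=P(c)=P(c)^{\dagger}$, so each $P(c)$ is again a projector. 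Taking $\xi$ to be the vectorization of $\pi_0^{\dagger}\pi_0$ (unit norm, as $\|\pi_0\|=1$) and $\eta$ that of the accepting projector $\sum_{r_j\in F}P_{r_j}(\#)$ (rank at most $m$, hence Frobenius norm at most $\sqrt m$), the trace-acceptance formula collapses to $p_A(w)=\xi\,P(w_1)\cdots P(w_n)\,\eta$, which is the asserted representation. Part (ii) is then easy: the f-complement is obtained by replacing the accepting spectrum $F$ with $V(O_{\#})\setminus F$ (the projectors of $O_{\#}$ sum to the identity, so the new event is $1-p_A$), and the Hadamard product by the tensor product of two automata, giving each letter $c$ the observable with projectors $\{P_i^{(1)}(c)\otimes P_j^{(2)}(c)\}_{i,j}$; an induction yields $\sigma(w)=\sigma^{(1)}(w)\otimes\sigma^{(2)}(w)$, and factoring the final trace gives $p_{A_1}(w)\,p_{A_2}(w)$.

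Part (iii) is where I expect the real difficulty, and the tool is \emph{probability amplification}. Reading $\odot$ as ``both of two independent copies accept'' and the derived operation $1-(1-\varphi)\odot(1-\psi)$ as ``at least one accepts'', the closure in (ii) lets me realize, on many independent copies of an isolated-cut-point automaton for $L$, any monotone boolean combination of their outcomes. Fix a rational threshold $\tau$ strictly inside the isolation gap $(\lambda-\delta,\lambda+\delta)$ and take the threshold-$\tau$ counting function of $n$ independent copies; this is a monotone function, hence expressible with $\odot$ and f-complement, and its value is a polynomial $B_{n,\tau}(p_A(w))$. A Chernoff estimate shows that on the reject cluster ($p_A(w)\le\lambda-\delta<\tau$) this value tends to $0$ and on the accept cluster ($p_A(w)\ge\lambda+\delta>\tau$) it tends to $1$, uniformly and exponentially in $n$; choosing $n$ large enough produces $\phi\in\pmo(\Sigma)$ with $|\phi(w)-\chi_L(w)|<\epsilon$ for every $w$. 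The delicate points are to keep the whole construction inside the meagre $\{\odot,\text{f-complement}\}$ algebra and to make the error uniform over all $w$.

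For (iv) I would track the reachable \emph{configuration covectors} $v(w):=\xi\,P(w)$, which remain in the unit ball since each projector is a contraction. The crucial estimate is that $\|v(w)-v(u)\|<\delta/\|\eta\|$ forces $|\phi(wz)-\phi(uz)|=|(v(w)-v(u))P(z)\eta|<\delta$ for every suffix $z$; by isolation this pins $wz$ and $uz$ to the same side of $\lambda$, so they are Myhill--Nerode equivalent. Contrapositively, configurations of words in distinct syntactic classes are at least $\delta/\|\eta\|$ apart, which is exactly the separation property claimed in (iv); a packing argument in the unit ball then bounds the number of classes, so $L$ is regular. The variation bound in (vi) comes from the same configurations: since right multiplication by an orthogonal projector satisfies $\|v-vP\|^2=\|v\|^2-\|vP\|^2$, every state change along a word spends at least $(\delta/\|\eta\|)^2$ of the squared norm, which starts at $\|\xi\|^2=1$ and stays nonnegative, so at most $\|\eta\|^2/\delta^2\le m/\delta^2$ changes can occur, giving $\sup_{x}\text{var}_L(x)\le m/\delta^2$.

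Part (v) then assembles the boolean algebra from the engines already built: complement is the f-complement (the event $1-p_A$ recognizes $\bar L$ with isolated cut-point $1-\lambda$), intersection follows by approximating $\chi_{L_1}$ and $\chi_{L_2}$ within a small $\epsilon$ through (iii), taking their Hadamard product, and thresholding at $\tfrac12$, and union follows by De~Morgan. Item (vi) is then the summary statement: (i) and (iv) give that every \mon\ recognizes a regular language and furnish the variation bound, while (v) gives closure under the boolean operations, so $\lmo(\Sigma)$ is a boolean algebra of regular languages of finite variation. Throughout, the principal obstacle is (iii): it is the only step that must manufacture sharp $0/1$ behaviour from the two closure operations available, and extracting a uniform approximation of $\chi_L$ from them is what forces the amplification argument.
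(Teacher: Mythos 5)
Your treatment of items (i), (ii), (iv), (v) and (vi) follows essentially the same route as the paper: (i) by vectorization of the density-matrix dynamics, with $P(c)=\sum_j P_j(c)\otimes P_j(c)^*$ checked to be Hermitian and idempotent; (ii) by swapping the accepting spectrum for its complement and by tensoring two automata; (iv) by the separation-plus-packing argument on the reachable covectors $\xi P(w)$, together with the Pythagorean identity $\|\xi P(w)\|^2=\|\xi P(wc)\|^2+\|\xi P(w)-\xi P(wc)\|^2$, which also yields the variation bound $m/\delta^2$ in (vi) exactly as in the paper; and (v) assembled from (ii) and (iii) with a $\tfrac12$ cut-point. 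Those parts are fine.

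Item (iii) is where your argument has a genuine gap. You propose to realize the threshold-counting function of $n$ independent copies of $A$ as a monotone combination built from $\odot$ and f-complement, and you assert that its value is the binomial tail $B_{n,\tau}(p_A(w))=\sum_{k\ge\tau n}\binom{n}{k}p_A(w)^k(1-p_A(w))^{n-k}$, to which you then apply a Chernoff bound. But $\odot$ and $\varphi\mapsto 1-\varphi$ act on probabilities, not on events: $\varphi\odot\psi$ is the acceptance probability of two \emph{independent} automata run jointly, so every leaf of any monotone formula you build this way is forced to be a fresh independent copy (read-once semantics), and in particular $\varphi\odot\varphi=\varphi^2\neq\varphi$. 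Since the threshold-$k$-of-$n$ function for $1<k<n$ has no read-once monotone formula (already majority-of-three fails), the polynomial your construction actually computes is not $B_{n,\tau}(p)$, and the Chernoff step does not apply to it. A repair inside the $\{\odot,\text{f-complement}\}$ algebra would require a Moore--Shannon/Valiant-style amplification gadget tuned to the cut-point $\lambda$, which is a different and more delicate argument that you do not supply. The paper sidesteps the issue entirely: it builds the single \mon\ $A_N$ whose observables are the tensor powers $O_c^{(1)}\otimes\cdots\otimes O_c^{(N)}$ and, crucially, chooses the accepting set $F_N\subseteq V(O_{\#}^{(1)}\otimes\cdots\otimes O_{\#}^{(N)})$ to consist of exactly those outcome tuples in which at least $\lambda N$ components are accepting. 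Because $F$ may be an arbitrary subset of the end-marker spectrum, the majority vote is encoded directly in the final measurement, $p_{A_N}(w)$ genuinely equals the binomial tail, and Hoeffding's inequality then gives the uniform $\epsilon$-approximation of $\chi_L$.
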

In the next proposition we observe further that any convex linear combination of \mons\, over $\Sigma$ induce a convex linear combination of probabilistic events.
\begin{proposition}
\label{prop:combinazionelineare}
Let $A := \langle \Sigma\cup \{\#\}, (O^A_c)_{c\in\Sigma}, \pi^A, F^A \rangle$
 and $B := \langle \Sigma\cup \{\#\}, (O^B_c)_{c\in\Sigma}, \pi^B, F^B \rangle$ be two \mons .
Let $\alpha, \beta\in\mathbb{R}$ be non-negative real numbers such that $\alpha+\beta=1$.
We define the convex linear combination $A\oplus B$ of $A$ and $B$ w.r.t. $\alpha$ and $\beta$ as follows, 
\[
A\oplus B := \left\langle \Sigma\cup\{\#\}, (O^A_c)_{c\in \Sigma\cup\{\#\}}\oplus (O^B_c)_{c\in \Sigma\cup\{\#\}}, 
( \sqrt{\alpha}\cdot\pi^A)\oplus (\sqrt{\beta}\cdot\pi^B), F^A\cup F^B \right\rangle
\] Then $p_{A\oplus B}(x) = \alpha\cdot p_A(x)  + \beta\cdot p_B(x)$ holds for every $x\in\Sigma^*$.
\end{proposition}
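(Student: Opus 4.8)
The plan is to track the density matrix of $A\oplus B$ through the computation and to exploit the block-diagonal structure forced by the direct sum. First I would record how projectors behave under $\oplus$: if $O_c^A=\sum_\lambda \lambda\,P_\lambda^A(c)$ and $O_c^B=\sum_\lambda \lambda\,P_\lambda^B(c)$ are the spectral decompositions, then the $\lambda$-eigenspace of $O_c^A\oplus O_c^B$ is the orthogonal sum of the two individual $\lambda$-eigenspaces, so the corresponding projector is the block-diagonal matrix $P_\lambda^A(c)\oplus P_\lambda^B(c)$, with a zero block whenever $\lambda$ fails to be an eigenvalue of one of the two observables. Up to a harmless relabeling of eigenvalues, which affects neither the projectors nor the accepting behaviour, I would assume the spectra $V(O_\#^A)$ and $V(O_\#^B)$ to be disjoint; this will keep the accepting sum over $F^A\cup F^B$ from mixing the two components.

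Next I would compute the initial density matrix. Writing $\pi_0=(\sqrt{\alpha}\,\pi^A)\oplus(\sqrt{\beta}\,\pi^B)=(\sqrt{\alpha}\,\pi^A,\ \sqrt{\beta}\,\pi^B)$ as a single row vector, the outer product $\pi_0^{\dag}\pi_0$ is a $2\times 2$ block matrix whose diagonal blocks are $\alpha\,(\pi^A)^{\dag}\pi^A=\alpha\,\sigma_A(\epsilon)$ and $\beta\,(\pi^B)^{\dag}\pi^B=\beta\,\sigma_B(\epsilon)$, while the off-diagonal blocks carry a factor $\sqrt{\alpha\beta}$ but will play no role below. This is also where I would check that $\|\pi_0\|^2=\alpha\|\pi^A\|^2+\beta\|\pi^B\|^2=1$, so that $A\oplus B$ is a legitimate \mon.

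The core of the argument is an induction on $|w|$ showing that $\sigma_{A\oplus B}(w)$ is a block matrix whose top-left block equals $\alpha\,\sigma_A(w)$ and whose bottom-right block equals $\beta\,\sigma_B(w)$. The base case is the computation just described. For the inductive step, conjugating a block matrix by the block-diagonal projector $P_\lambda^A(c)\oplus P_\lambda^B(c)$ preserves the block structure and acts on each diagonal block separately; summing over all eigenvalues $\lambda$ then restricts, on the top-left block, to a sum over the projectors of $O_c^A$ only, since the padding zeros contribute nothing, and this is exactly the recursion defining $\sigma_A(cy)$, with the symmetric statement for the bottom-right block. Hence the diagonal blocks reproduce the two individual evolutions scaled by $\alpha$ and $\beta$, while the off-diagonal blocks evolve in some correlated but irrelevant way.

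Finally I would read off the acceptance probability. Because $\mathrm{Tr}$ sees only diagonal entries, the off-diagonal blocks of $\sigma_{A\oplus B}(w)$ drop out, and conjugating by the accepting projectors $P_r^A(\#)\oplus P_r^B(\#)$ and summing over $r\in F^A\cup F^B$ splits the trace into two independent contributions. Using disjointness of spectra, the first contribution collapses to a sum over $r\in F^A$ and gives $\alpha\,\mathrm{Tr}\bigl(\sum_{r\in F^A}P_r^A(\#)\sigma_A(w)P_r^A(\#)\bigr)=\alpha\,p_A(w)$, and likewise the second gives $\beta\,p_B(w)$, which proves the claim. The one point requiring genuine care, and the step I expect to be the main obstacle, is precisely the handling of $F^A\cup F^B$: without the disjoint-spectra normalization, an eigenvalue lying in $F^B$ but also occurring in $V(O_\#^A)$ would leak an unwanted $\alpha$-scaled term into the $A$-component, so the relabeling reduction must be justified before the trace computation goes through cleanly.
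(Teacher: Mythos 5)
The paper states Proposition \ref{prop:combinazionelineare} without giving a proof, so there is no argument of record to compare yours against; judged on its own merits, your proof is correct and is the natural one. The block-diagonal bookkeeping is right: the $\lambda$-projector of $O^A_c\oplus O^B_c$ is $P^A_\lambda(c)\oplus P^B_\lambda(c)$ (with a zero block when $\lambda$ misses one of the two spectra), conjugation by such projectors preserves the block structure and acts on the diagonal blocks separately, and the induction showing that the diagonal blocks of $\sigma_{A\oplus B}(w)$ are $\alpha\,\sigma_A(w)$ and $\beta\,\sigma_B(w)$ goes through, the padding zeros being harmless for every letter of $\Sigma$. You are also right to single out the endmarker step as the one genuine subtlety: with accepting set $F^A\cup F^B$ and no hypothesis on the spectra, the identity as literally stated can fail --- an eigenvalue $r\in F^A\setminus F^B$ that also lies in $V(O^B_\#)$ has endmarker projector $P^A_r(\#)\oplus P^B_r(\#)$ with $P^B_r(\#)\neq 0$, and its contribution to the accepting trace leaks a $\beta$-weighted term from the $B$-block that is not part of $\beta\,p_B(w)$ (and symmetrically with the roles of $A$ and $B$ exchanged). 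Your disjoint-spectra relabeling, which changes neither the projectors nor the languages recognized by $A$ and $B$, repairs this and is presumably the normalization the construction intends implicitly; it deserves to be stated explicitly as part of the definition of $A\oplus B$.
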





By using proposition \ref{prop:combinazionelineare} it is possible to prove the following Proposition.
\begin{proposition} \label{prop:taglioImmateriale}
Let $A$ be a \mon\, over $\Sigma$ and let 
 $L_A$ be the language recognized by $A$ with cut-point  
$\lambda_A\neq 1/2$ isolated by $\delta > 0$. Then there exists a \mon\, $A'$ over $\Sigma$ 
recognizing $L_A'=L_A$ with cut-point $\lambda_{A'}=1/2$ isolated by $\delta_{A'} = \frac{\delta}{2\cdot\text{max}\{\lambda, 1-\lambda\}}$.
\end{proposition}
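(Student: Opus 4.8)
The plan is to realize $A'$ as a convex linear combination, in the sense of Proposition~\ref{prop:combinazionelineare}, of the given automaton $A$ with a trivial constant \mon\ $B$. Recall that such a combination $A\oplus B$ with weights $\alpha,\beta\geq 0$ and $\alpha+\beta=1$ satisfies $p_{A\oplus B}(w)=\alpha\,p_A(w)+\beta\,p_B(w)$ for every $w$. If $B$ is chosen so that $p_B$ is constantly $0$ or constantly $1$, then $p_{A\oplus B}$ becomes an affine (hence, for $\alpha>0$, strictly monotone) transformation of $p_A$, and the whole task reduces to tuning $\alpha$ so that this affine map sends the old cut-point $\lambda$ exactly to $1/2$. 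Strict monotonicity will then force $L_{A'}=L_A$, while the contraction factor $\alpha$ will govern the new isolation.

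First I would exhibit the two constant automata. A one-dimensional \mon\ $B_1$ with initial state $\pi_0=(1)$, every observable $O_c=(1)$ (single eigenvalue $1$, projector the identity), and $F=\{1\}$ leaves the density matrix unchanged and yields $p_{B_1}(w)=1$ for all $w$; taking instead $F=\emptyset$ produces a \mon\ $B_0$ with $p_{B_0}(w)=0$ for all $w$. Both are legitimate \mons\ over $\Sigma$. I would then split on the position of $\lambda$ relative to $1/2$. If $\lambda<1/2$, set $A':=A\oplus B_1$ with $\alpha=\frac{1}{2(1-\lambda)}$ and $\beta=1-\alpha$; since $0<\lambda<1/2$ one checks $\alpha\in(0,1)$, so this is an admissible convex combination with $p_{A'}(w)=\alpha\,p_A(w)+(1-\alpha)$, and $\alpha$ is precisely the value for which $\alpha\lambda+(1-\alpha)=1/2$. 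If instead $\lambda>1/2$, set $A':=A\oplus B_0$ with $\alpha=\frac{1}{2\lambda}\in(0,1)$, giving $p_{A'}(w)=\alpha\,p_A(w)$, where again $\alpha$ is chosen so that $\alpha\lambda=1/2$.

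In both cases $p_{A'}$ is the strictly increasing affine image of $p_A$ sending $\lambda\mapsto 1/2$, so $p_A(w)>\lambda\iff p_{A'}(w)>1/2$, whence $L_{A'}=L_A$ recognized with cut-point $1/2$. For the isolation, the tuning of $\alpha$ makes the additive constant cancel, so that $p_{A'}(w)-\tfrac12=\alpha\big(p_A(w)-\lambda\big)$ and therefore $\big|p_{A'}(w)-\tfrac12\big|=\alpha\,\big|p_A(w)-\lambda\big|\geq\alpha\delta$. Since $\alpha=\frac{1}{2(1-\lambda)}$ when $\lambda<1/2$ and $\alpha=\frac{1}{2\lambda}$ when $\lambda>1/2$, while $\max\{\lambda,1-\lambda\}$ equals $1-\lambda$ in the first case and $\lambda$ in the second, in either case $\delta_{A'}=\alpha\delta=\frac{\delta}{2\max\{\lambda,1-\lambda\}}$, exactly as claimed.

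The construction is essentially routine once Proposition~\ref{prop:combinazionelineare} is in hand; the only point demanding care—and thus the sole genuine obstacle—is the case distinction. One must shift \emph{upwards}, by combining with the accepting constant automaton $B_1$, when $\lambda<1/2$, and shift \emph{downwards}, by combining with the rejecting $B_0$, when $\lambda>1/2$, verifying in each case both that the resulting weight $\alpha$ is a legitimate convex coefficient in $(0,1)$ and that the additive constant cancels exactly. It is this exact cancellation that produces the clean bound $\frac{\delta}{2\max\{\lambda,1-\lambda\}}$ rather than a weaker estimate.
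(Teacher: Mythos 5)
Your proposal is correct and follows exactly the route the paper intends: it states that the proposition is obtained "by using Proposition \ref{prop:combinazionelineare}", and your construction — a convex combination of $A$ with a constant accepting or rejecting one-dimensional \mon, with the weight $\alpha=\frac{1}{2\max\{\lambda,1-\lambda\}}$ tuned so that the affine map sends $\lambda$ to $1/2$ — is precisely that argument, and it reproduces the stated isolation bound exactly.
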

To conclude this section, we link the family on \mons\, with the Latvian model introduced by Ambainis, et.al \cite{Aal06}.
We will show that the \mon\, model is in fact a subclass of the Latvian one. 
The following proposition shows that every \mon\, over $\Sigma$ is in fact a \lat\, with identical unitary evolution.
\begin{proposition}\label{prop:lmoambainis}
The class of \mon\ over $\Sigma$ is a sub-class of \lat. In particular, let $A$ be a \mon\ on $\Sigma$ and let 
$L_A$ be a language recognized by $A$ with cut-point $\lambda$ isolated by $\delta$. 
Then there exists a \lat\, $A'$ recognizing $L_{A'}=L_{A}$ with bounded (two-sided) error 
$p = \frac{1}{2} + \frac{\delta}{2\text{max}\{\lambda, 1-\lambda\}}$; 
that is to say with cut-point $\lambda' = \frac{1}{2}$ isolated by 
$\delta' = \frac{\delta}{2\cdot \text{max}\{\lambda, 1-\lambda\}}$.
\end{proposition}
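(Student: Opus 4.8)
The plan is to exhibit, for a given \mon\ $A$, a \lat\ $A'$ that performs exactly the same sequence of projective measurements but interleaves them with the trivial (identity) unitary evolution, and then to reconcile the mismatch that \lats\ are defined to recognize with cut-point $1/2$, whereas $A$ may use an arbitrary cut-point $\lambda$. So the argument splits into two independent moves: fix the cut-point, then drop in identity unitaries.

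First I would dispose of the cut-point discrepancy by invoking Proposition \ref{prop:taglioImmateriale}. If $\lambda\neq 1/2$, that proposition yields a \mon\ $\tilde A$ over $\Sigma$ recognizing the same language $L_A$ with cut-point $1/2$ isolated by $\tilde\delta=\frac{\delta}{2\max\{\lambda,1-\lambda\}}$; if $\lambda=1/2$ I simply take $\tilde A=A$ and note that $\max\{\lambda,1-\lambda\}=1/2$, so $\tilde\delta=\delta$ still agrees with the stated formula. In either case we are reduced to a \mon\ recognizing $L_A$ with cut-point exactly $1/2$ and isolation $\tilde\delta$.

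Next, from $\tilde A=\langle \Sigma\cup\{\#\},(O_c)_{c},\pi_0,F\rangle$ I would build the \lat\ $A':=\langle \Sigma\cup\{\#\},\{I\}_{c\in\Sigma\cup\{\#\}},\{O_c\}_{c\in\Sigma\cup\{\#\}},\pi_0,F\rangle$, taking every unitary operator $U_c$ to be the identity matrix $I\in\C^{m\times m}$ (which is trivially unitary, since $I^{-1}=I^{\dag}=I$) and leaving all observables, the initial state, and the accepting spectrum unchanged. The key step is to verify that $p_{A'}(w)=p_{\tilde A}(w)$ for every $w\in\Sigma^*$. In the computation of $A'$, reading a letter $c$ applies $U_c=I$ and then measures $O_c$; since $I$ acts as the identity on the current state, this is the same as merely measuring $O_c$, which is exactly what $\tilde A$ does at that step, and both models measure the end-marker observable $O_{\#}$ last and accept according to $F\subseteq V(O_{\#})$. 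Formally, one unwinds both acceptance probabilities into the single sum of squared amplitudes provided by Proposition \ref{prop:sumofsquares}, in which the identity factors contribute nothing, so the two probabilities coincide term by term.

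Finally, since $p_{A'}=p_{\tilde A}$ and $\tilde A$ recognizes $L_A$ with cut-point $1/2$ isolated by $\tilde\delta$, the \lat\ $A'$ accepts every $w\in L_A$ with probability $p_{A'}(w)\geq \tfrac12+\tilde\delta$ and rejects every $w\notin L_A$ with probability $1-p_{A'}(w)\geq \tfrac12+\tilde\delta$; this is precisely bounded two-sided error acceptance with $p=\tfrac12+\tilde\delta=\tfrac12+\frac{\delta}{2\max\{\lambda,1-\lambda\}}$, equivalently cut-point $\lambda'=\tfrac12$ isolated by $\delta'=\tilde\delta$, as claimed. The only genuinely delicate point is the third step, namely checking that inserting identity unitaries leaves the acceptance probability unchanged; but because the \lat\ dynamics apply the unitary before each measurement and $I$ is a no-op, this reduces to matching the two operational definitions step by step, and the sum-of-squares formula of Proposition \ref{prop:sumofsquares} makes the equality transparent.
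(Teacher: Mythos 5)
Your proposal is correct and follows essentially the same route as the paper: reduce to cut-point $\frac{1}{2}$ via Proposition \ref{prop:taglioImmateriale}, then turn the resulting \mon\ into a \lat\ by inserting identity unitaries. You are in fact slightly more careful than the paper in two respects, namely handling the edge case $\lambda=\frac{1}{2}$ (where Proposition \ref{prop:taglioImmateriale} does not literally apply) and explicitly justifying via Proposition \ref{prop:sumofsquares} that the identity unitaries leave the acceptance probability unchanged.
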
\begin{proof}
Let $A$ be a \mon\, over $\Sigma$ recognizing $L_A$ with cut-point $\lambda$ isolated by $\delta > 0$. 
By Proposition \ref{prop:taglioImmateriale}, there exists a \mon\, over $\Sigma$, denoted by 
$A'=\langle \Sigma\cup\{\sharp\}, (O'_c)_{c\in\Sigma\cup\{\sharp\}}, \pi'_0, F' \rangle$, recognizing  
$L_{A'}=L_{A}$ with cut-point $\frac{1}{2}$ isolated by 
$\frac{\delta}{2\cdot\text{max}\{\lambda, 1-\lambda\}}$. 
Consider the Latvian automaton $B:=\langle \Sigma\cup\{\sharp\}, (U_c), O'_c, F' \rangle$ defined by taking 
$U_{c}=I$ for every $c\in\Sigma$, where $I$ is the identity matrix. Then $L_B=L_A'=L_A$ is recognized by $B$ with 
bounded (two-sided) error given by $p = \frac{1}{2} + \frac{\delta}{2\cdot\text{max}\{\lambda, 1-\lambda\}}$.
\end{proof}
\section{On syntactic monoids of $\lmo(\Sigma)$ languages}
In this section we study syntactic monoids of $\lmo(\Sigma)$ languages. Observe that by Theorem \ref{thm:ambainis} and by 
Proposition \ref{prop:lmoambainis}, if $L\in\lmo(\Sigma)$ then its syntactic monoid is a block-group. That is to say, by Proposition \ref{prop:linearRep}, that 
$\lmo(\Sigma)\subseteq V_{\Sigma}(\bg)$ as a sub-boolean algebra. \\\\In the next proposition we characterize the finite variation property in terms of properties satisfied by syntactic monoids of finite variation languages.
By directly proving closure properties, it is not difficult to show that the class of finite variation regular languages is a 
$*$-variety of Eilenberg. In fact the following proposition holds true.
\begin{proposition}\label{prop:FVRtriviale}
Let $L$ be a regular language. The following propositions are equivalent.
\begin{enumerate}
\renewcommand\labelenumi{\theenumi}
\renewcommand{\theenumi}{(\roman{enumi})}
\item $L$ has finite variation. 
\item Every strongly connected component of the minimum automaton recognizing $L$ has one vertex only.
\item There exists a total ordering on the set $Q$ of the states of the minimum automaton 
recognizing $L$ such that $qa\geq q$ for every $q\in Q$ and every $a\in\Sigma$.
\item The syntactic monoid $M$ of $L$ is an $R$-trivial monoid.
\end{enumerate}
\end{proposition}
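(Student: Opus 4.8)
The plan is to fix the minimal deterministic automaton $A_L = \langle \Sigma, Q, \delta, q_0, F\rangle$ of $L$ and to exploit throughout the standard fact that its transition monoid, namely the monoid of maps $q\mapsto qw$ induced by words $w\in\Sigma^*$ (writing $qw:=\delta(q,w)$), is isomorphic to the syntactic monoid $M$ of $L$ and acts \emph{faithfully} on $Q$ by minimality. On $Q$ I would consider the reachability preorder $q\preceq q'$ iff $q'=qw$ for some $w\in\Sigma^*$; it is always reflexive and transitive, and condition (ii) says exactly that $\preceq$ is antisymmetric, i.e. a genuine partial order. With this dictionary I would prove the triangle $(i)\Rightarrow(ii)\Rightarrow(iii)\Rightarrow(i)$ to settle the first three conditions, and then $(iii)\Rightarrow(iv)\Rightarrow(ii)$ to fold in the fourth.

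First, $(i)\Rightarrow(ii)$ by contraposition: if a strongly connected component has two distinct states $p\neq q$, choose (necessarily nonempty) words with $pu=q$, $qv=p$, and a word $x$ with $q_0x=p$ (all states are reachable in the minimal automaton). Reading $x(uv)^N$ returns to $p$ after each block $uv$ while passing through $q\neq p$ inside it, so the state changes at least once per block and $\text{var}_L(x(uv)^N)\ge N$ is unbounded; hence $L$ lacks finite variation. For $(ii)\Rightarrow(iii)$: antisymmetry makes $\preceq$ a partial order, which extends to a total order on the finite set $Q$; since $qa$ is reachable from $q$ we have $q\preceq qa$, so this total order satisfies $q\le qa$. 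For $(iii)\Rightarrow(i)$: along any $w=w_1\cdots w_n$ the states $q_0,q_0w_1,\ldots,q_0w$ form a non-decreasing sequence in the total order, and a genuine state-change is then a strict ascent; a non-decreasing sequence over a chain of size $|Q|$ has at most $|Q|-1$ strict ascents, so $\text{var}_L(w)\le |Q|-1$ uniformly.

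For $(iii)\Rightarrow(iv)$: from $q\le qa$ one gets $q\le qw$ for every $w$ by induction, so right multiplication by any element of $M$ is non-decreasing. If $\bar u\,R\,\bar v$, say $\bar u=\bar v\,\bar s$ and $\bar v=\bar u\,\bar t$, then for every state $q$ we get $qu=(qv)s\ge qv$ and $qv=(qu)t\ge qu$, forcing $qu=qv$ for all $q$; faithfulness gives $\bar u=\bar v$, so $M$ is $R$-trivial.

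The delicate direction, and the one I expect to be the main obstacle, is $(iv)\Rightarrow(ii)$: recovering the order purely from the algebraic hypothesis. I would again argue by contraposition with an idempotent-power trick. Given distinct mutually reachable $p,q$ with $pu=q$, $qv=p$, set $g=\overline{uv}$ and let $e=g^{k}$ be its idempotent power (so $g^{2k}=g^{k}$); since $p\,\overline{uv}=(pu)v=qv=p$, also $pe=p$. Put $m_p=e$ and $m_q=e\bar u$. Then $m_p\bar u=m_q$, while $m_q(\bar v\,g^{k-1})=e\bar u\bar v\,g^{k-1}=g^{k+1}g^{k-1}=g^{2k}=e=m_p$; hence $m_pM^1=m_qM^1$, i.e. $m_p\,R\,m_q$. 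If $M$ were $R$-trivial this would force $m_p=m_q$, that is $e=e\bar u$; but $pe=p$ whereas $p(e\bar u)=(pe)\bar u=pu=q\neq p$, contradicting faithfulness. Thus a nontrivial strongly connected component obstructs $R$-triviality, which is $(iv)\Rightarrow(ii)$. The crux here is precisely engineering the $R$-related pair $m_p,m_q$ that is nonetheless separated by the action on $p$ versus $q$; once it is in hand, everything else is routine bookkeeping inside the transition monoid.
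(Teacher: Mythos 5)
Your proposal is correct and follows essentially the same route as the paper: the same cycle of implications through the reachability order on the states of the minimal automaton, the same monotonicity argument for $(iii)\Rightarrow(iv)$, and the same idempotent-power trick (writing $(uv)^k=(uv)^k u\cdot v(uv)^{k-1}$ to manufacture an $R$-related pair separated by the faithful action) for $(iv)\Rightarrow(ii)$. The only cosmetic difference is that you close the first three conditions as a triangle $(i)\Rightarrow(ii)\Rightarrow(iii)\Rightarrow(i)$ where the paper proves $(i)\Leftrightarrow(ii)$ directly; the substance is identical.
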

\begin{proof}
$\text{(i)}\iff \text{(ii)}$ 
Consider the minimum automaton recognizing $L$ and assume 
it admits a strongly connected component with more than one vertex, 
then there exists an oriented simple cycle $p$ visiting at least two vertices. 
If $w\in\Sigma^*$ is the label word of $p$, then $\{w^n\}_{n\geq 1}$ shows that the variation is not finite.
Vice-versa, assume that every strongly connected component is trivial. Then 
the variation is upper-bounded by the longest path with no self-loops and is therefore finite.\\
$\text{(ii)}\Rightarrow \text{(iii)}$
If we remove every self-loop from a graph we obtain an acyclic oriented graph.
Consider the graph induced by the minimum automaton recognizing $L$ and 
order the nodes of this graph by the topological ordering relation, then (iii) holds.\\
$\text{(iii)}\Rightarrow \text{(iv)}$ 
By hypothesis $qf\geq q$ for every $f\in\Sigma^*$ and $q\in Q$. Let $M$ be the syntactic monoid of $L$.
Assume that $u$ and $v$ generate the same right ideal of $M$, that is to say $uM=vM$. 
Then $u=vx$ and $v=uy$ for some $x,y\in \Sigma^*$. Hence for every $q$ it holds 
$qu=qvx\geq qv=quy\geq qu$. That is $qu=qv$. It follows that $u=v$.\\
$\text{(iv)}\Rightarrow \text{(ii)}$
Assume that $q$ and $q'$ are in the same strongly connected component.
Then, there exist two elements $u, v\in M$ such that 
$qu=q'$ and $q'v=q$. This imply $q(uv)^n=q$ and $q(uv)^nu=q'$ for every 
$n\in\mathbb{N}$. Pick an integer $n$ such that $(uv)^n$ is idempotent. 
Then $(uv)^n$ generate the same right ideal of $(uv)^n u$, since $(uv)^n\in ((uv)^n u) M$. 
This imply that $(uv)^n=(uv)^n u$, since $M$ is $R$-trivial by hypothesis. From this, it follows $q=q'$.
\end{proof}
\begin{corollary}
Let $L$ be a regular language. Then $L$ has finite variation if and only if its syntactic monoid is $\R$-trivial.
\end{corollary}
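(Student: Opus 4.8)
The plan is to obtain this corollary immediately as the biconditional between the first and last items of the four-way equivalence already established in Proposition \ref{prop:FVRtriviale}. There, condition (i) is verbatim the statement that $L$ has finite variation, while condition (iv) is verbatim the statement that the syntactic monoid of $L$ is $R$-trivial, i.e. lies in the pseudovariety $\R$. Since Proposition \ref{prop:FVRtriviale} proves the cycle (i) $\iff$ (ii) $\Rightarrow$ (iii) $\Rightarrow$ (iv) $\Rightarrow$ (ii), all four conditions are mutually equivalent; restricting this equivalence to the two endpoints (i) and (iv) yields exactly the assertion of the corollary. Thus no new argument is required beyond quoting the proposition, and I would present the corollary with a one-line proof that simply invokes the equivalence (i) $\iff$ (iv).

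The only point worth checking is purely notational: one must identify the phrase ``$R$-trivial monoid'' appearing in Proposition \ref{prop:FVRtriviale}(iv) with the phrase ``$\R$-trivial'' used in the corollary, which amounts to recalling the definition of $\R$ as the pseudovariety of $R$-trivial finite monoids and noting that the syntactic monoid of a regular language is indeed finite, so that membership in $\R$ is meaningful. There is genuinely no obstacle here: all of the mathematical content — the passage through the minimum automaton, the analysis of its strongly connected components, the construction of a compatible total order on the states, and the use of $R$-triviality to force cyclically reachable states to coincide via an idempotent power $(uv)^n$ — has already been discharged inside the proof of the proposition. Accordingly, the corollary is best read as a convenient restatement, packaging the finite-variation criterion in the algebraic language of Green's relations that the remainder of the paper will use when combining $R$-triviality with $J$-triviality and literal idempotency.
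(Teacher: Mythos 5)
Your proposal is correct and matches the paper's (implicit) treatment exactly: the corollary is just the equivalence $\text{(i)}\iff\text{(iv)}$ of Proposition \ref{prop:FVRtriviale}, and the paper likewise offers no further argument beyond that proposition. Nothing more is needed.
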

As a direct consequence, we have proved the following.
\begin{proposition}
Let $L\in\lmo(\Sigma)$ be a language recognized by some \mon\ with isolated cutpoint. Then its syntactic monoid $M(L)$ 
is an $R$-trivial block-group. Formally $ M(L) \in \bg\cap \R$ and $\lmo(\Sigma)\subseteq V_{\Sigma}(\bg\cap\R)$ as a
sub-boolean algebra.
\end{proposition}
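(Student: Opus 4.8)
The plan is to assemble the statement from two facts already established in the excerpt, so the proof is a direct consequence and carries no new mathematical content of its own; the substance lives in the two cited characterizations. Fix $L\in\lmo(\Sigma)$ recognized by some \mon\ with isolated cut-point.

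First I would establish $M(L)\in\bg$. By Proposition~\ref{prop:lmoambainis} every \mon\ over $\Sigma$ is, after normalizing the cut-point to $1/2$, a \lat\ with trivial unitary evolution recognizing the same language, so $L$ is recognized by some \lat. Theorem~\ref{thm:ambainis} (Ambainis \emph{et al.}) then immediately yields that the syntactic monoid of $L$ is a block-group.

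Next I would establish $M(L)\in\R$. By Proposition~\ref{prop:linearRep}(vi), $\lmo(\Sigma)$ consists of regular languages of finite variation, so $L$ has finite variation; the Corollary just proved (finite variation if and only if the syntactic monoid is $R$-trivial) then gives $M(L)\in\R$ at once. Combining the two conclusions yields $M(L)\in\bg\cap\R$, which is the first assertion.

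For the second assertion I would note that since $\bg$ and $\R$ are both pseudovarieties of finite monoids, so is their intersection $\bg\cap\R$, whence $V_\Sigma(\bg\cap\R)$ is a well-defined $*$-variety. Because every $L\in\lmo(\Sigma)$ has its syntactic monoid in $\bg\cap\R$ by the first part, we obtain the inclusion $\lmo(\Sigma)\subseteq V_\Sigma(\bg\cap\R)$; and since $\lmo(\Sigma)$ is itself closed under the boolean operations by Proposition~\ref{prop:linearRep}(v), this inclusion holds at the level of boolean algebras, that is, as a sub-boolean algebra. There is no genuine obstacle in this argument: the only routine checks are that the two cited results apply to the \emph{same} language $L$ and that an intersection of pseudovarieties is again a pseudovariety.
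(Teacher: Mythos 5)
Your proposal is correct and follows essentially the same route as the paper, which presents this proposition as a direct consequence of Proposition~\ref{prop:lmoambainis} together with Theorem~\ref{thm:ambainis} (for the block-group part) and of Proposition~\ref{prop:linearRep}(vi) together with the finite-variation/$R$-triviality corollary (for the $\R$ part). The only difference is that you spell out the routine verifications (cut-point normalization, closure of $\lmo(\Sigma)$ under boolean operations via Proposition~\ref{prop:linearRep}(v)) that the paper leaves implicit.
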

We observe further that the following holds true.
\begin{proposition}\label{prop:rjtrivial}
Every $R$-trivial block-group monoid is $J$-trivial.
\end{proposition}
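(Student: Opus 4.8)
The plan is to verify that $M$ satisfies the two pseudoidentities characterising the pseudovariety $\J$ of $J$-trivial finite monoids, namely $x^{\omega+1}=x^{\omega}$ and $(xy)^{\omega}=(yx)^{\omega}$, where $z^{\omega}$ denotes the unique idempotent power of $z$ (every element of a finite monoid admits one, and one may fix a single exponent $\omega$ working for all $z\in M$). Equivalently, I will show that $M$ is both $R$-trivial (given) and $L$-trivial, and invoke the standard fact that in a finite monoid $\mathcal{J}=\mathcal{D}=\mathcal{R}\circ\mathcal{L}$, so that $J$-triviality is equivalent to the conjunction of $R$- and $L$-triviality. The first ingredient is the equational description of $R$-triviality: a finite monoid is $R$-trivial if and only if it satisfies $(xy)^{\omega}x=(xy)^{\omega}$ for all $x,y$. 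Setting $y=1_M$ immediately yields $x^{\omega+1}=x^{\omega}$, so $M$ is aperiodic; the whole difficulty is concentrated in the commutation identity $(xy)^{\omega}=(yx)^{\omega}$, and this is exactly where the block-group hypothesis must enter.

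To prove the commutation identity, I would first use the elementary conjugation identity $y(xy)^{n}=(yx)^{n}y$, valid for every $n\geq 1$ and in particular for $n=\omega$, to write $y(xy)^{\omega}=(yx)^{\omega}y$. Applying the $R$-trivial identity $(yx)^{\omega}y=(yx)^{\omega}$ (the defining identity with the pair $(y,x)$) then gives $y(xy)^{\omega}=(yx)^{\omega}$, and symmetrically $x(yx)^{\omega}=(xy)^{\omega}$. Writing $P:=(xy)^{\omega}$ and $Q:=(yx)^{\omega}$, these read $Q=yP$ and $P=xQ$, so $Q\in MP$ and $P\in MQ$; hence $MP=MQ$, i.e.\ $P\mathrel{\mathcal{L}}Q$. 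Both $P$ and $Q$ are idempotents lying in one and the same $\mathcal{L}$-class, so the block-group hypothesis (at most one idempotent per $\mathcal{L}$-class) forces $P=Q$, which is precisely $(xy)^{\omega}=(yx)^{\omega}$.

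With both pseudoidentities in hand, $M$ is $J$-trivial by the equational characterisation of $\J$ (see \cite{Pi86}); alternatively, combining $y(xy)^{\omega}=(yx)^{\omega}$ with the just-proved $(xy)^{\omega}=(yx)^{\omega}$ yields the dual identity $y(xy)^{\omega}=(xy)^{\omega}$ characterising $L$-triviality, and $R$-trivial together with $L$-trivial gives $J$-trivial. The main obstacle, as anticipated, is the identity $(xy)^{\omega}=(yx)^{\omega}$: $R$-triviality only controls right ideals and so can strip trailing factors but cannot by itself symmetrise a product; the role of the block-group assumption is exactly to collapse the two conjugate idempotents $(xy)^{\omega}$ and $(yx)^{\omega}$, which the preceding manipulation shows are always $\mathcal{L}$-equivalent. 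I would double-check only the harmless point that a single exponent $\omega$ can be chosen uniformly for all elements, so that the conjugation identity and the $R$-trivial identity are applied at the same power.
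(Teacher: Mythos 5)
Your proof is correct, but it takes a genuinely different route from the paper's. The paper argues structurally through regular $\mathcal{D}$-classes: in an $R$-trivial monoid every regular element is idempotent (from $a=asa$ one gets $a\,R\,as$ with $as$ idempotent, whence $a=as$), so every regular $J$-class is a single $L$-class consisting entirely of idempotents; the block-group condition collapses it to a point, and one then invokes the standard fact (cited from Pin) that a finite monoid all of whose regular $J$-classes are trivial is $J$-trivial. You instead work purely equationally: from $R$-triviality you extract the identity $(xy)^{\omega}x=(xy)^{\omega}$, use the conjugation $y(xy)^{\omega}=(yx)^{\omega}y$ to show that the two idempotents $(xy)^{\omega}$ and $(yx)^{\omega}$ are always $L$-equivalent, and let the block-group hypothesis force them to coincide; $J$-triviality then follows either from the characterization $\J=[\![x^{\omega+1}=x^{\omega},(xy)^{\omega}=(yx)^{\omega}]\!]$ or, more economically, from $L$-triviality plus $\mathcal{J}=\mathcal{D}=\mathcal{R}\circ\mathcal{L}$ in finite monoids. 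Each approach has its merits: the paper's is shorter and more conceptual, but leans on the regular-$\mathcal{D}$-class machinery and an external proposition for the final step; yours isolates exactly where the block-group hypothesis acts (on the explicit pair of conjugate idempotents $(xy)^{\omega}$, $(yx)^{\omega}$) and, in its second formulation, needs only the easy direction of the equational description of $\R$, the elementary computation that $y(xy)^{\omega}=(xy)^{\omega}$ forces $L$-triviality, and the standard identity $\mathcal{J}=\mathcal{R}\circ\mathcal{L}$ --- making it essentially self-contained. Your closing caution about fixing a single uniform exponent $\omega$ is indeed harmless: one may take $\omega=|M|!$ throughout.
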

\begin{proof} 
If $S$ is a semigroup and $a\in S$, we say that $a$ is regular if there exists $a\in S$ such that $asa=a$. 
We say that an $L,R,J$ class is regular if all its elements are regular.
If $M$ is an $R$-trivial monoid then every regular $J$-class is a regular $L$-class $C$, such that every element of $C$ is 
idempotent. The block-group condition imply that $C$ has at most one idempotent. Then, $C$ has only one element. Then, every 
regular $J$-class of $M$ is trivial and this imply that every $J$-class of $M$ is trivial (see \cite{Pi86}, pag. 65, Proposition 4.1). 
That is to say that $M$ is a $J$-trivial monoid.
\end{proof}
Thus, by proposition \ref{prop:rjtrivial}, we have also proved the following.
\begin{proposition}
Let $L\in\lmo(\Sigma)$ be a language recognized by some \mon\ with isolated cutpoint. Then its syntactic monoid $M(L)$ 
is $J$-trivial. Formally $ M(L) \in \J$ and $\lmo(\Sigma)\subseteq V_{\Sigma}(\J)$ as a sub-boolean algebra.
\end{proposition}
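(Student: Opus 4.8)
The plan is to assemble this statement as a direct corollary of the results just established in this section, since all of the substantive work has already been carried out. First I would recall the immediately preceding proposition, which asserts that whenever $L\in\lmo(\Sigma)$ is recognized by some \mon\ with isolated cut-point, its syntactic monoid $M(L)$ lies in $\bg\cap\R$. That fact itself rests on two independent pillars: on the one hand, Proposition \ref{prop:lmoambainis} exhibits every \mon\ as a \lat\ with trivial unitary evolution, so that Theorem \ref{thm:ambainis} of \emph{Ambainis et al.} forces $M(L)$ to be a block-group; on the other hand, the finite-variation property guaranteed by Proposition \ref{prop:linearRep} together with its characterization in Proposition \ref{prop:FVRtriviale} forces $M(L)$ to be $R$-trivial.

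With $M(L)\in\bg\cap\R$ in hand, the second step is simply to invoke Proposition \ref{prop:rjtrivial}, which states that every $R$-trivial block-group monoid is $J$-trivial. Applying it to $M(L)$ yields $M(L)\in\J$, and hence $L\in V_{\Sigma}(\J)$ by the definition of the latter class as the regular languages whose syntactic monoid is $J$-trivial. Since $L$ was an arbitrary language in $\lmo(\Sigma)$, this gives the inclusion $\lmo(\Sigma)\subseteq V_{\Sigma}(\J)$.

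To justify the phrase \emph{as a sub-boolean algebra}, I would observe that $\lmo(\Sigma)$ is itself a boolean algebra of regular languages by Proposition \ref{prop:linearRep}, while $V_{\Sigma}(\J)$ is likewise a boolean algebra, being the $*$-variety of piecewise testable languages attached to the pseudovariety $\J$ through the Eilenberg correspondence (equivalently, by \emph{Simon}'s Theorem). Since the inclusion just established is an inclusion of subsets of $2^{\Sigma^*}$ that respects the boolean operations common to both classes, $\lmo(\Sigma)$ embeds inside $V_{\Sigma}(\J)$ as a sub-boolean algebra, as claimed.

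Because every ingredient is a previously stated result, there is no genuine obstacle here; the only point requiring slight care is the bookkeeping of the \emph{isolated cut-point} hypothesis, which is precisely what licenses the appeal to the finite-variation, hence $R$-triviality, half of the argument, and which therefore cannot be dropped.
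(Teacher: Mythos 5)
Your proposal is correct and follows exactly the paper's route: the paper likewise derives this statement by combining the preceding proposition ($M(L)\in\bg\cap\R$, itself resting on Proposition \ref{prop:lmoambainis} with Theorem \ref{thm:ambainis} for the block-group half and on the finite-variation results of Propositions \ref{prop:linearRep} and \ref{prop:FVRtriviale} for the $R$-trivial half) with Proposition \ref{prop:rjtrivial}. Your additional remarks on the sub-boolean-algebra phrasing and the role of the isolated cut-point hypothesis are accurate and consistent with the paper.
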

To conclude, we summarize the contents of this section in the following Proposition.
\begin{proposition}\label{prop:forwardDirection}
Let $L\in\lmo(\Sigma)$ be a language recognized by some \mon\ with isolated cutpoint. Then the following propositions holds.
\begin{enumerate}
\renewcommand\labelenumi{\theenumi}
\renewcommand{\theenumi}{(\roman{enumi})}
\item $L$ is a picewise testable language, $L\in \pt$ or equivalently $L\in V_{\Sigma}(\J)$.
\item $L$ is literally idempotent $L\in \id$. Thus $L\in \pt\cap\id = V_{\Sigma}(\J)\cap \id$.
\item The syntactic monoid of $L$ is $J$-trivial and literally satisfies the pseudoidentity $x^2=x$. 
That is to say $L\in V_{\Sigma}(\overline{\J})$.
\item $L$ lies in \lipt, that is the boolean closure of languages of the following form, for any $k\geq 0$, \[\Sigma^* a_1 \Sigma^* a_2 \Sigma^*\cdots \Sigma^* a_k \Sigma^*, \text{ for } a_1, a_2, \ldots, a_k\in\Sigma \text{ and } 
 a_i \neq a_{i+1} \text{ for every } 1 \leq i < k. \]
\end{enumerate}
\end{proposition}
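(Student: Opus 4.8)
The plan is to handle the four items in the order (i), (ii), and then (iii) and (iv) together, the latter two being equivalent reformulations that drop out of the theorem of \emph{Kl\'ima} and \emph{Pol\'ak} (Theorem~\ref{thm:klimapolak}) once (i) and (ii) are secured. Item (i) needs essentially no new work: the preceding proposition already gives that the syntactic monoid $M(L)$ is $J$-trivial, i.e. $M(L)\in\J$, and the characterization of \emph{Simon} \cite{Si75} asserts precisely that a language whose syntactic monoid is $J$-trivial is piecewise testable; hence $L\in\pt=V_{\Sigma}(\J)$.

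The substantive step is item (ii), literal idempotency. Here I would argue directly from the measurement dynamics, formalizing the slogan that \emph{reading the same letter twice leaves the system unchanged}. Fix a \mon\ $A$ recognizing $L$ with cut-point $\lambda$, and for each $c\in\Sigma$ introduce the measurement superoperator $\mathcal{E}_c(\rho):=\sum_{j=1}^{k(c)}P_j(c)\,\rho\,P_j(c)$ on density matrices, so that the recursive definition of $\sigma$ becomes $\sigma(cy)=\mathcal{E}_c(\sigma(y))$ and therefore $\sigma(w_1\cdots w_n)=(\mathcal{E}_{w_1}\circ\cdots\circ\mathcal{E}_{w_n})(\pi_0^{\dag}\pi_0)$. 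The crucial fact is that each $\mathcal{E}_c$ is idempotent: the projectors of a single observable are mutually orthogonal, $P_i(c)P_j(c)=\delta_{ij}P_j(c)$, and a one-line computation then yields $\mathcal{E}_c\circ\mathcal{E}_c=\mathcal{E}_c$. Consequently, for arbitrary $x,y\in\Sigma^*$ and $a\in\Sigma$ the composition computing $\sigma(xa^2y)$ contains the factor $\mathcal{E}_a\circ\mathcal{E}_a$, which collapses to $\mathcal{E}_a$, so that $\sigma(xa^2y)=\sigma(xay)$. Since $p_A(w)$ depends on $w$ only through $\sigma(w)$ (equivalently, by the orthogonality-driven collapse of the repeated projector index in the sum-of-squares of Proposition~\ref{prop:sumofsquares}), we get $p_A(xa^2y)=p_A(xay)$, and from $L=\{w\mid p_A(w)>\lambda\}$ it follows that $xa^2y\in L\iff xay\in L$; that is, $L\in\id$. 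Together with (i) this shows $L\in\pt\cap\id=V_{\Sigma}(\J)\cap\id$.

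Items (iii) and (iv) then follow at once by invoking Theorem~\ref{thm:klimapolak} with the conjunction just established. What we have proved is exactly clause (ii) of that theorem; its equivalent clause (iii) gives that $M(L)$ is $J$-trivial and literally satisfies $x^2=x$, i.e. $L\in V_{\Sigma}(\overline{\J})$, and its clause (i) gives that $L$ lies in the boolean closure of the languages $\Sigma^*a_1\Sigma^*a_2\Sigma^*\cdots\Sigma^*a_k\Sigma^*$ with $a_i\neq a_{i+1}$ for $1\leq i<k$, i.e. $L\in\lipt$.

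The only genuinely non-routine point is item (ii); everything else is an appeal to a previously established proposition or to the cited theorems of \emph{Simon} and \emph{Kl\'ima}--\emph{Pol\'ak}. Within (ii) the part deserving care is verifying the idempotency $\mathcal{E}_c\circ\mathcal{E}_c=\mathcal{E}_c$ and confirming that the resulting exact equality $\sigma(xa^2y)=\sigma(xay)$ passes unchanged through the final end-marker measurement $O_{\#}$ into the acceptance probability; note in particular that this equality is exact, so the isolation hypothesis on the cut-point plays no role here.
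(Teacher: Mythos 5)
Your proposal is correct and follows essentially the same route as the paper: item (i) via the already-established $J$-triviality of the syntactic monoid together with Simon's theorem, item (ii) from the orthogonality and idempotency of the projectors of a single observable (which the paper states in one line and you usefully formalize as idempotency of the measurement superoperator $\mathcal{E}_c$, yielding the exact equality $\sigma(xa^2y)=\sigma(xay)$), and items (iii) and (iv) by invoking Theorem~\ref{thm:klimapolak}. Your added remark that the isolation hypothesis is not needed for the exact equality in (ii) is accurate and a nice clarification of the paper's terse argument.
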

\begin{proof}
Item (i) is exactly Proposition \ref{prop:rjtrivial}. Item (ii) follows from propositon \ref{prop:rjtrivial} and the fact that projection 
operators $P_i$ of any \mon\,'s observable are orthogonal and idempotent by definition. Items (iii) and (iv) follow from Theorem 
\ref{thm:klimapolak}, which is due to Kl\'ima and Pol\'ak \cite{KP08}.
\end{proof}

\section{\mons\, recognizing literally idempotent piecewise testable languages}
In this section we show how languages in \lipt\ can be recognized by \mons, proving in fact the converse of Proposition \ref{prop:forwardDirection}.\\\\
In this section it is convenient to define the language 
$L[a_1, \ldots, a_k] := \Sigma^*a_1\Sigma^*\cdots\Sigma^*a_k\Sigma^*$,
for any $k\geq 1$ and $a_1,\ldots, a_k \in\Sigma$, such that $a_i \neq a_{i+1}$ for each $1\leq i< k$. 
We also let $S := \{a_1, \ldots, a_k\}$.
\begin{definition}[On $j_i^{(\alpha)}$ indexes]\label{def:jindici}
Let $L[a_1, \ldots, a_k] := \Sigma^*a_1\Sigma^*\cdots\Sigma^*a_k\Sigma^*$,
for any $k\geq 1$ and $a_1,\ldots, a_k \in\Sigma$, such that $a_i \neq a_{i+1}$ for each $1\leq i< k$. 
Let $S := \{a_1, \ldots, a_k\}$.
For every $\alpha \in S$, let $\#\alpha$ be the number of times that $\alpha$ appears as a letter in the word $a_1a_2\cdots a_k$.
Let \[j^{(\alpha)}_1 < j^{(\alpha)}_2 < \cdots < j^{(\alpha)}_{\#\alpha} \text{ be all the indexes such that }
\alpha = a_{j^{(\alpha)}_1} = \ldots = a_{j^{(\alpha)}_{\#\alpha}}\] in increasing order. For each $1\leq i\leq \#\alpha$ it holds 
that $1 \leq  j_i^{(\alpha)} \leq k$. 
\end{definition}

\begin{definition}\label{operatori:compatta}
We define, for every $\alpha\in S$ and $k\geq 1$, two orthogonal projectors of dimension $(k+1)\times (k+1)$: the up operator $P^{(k)}_{\nearrow}(\alpha)$ and the down operator $P^{(k)}_{\searrow}(\alpha)$, such that
\[
\left(P^{(k)}_{\nearrow}(\alpha)\right)_{rs} =
\left\{
	\begin{array}{cl}
	1 & \mbox{ if } r=s \mbox{ and } \forall\, 1\leq i\leq \#\alpha \mbox{ it holds } r,s\notin\{j_{i}^{(\alpha)}, j_{i}^{(\alpha)}+1\},\\
	\frac{1}{2} & \mbox{ if } \exists\, 1\leq i\leq \#\alpha \mbox{ such that } r,s\in\{j_{i}^{(\alpha)}, j_{i}^{(\alpha)}+1\},\\
	0 & \mbox{ otherwise,}
	\end{array}
\right.
\]
\[
\left(P^{(k)}_{\searrow}(\alpha)\right)_{rs} =
\left\{
	\begin{array}{cl}
	\frac{1}{2} & \mbox{ if } r=s \mbox{ and } \exists\, 1\leq i\leq \#\alpha \mbox{ such that } r,s\in\{j_{i}^{(\alpha)}, j_{i}^{(\alpha)}+1\},\\
	-\frac{1}{2} & \mbox{ if } r\neq s \mbox{ and } \exists\, 1\leq i\leq \#\alpha \mbox{ such that } r,s\in\{j_{i}^{(\alpha)}, j_{i}^{(\alpha)}+1\},\\
	0 & \mbox{ otherwise.}
	\end{array}
\right.
\]
\end{definition}
\begin{definition}\label{operatori:estesa}
We also give an extended representation of $P^{(k)}_{\nearrow}$ and $P^{(k)}_{\searrow}$, 
which is equivalent to Definition \ref{operatori:compatta}.
Let $u:=(+\frac{1}{\sqrt{2}}, +\frac{1}{\sqrt{2}})$ and $d:=(+\frac{1}{\sqrt{2}}, -\frac{1}{\sqrt{2}})$. Consider the following two elementary operators of orthogonal projection in dimension $2\times 2$, 
the up-diagonal operator $P_{\nearrow}$ and the down-diagonal operator $P_{\searrow}$, defined as 
\[
P_{\nearrow} :=
u^T u =
\left( 
\begin{array}{c c}
+\frac{1}{2} & +\frac{1}{2} \\
+\frac{1}{2} & +\frac{1}{2} \\
\end{array} 
\right)
\]
\[
P_{\searrow} := 
d^T d=
\left( \begin{array}{c c}
+\frac{1}{2} & -\frac{1}{2} \\
-\frac{1}{2} & +\frac{1}{2} \\
\end{array} \right)
\]
Then, for $k>1$, the up-diagonal operator 
$P^{(k)}_{\nearrow}(\alpha) \in\mathbb{C}^{(k+1)\times (k+1)}  \text{ is defined as follows } $
\[ \scriptsize
\begin{array}{c | c | c c c | c | c c c | c | c c c | c |c c c | c | c}
& \mbox{\fontsize{8}{8}\selectfont $1,\ldots$}
&\mbox{\fontsize{8}{8}\selectfont $j^{(\alpha)}_1$}
&\mbox{\fontsize{8}{8}\selectfont $j^{(\alpha)}_1+1$} & &
&\mbox{\fontsize{8}{8}\selectfont $j^{(\alpha)}_2$} &
\mbox{\fontsize{8}{8}\selectfont $j^{(\alpha)}_2+1$}& & & &\cdots &
&&\mbox{\fontsize{8}{8}\selectfont
  $j^{(\alpha)}_{\#\alpha}$}&\mbox{\fontsize{8}{8}\selectfont
  $j^{(\alpha)}_{\#\alpha}+1$}&&\mbox{\fontsize{8}{8}\selectfont
  $\ldots, k+1$}&\\
\hline
\mbox{\fontsize{8}{8}\selectfont $1,\ldots$}& I & & & & & & & & & & & &&&&&&\\
\hline
\mbox{\fontsize{8}{8}\selectfont $j^{(\alpha)}_1$}& &   &  &  & & & & & & & & &&&&&&\\ 
         & &   & P_{\nearrow} &  & & & & & & & & &&&&&&\\ 
\mbox{\fontsize{8}{8}\selectfont $j^{(\alpha)}_1+1$}& &   &   & & & & & & & & &&&&&&\\
\hline
& &       &     &   & I & & & & & & &&&&&&\\
\hline
\mbox{\fontsize{8}{8}\selectfont $j^{(\alpha)}_2$}& &   &  &  & & & & & && &&&&&&\\ 
         & &   & &  & & & P_{\nearrow} & & & & & &&&&&&\\ 
\mbox{\fontsize{8}{8}\selectfont $j^{(\alpha)}_2+1$}& &   &   & & & & & & & &&&&&&&&\\
\hline
         & &   & &  & & &  & & I & & &&&&&&\\ 
\hline
& &   &  &  & & & & & & & & &&&&&\\ 
\vdots  & &   & &  & & & & & & & \ddots &&&&&&\\ 
& &   &   & & & & & & & &&&&&&&\\
\hline
   & &       &     &   & & & & & & & &&I&&&&\\
\hline
\mbox{\fontsize{8}{8}\selectfont $j^{(\alpha)}_{\#\alpha}$}  
& & & &   &  &  & & & & & & & & &       & &\\ 
& & & &   &  &  & & & & & & & & & P_{\nearrow} & &\\ 
\mbox{\fontsize{8}{8}\selectfont $j^{(\alpha)}_{\#\alpha+1}$}  
& & & & & & & & & & & & & & & & & &\\
\hline
\mbox{\fontsize{8}{8}\selectfont $\ldots, k+1$} 
& & & & & & & & & & & & & & & & &I &\\ 
\hline
\end{array}\]\\ in which not appearing elements are intended to be null, and every $I$ is an identity matrix. \\
Next, for $k>1$, we define
$ P^{(k)}_{\searrow}(\alpha)\in \mathbb{C}^{(k+1)\times (k+1)} \text{ as follows} $
\[ \scriptsize
\begin{array}{c | c | c c c | c | c c c | c | c c c | c|ccc|c|c}
& \mbox{\fontsize{8}{8}\selectfont $1,\ldots$} &\mbox{\fontsize{8}{8}\selectfont $j^{(\alpha)}_1$} &\mbox{\fontsize{8}{8}\selectfont $j^{(\alpha)}_1+1$} & & &\mbox{\fontsize{8}{8}\selectfont $j^{(\alpha)}_2$} & \mbox{\fontsize{8}{8}\selectfont $j^{(\alpha)}_2+1$}& & & &\cdots & & &\mbox{\fontsize{8}{8}\selectfont $j^{(\alpha)}_{\#\alpha}$}&\mbox{\fontsize{8}{8}\selectfont $j^{(\alpha)}_{\#\alpha+1}$}&&\mbox{\fontsize{8}{8}\selectfont $\ldots, k+1$}&\\
\hline
\mbox{\fontsize{8}{8}\selectfont $1, \ldots$}& \mathbf{0} & & & & & & & & & & & &&&&&&\\
\hline
\mbox{\fontsize{8}{8}\selectfont $j^{(\alpha)}_1$}& &   &  &  & & & & & & & & &&&&&&\\ 
         & &   & P_{\searrow} &  & & & & & & & & &&&&&&\\ 
\mbox{\fontsize{8}{8}\selectfont $j^{(\alpha)}_1+1$}& &   &   & & & & & & & & &&&&&&\\
\hline
& &       &     &   & \mathbf{0} & & & & & & &&&&&&\\
\hline
\mbox{\fontsize{8}{8}\selectfont $j^{(\alpha)}_2$}& &   &  &  & & & & & & & &&&&&&\\ 
         & &   & &  & & & P_{\searrow} & & & & & &&&&&\\ 
\mbox{\fontsize{8}{8}\selectfont $j^{(\alpha)}_2+1$}& &   &   & & & & & & & &&&&&&&\\
\hline
         & &   & &  & & &  & & \mathbf{0} & & &&&&&&\\ 
\hline
& &   &  &  & & & & & & & & &&&&&\\ 
\vdots  & &   & &  & & & & & & & \ddots &&&&&&\\ 
& &   &  & & & & & & & & &&&&&&\\
\hline
   & &       &     &   & & & & & & & && \mathbf{0} &&&&\\
\hline
\mbox{\fontsize{8}{8}\selectfont $j^{(\alpha)}_{\#\alpha}$}  & & & &   &  &  & & & & & & & &&&&\\ 
         & & & &   & &  & & & & & & & & & P_{\searrow}&&\\ 
\mbox{\fontsize{8}{8}\selectfont $j^{(\alpha)}_{\#\alpha+1}$}  & & & &   & &  & & & & & & & & & &  &\\
\hline
\mbox{\fontsize{8}{8}\selectfont
  $\ldots, k+1$} & & & &   & &  & & & & & & & & & & & \mathbf{0}\\ 
\hline
\end{array}
\]\\ in which not appearing elements are intended to be null and the $\mathbf{0}$ entries are block matrices with null elements.
\end{definition}

\begin{definition}\label{def:automatapicewise}
Let $S:=\{a_1, \ldots, a_k\}$ for some $a_i\in\Sigma$. By calling $e_j$ the boolean row vector such that 
$(e_j)_i=1\Leftrightarrow i=j$, we define $A[a_1,\ldots,a_k]=\langle \Sigma\cup\{\#\}, \pi_0^{(k)}, \{O_c^{(k)}\}_{c\in\Sigma\cup\{\#\}}, F^{(k)} \rangle$ as the \mon\ where
\begin{itemize}
	\item $\pi_0^{(k)}=e_1\in\C^{1\times(k+1)}$,
	\item for $\alpha\in S$, the associated projectors of $O_\alpha^{(k)}$ are $P^{(k)}_{\nearrow}(\alpha)$ and $P^{(k)}_{\searrow}(\alpha)$,
	\item with each $O_c^{(k)}$ such that $c\in\Sigma\setminus S$, we associate the identity matrix $I_{(k+1)\times(k+1)}$,
	\item the projector of the accepting result of $O_\#^{(k)}$ is $(e_{k+1})^Te_{k+1}$, i.e. the $(k+1)\times(k+1)$ boolean matrix having a 1 only in the bottom right entry. We denote it by $P^{(k)}_{\text{acc}}$.
\end{itemize}
\end{definition}
We begin a careful analysis of the computing behavior of $A[a_1, \ldots, a_k]$ as defined in Definition 
\ref{def:automatapicewise}. Observe that $P^{(k)}_{\nearrow}(\alpha)$ maps contiguous 
pairs of coordinates of $\mathbf{x}$ into their mid-points according to the indexes $\{j^{(\alpha)}_q\}_{q=1}^{\#\alpha}$. Clearly, if we start with the vector $\mathbf{x}= (1, 0 \dots 0)$, 
each reading of a letter $a_i$ will average $x_i$ and $x_{i+1}$, so that if we read all letters in the
good order, the last coordinate will be nonzero, but would be otherwise. This, in turn, motivates the following Lemmata.
\begin{lemma}\label{lemma:midpoints}
Consider the \mon\, $A[a_1, \ldots, a_k]$ with up-diagonal projectors $\{P_{\nearrow}^{(k)}(\alpha)\}_{\alpha\in S}$, for $S:=\{a_1, \ldots, a_k\}$. Let $\mathbf{x}=(x_1, x_2, \ldots, x_{k+1})\in\mathbb{R}^{1\times (k+1)}$ 
be an $n$-dimensional real vector. Let $j^{(\alpha)}_1 < \cdots < j^{(\alpha)}_{\sharp\alpha}$ be the indexes as in Definition \ref{def:jindici}. Let $\mathbf{y}:=\mathbf{x}P^{(k)}_{\nearrow}(\alpha)$.
Then \[(\mathbf{y})_i =  
\left\{
\begin{array}{c l}
(\mathbf{y})_i & \text{ if } i\not\in\{ j^{(\alpha)}_1,\ldots, j^{(\alpha)}_{\#\alpha}\}\cup 
\{ j^{(\alpha)}_1+1,\ldots, j^{(\alpha)}_{\#\alpha}+1\}\\

\frac{(\mathbf{y})_i + (\mathbf{y})_{i+1}}{2} & \text{ if } i\in\{ j^{(\alpha)}_1,\ldots, j^{(\alpha)}_{\#\alpha}\} \\

\frac{(\mathbf{y})_{i-1} + (\mathbf{y})_i}{2} & \text{ if } i\in\{ j^{(\alpha)}_1+1,\ldots, j^{(\alpha)}_{\#\alpha}+1\} \\
\end{array}
\right.
\]
\end{lemma}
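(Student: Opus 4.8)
The plan is to read the claim directly off the block structure of $P^{(k)}_{\nearrow}(\alpha)$ exhibited in Definition \ref{operatori:estesa}, reducing the assertion to the elementary action of the $2\times 2$ block $P_{\nearrow}$ on each relevant pair of coordinates. (I read the three right-hand sides as being expressed through the input $\mathbf{x}$ rather than $\mathbf{y}$, which is forced since $\mathbf{y}:=\mathbf{x}P^{(k)}_{\nearrow}(\alpha)$; otherwise the two nontrivial branches would assert identities of $\mathbf{y}$ with itself.)

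First I would establish that the index-pairs $\{j_i^{(\alpha)}, j_i^{(\alpha)}+1\}$, for $1\leq i\leq \#\alpha$, are pairwise disjoint. Indeed, if $i<i'$ then $\alpha = a_{j_i^{(\alpha)}} = a_{j_{i'}^{(\alpha)}}$ with $j_i^{(\alpha)} < j_{i'}^{(\alpha)}$; were $j_{i'}^{(\alpha)} = j_i^{(\alpha)}+1$ we would have $a_{j_i^{(\alpha)}} = a_{j_i^{(\alpha)}+1} = \alpha$, contradicting the defining constraint $a_t\neq a_{t+1}$ of $L[a_1,\ldots,a_k]$. Hence $j_{i'}^{(\alpha)}\geq j_i^{(\alpha)}+2$ and the two pairs do not meet. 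This is exactly what makes the three cases in the compact Definition \ref{operatori:compatta} mutually exclusive and turns $P^{(k)}_{\nearrow}(\alpha)$ into a genuine block-diagonal matrix: the identity on every coordinate outside $\bigcup_i\{j_i^{(\alpha)}, j_i^{(\alpha)}+1\}$, and a copy of $P_{\nearrow}=\left(\begin{smallmatrix} 1/2 & 1/2 \\ 1/2 & 1/2 \end{smallmatrix}\right)$ on each pair.

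Next I would compute $\mathbf{y}=\mathbf{x}P^{(k)}_{\nearrow}(\alpha)$ column by column via $(\mathbf{y})_s = \sum_r (\mathbf{x})_r \left(P^{(k)}_{\nearrow}(\alpha)\right)_{rs}$, distinguishing the three cases for $s$. If $s$ lies in no pair, the only nonzero entry of column $s$ is $\left(P^{(k)}_{\nearrow}(\alpha)\right)_{ss}=1$, so $(\mathbf{y})_s=(\mathbf{x})_s$. If $s=j_i^{(\alpha)}$, the nonzero entries of column $s$ are exactly $\left(P^{(k)}_{\nearrow}(\alpha)\right)_{ss}=\left(P^{(k)}_{\nearrow}(\alpha)\right)_{s+1,s}=\tfrac12$, whence $(\mathbf{y})_s=\tfrac12\big((\mathbf{x})_s+(\mathbf{x})_{s+1}\big)$. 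Symmetrically, if $s=j_i^{(\alpha)}+1$ the nonzero column entries are $\left(P^{(k)}_{\nearrow}(\alpha)\right)_{s-1,s}=\left(P^{(k)}_{\nearrow}(\alpha)\right)_{ss}=\tfrac12$, giving $(\mathbf{y})_s=\tfrac12\big((\mathbf{x})_{s-1}+(\mathbf{x})_s\big)$. These are precisely the three branches of the claimed formula.

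There is essentially no hard step here; the only point requiring care is the disjointness of the pairs, since it is exactly this that guarantees each coordinate $s$ lands in one and only one of the three cases (so that, for instance, the "top" coordinate of one pair is never at the same time the "bottom" of the next). Once that bookkeeping is settled, the statement is a direct transcription of the averaging action $(a,b)\mapsto(\tfrac{a+b}{2},\tfrac{a+b}{2})$ of $P_{\nearrow}$ onto the coordinates indexed by $\{j_i^{(\alpha)}\}_{i}$, as anticipated in the informal remark preceding the lemma.
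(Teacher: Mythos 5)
Your proof is correct. The paper in fact gives no proof of this lemma at all (it is stated and left as an immediate consequence of Definitions \ref{operatori:compatta} and \ref{operatori:estesa}), so there is no argument to compare against; your column-by-column computation is exactly the verification the paper tacitly assumes. You also rightly flag the two points that need saying: the right-hand sides must read $(\mathbf{x})_i$, $\frac{(\mathbf{x})_i+(\mathbf{x})_{i+1}}{2}$, $\frac{(\mathbf{x})_{i-1}+(\mathbf{x})_i}{2}$ rather than being expressed through $\mathbf{y}$ (a typo in the statement), and the pairwise disjointness of the blocks $\{j_i^{(\alpha)}, j_i^{(\alpha)}+1\}$ --- which follows from the constraint $a_t\neq a_{t+1}$ in the definition of $L[a_1,\ldots,a_k]$ --- is the one fact needed for the three cases to be exhaustive and mutually exclusive.
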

\begin{lemma}\label{lemma:ampiezzapositivalimitata}
Consider the language $L[a_1, \ldots, a_k]$ over the alphabet $\Sigma$ and let $S:=\{a_1, \ldots, a_k\}$.
Let $w := w_1\cdots w_n$ be a word of $\Sigma^*$. Consider the class of up-diagonal orthogonal projectors 
$\{P_{\nearrow}^{(k)}(a_i)\}_i$ associated to the \mon\, $A[a_1, \ldots, a_k]$, with initial quantum state 
$\pi_0:=e_1 \in \mathbb{C}^{1\times (k+1)}$. 
Let us define \[\pi := \pi_0\cdot \prod_{i=1}^{n}P^{(k)}_{\nearrow}(w_i)\] 
where, for notational convenience, $P_{\nearrow}^{(k)}(w_i)=I$ if $w_i\in\Sigma\setminus S$.
Then the following two propositions holds: 
\begin{enumerate}
\item if $w\in L[a_1, \ldots, a_k]$ then, for every $1 \leq i \leq k+1$, $(\pi)_{i}>0$.
\item for every $1 \leq j\leq i \leq k+1$, it holds that: \[\text{if } \left(\pi\right)_i > 0 \text{ then } \left(\pi\right)_{j} \geq 2^{-k}\]
\end{enumerate}
\end{lemma}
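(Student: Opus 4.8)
The plan is to fix the word $w = w_1 \cdots w_n$ and study the partial states $\pi^{(m)} := \pi_0 \cdot \prod_{i=1}^m P^{(k)}_{\nearrow}(w_i)$ for $0 \le m \le n$, establishing both assertions simultaneously by induction on $m$ from a single package of structural invariants. For a prefix $w_1 \cdots w_m$, let $s(m)$ denote the length of the longest prefix $a_1 \cdots a_{s(m)}$ of $a_1 \cdots a_k$ occurring as a scattered subsequence of $w_1 \cdots w_m$; then $0 \le s(m) \le k$ and $s$ is non-decreasing in $m$. The three invariants I maintain are: (I) every coordinate of $\pi^{(m)}$ is non-negative; (II) the strictly positive coordinates of $\pi^{(m)}$ are exactly those of index $1 \le p \le s(m)+1$, and moreover $(\pi^{(m)})_1 \ge (\pi^{(m)})_2 \ge \cdots \ge (\pi^{(m)})_{s(m)+1} > 0$; (III) the last positive coordinate satisfies $(\pi^{(m)})_{s(m)+1} \ge 2^{-s(m)}$.

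Throughout I invoke Lemma~\ref{lemma:midpoints}: reading a letter $\alpha$ replaces each contiguous pair $\{j_i^{(\alpha)}, j_i^{(\alpha)}+1\}$ of coordinates by their common arithmetic mean and fixes all other coordinates; by Definition~\ref{def:jindici} together with $a_i \ne a_{i+1}$ these pairs are pairwise disjoint, so the update acts independently on each pair. The base case $m=0$ is immediate since $\pi^{(0)} = e_1$ and $s(0)=0$. In the inductive step, reading $c \notin S$ is the identity and changes nothing, so suppose $c = \alpha \in S$ and split according to whether $a_{s(m)+1} = \alpha$ (an \emph{advancing} step, with $s(m+1)=s(m)+1$) or not. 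Invariant (I) is clear, a mean of non-negative numbers being non-negative. For (II) I first observe that averaging pairwise-disjoint pairs preserves the non-increasing order: for a pair $\{p,p+1\}$ both entries become $(x_p+x_{p+1})/2$, which by the monotonicity hypothesis lies between the (possibly also averaged) values to its left and to its right. I then track the support: in the advancing case the pair $\{s(m)+1, s(m)+2\}$ is present and turns the previously-zero coordinate $s(m)+2$ positive, while no pair can make any coordinate of index $\ge s(m)+3$ positive (both endpoints of such a pair are currently zero); in the non-advancing case no pair has $s(m)+1$ as its left endpoint, so index $s(m)+2$ stays zero and the support is unchanged.

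The delicate point, and the main obstacle, is invariant (III), that is, the lower bound $2^{-k}$. In an advancing step the new front value is $(\pi^{(m)})_{s(m)+1}/2 \ge 2^{-s(m)}/2 = 2^{-(s(m)+1)} = 2^{-s(m+1)}$, matching the bound exactly because the new last coordinate inherits half of the old front value; this single factor of $2$ per advance is what accumulates to $2^{-k}$. In a non-advancing step I must rule out the front value dropping below $2^{-s(m)}$: coordinate $s(m)+1$ is untouched unless $\alpha = a_{s(m)}$, in which case it becomes $\big((\pi^{(m)})_{s(m)} + (\pi^{(m)})_{s(m)+1}\big)/2$, and here the monotonicity from (II) gives $(\pi^{(m)})_{s(m)} \ge (\pi^{(m)})_{s(m)+1}$, so this average is at least the old front value and (III) is preserved. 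It is precisely the coupling between the ordering (II) and the averaging dynamics that prevents the front from shrinking during non-advancing reads.

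Finally I would harvest the two claims from the invariants applied to $\pi = \pi^{(n)}$. For part~(1), if $w \in L[a_1,\dots,a_k]$ then $a_1\cdots a_k$ is a subsequence of $w$, so $s(n)=k$ and by (II) all coordinates $1,\dots,k+1$ of $\pi$ are strictly positive. For part~(2), suppose $(\pi)_i > 0$ with $j \le i$; by (II) we have $i \le s(n)+1$, hence $j \le s(n)+1$, and monotonicity gives $(\pi)_j \ge (\pi)_{s(n)+1}$, which by (III) is at least $2^{-s(n)} \ge 2^{-k}$ since $s(n) \le k$. This is exactly the stated bound, completing the proof.
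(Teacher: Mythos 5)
Your proof is correct, and it rests on the same underlying idea as the paper's argument --- tracking how the up-diagonal projectors propagate positivity rightward from $e_1$ --- but it is substantially more complete. The paper establishes part (1) by essentially the same support-tracking you give; for part (2) it introduces the single-pair averaging maps $T_i$ and then simply asserts, ``by induction on $k$'', that $\bigl(T_{[n]}(e_1)\bigr)_i>0$ implies $\bigl(T_{[n]}(e_1)\bigr)_j\ge 2^{-k}$ for $j\le i$, without exhibiting an invariant strong enough to carry that induction (the concluding lines of the paper's proof even contain a stray index $m$ and the unproven claim that further averagings ``can not decrease this bound''). Your proof supplies exactly the two missing ingredients: the monotonicity of the coordinate vector (your invariant (II)), which is what guarantees that a non-advancing read cannot shrink the front value, and the refined bound $2^{-s(m)}$ in place of a uniform $2^{-k}$ (your invariant (III)), which is genuinely needed because an advancing step halves the front value, so a hypothesis of the form ``every positive coordinate is at least $2^{-k}$'' would not survive the inductive step on its own. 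You also make explicit, via $a_i\neq a_{i+1}$, that the pairs $\{j_i^{(\alpha)}, j_i^{(\alpha)}+1\}$ are pairwise disjoint, a fact the paper uses tacitly. In short: same route as the paper, but your version actually closes the argument.
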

\begin{proof} 
Observe that by Lemma \ref{lemma:midpoints} any projector $P_{\nearrow}^{(k)}(\alpha)$ sends contiguous pairs of coordinates into their mid-point according to indexes $\{j^{(\alpha)}_q\}_{q=1}^{\#\alpha}$ and leaves the others coordinates intact. \\
We prove (i). If $w\in L[a_1, \ldots, a_k]$ then $w = u_1 a_1 u_2 a_2 \cdots u_{k} a_k u_{k+1}$ for some $u_1, \ldots, u_{k+1}\in\Sigma^*$.
As we start with $e_1$, by applying a projector $P^{(k)}_{\nearrow}(\alpha)$ to $e_1$ as defined in Definition \ref{operatori:estesa}, we perturb the second coordinate from a null to a non null value if and only if $j^{(\alpha)}_1 = 1$, that is if and only if $\alpha = a_1$; otherwise we leave $e_1$ intact. Once we have applied $P^{(k)}_{\nearrow}(a_1)$, we perturb the third coordinate of the 
vector state if and only if $j^{(\alpha)}_1 = 2$, that is if and only if $\alpha = a_2$; otherwise, by applying any other up-diagonal projector we leave perturbed coordinates with a non null value. Iterating this way, once we have applied 
$P^{(k)}_{\nearrow}(a_i)$ we have perturbed the $i+1$-th coordinate, and we perturb the $i+2$-th coordinate if and 
only if we apply $P^{(k)}_{\nearrow}(a_{i+1})$, leaving the previously perturbed coordinates with a non null value otherwise. This imply the thesis.\\
Now we prove (ii). For any $k\in\mathbb{N}_0$, any $\mathbf{x}\in\mathbb{R}^{1\times (k+1)}$ and each $1\leq i \leq k$ let $T_i:\mathbb{R}^{1\times (k+1)}\rightarrow\mathbb{R}^{1 \times (k+1)}$ be the linear transformation defined by  
 \[
 (T_i(\mathbf{x}))_j := 
 \left\{
 \begin{array}{c l}
  (\mathbf{x})_j & \text{ if } j\not\in\{i, i+1\}\\
\frac{(\mathbf{x})_j + (\mathbf{x})_{j+1}}{2} & \text{ if } j = i \\
\frac{(\mathbf{x})_{j-1} + (\mathbf{x})_{j}}{2} & \text{ if } j = i+1 \\
 \end{array}
 \right.
 \] \\
By induction on $k$, it follows that for any length $n\in\mathbb{N}_0$, any choice of indexes $1\leq i_1, \ldots, i_n\leq k$ and any $1\leq j\leq i\leq k+1$, 
 it holds, provided we define $T_{[n]}(e_1):=T_{i_n}\circ \cdots \circ T_{i_1} (e_1)$, that 
 \[ 
 \text{ if } \big(T_{[n]}(e_1)\big)_i > 0 \text{ then } \big(T_{[n]}(e_1)\big)_j \geq 
 2^{-k}
 \]
Thus, for every $1\leq i \leq k+1$ we have $\big(T_{[m+1]}(e_1)\big)_{i}\geq 2^{-k}$. Subsequent applications of $T$ can not 
decrease this bound, since $T$ replace contiguous coordinates pairs by their mid-point.
\end{proof}
\begin{lemma}\label{lemma:nonperturbazione}
Let $i\in\mathbb{N}$ be a non-negative integer. 
For any integer $k>i$ and any $S:=\{a_1, \ldots, a_k\}$ consider the \mon\, $A:=A[a_1, \ldots, a_k]$ and its associated projectors. 
Let $w\in L[a_1, \ldots, a_i]\setminus L[a_1, \ldots, a_{i+1}]$ be of length $|w|=n$. 
Then for every $d_1, d_2, \ldots, d_n\in\{\nearrow, \searrow\}$ and 
for every $m\in\mathbb{N}_0$ such that $i+2 \leq m \leq k+1$ it holds that 
\[
\left(\pi_0\cdot\prod_{j=1}^{n} P^{(k)}_{d_j}(w_j)\right)_{m} = 0
\]
Where, for notational convenience, $P_{\searrow}^{(k)}(c) = P_{\nearrow}^{(k)}(c) = I$ if $c\in\Sigma\setminus S$  
and $I$ is the identity matrix of size $(k+1)\times (k+1)$.
\end{lemma}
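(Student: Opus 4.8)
The plan is to track, as a function of the length $t$ of the prefix $w_1\cdots w_t$ already read, the \emph{support} of the intermediate state vector $\pi^{(t)}:=\pi_0\cdot\prod_{s=1}^{t}P^{(k)}_{d_s}(w_s)$, i.e.\ the set of coordinates on which $\pi^{(t)}$ can possibly be nonzero, and to bound it from above by an interval whose right endpoint records how much of the pattern $a_1a_2\cdots$ has already been matched. The structural fact I would first isolate is that, for both $d\in\{\nearrow,\searrow\}$ and every $\alpha\in S$, the matrix $P^{(k)}_d(\alpha)$ mixes coordinates only inside the pairs $\{j^{(\alpha)}_\ell,\,j^{(\alpha)}_\ell+1\}$ of Definition \ref{def:jindici} (those pairs with $a_{j^{(\alpha)}_\ell}=\alpha$); on every coordinate outside such a pair the new value depends only on the old value at that same coordinate --- it is left unchanged by $P^{(k)}_{\nearrow}(\alpha)$ and set to $0$ by $P^{(k)}_{\searrow}(\alpha)$, as is visible from Definitions \ref{operatori:compatta} and \ref{operatori:estesa} and from Lemma \ref{lemma:midpoints}. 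Consequently, right-multiplication by any $P^{(k)}_{d}(c)$ can propagate a nonzero value into a previously null coordinate $j+1$ only when $a_j=c$; since $P^{(k)}_{\searrow}$ merely annihilates the off-pair coordinates, this harmlessly shrinks the support and the same upper bound applies to both projectors.

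Next I would introduce the greedy matching length $\ell(t)$, defined as the largest $\ell$ such that $a_1a_2\cdots a_\ell$ is a (scattered) subsequence of $w_1\cdots w_t$, with $\ell(0)=0$, and prove by induction on $t$ that the support of $\pi^{(t)}$ is contained in $\{1,2,\ldots,\ell(t)+1\}$. The base case is immediate since $\pi^{(0)}=\pi_0=e_1$. For the inductive step, reading $c:=w_{t+1}$ and applying $P^{(k)}_{d_{t+1}}(c)$ can, by the structural fact above, add to the support only coordinates $j+1$ with $a_j=c$ and $j$ in the old support $\subseteq\{1,\ldots,\ell(t)+1\}$; the only way to exceed $\ell(t)+1$ is to reach $\ell(t)+2$, which requires $a_{\ell(t)+1}=c$. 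The crucial observation is that this is exactly the condition under which the greedy match grows by one, namely $\ell(t+1)=\ell(t)+[\,c=a_{\ell(t)+1}\,]$: appending a single letter increases the longest matched prefix by at most one, and does so precisely when the next required letter $a_{\ell(t)+1}$ equals $c$. Hence in either case the new support lies in $\{1,\ldots,\ell(t+1)+1\}$, closing the induction.

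Finally I would evaluate at $t=n$. The hypothesis $w\in L[a_1,\ldots,a_i]$ says $a_1\cdots a_i$ is a subsequence of $w$, so $\ell(n)\ge i$, while $w\notin L[a_1,\ldots,a_{i+1}]$ says $a_1\cdots a_{i+1}$ is not a subsequence of $w$, so $\ell(n)\le i$; therefore $\ell(n)=i$. The induction then yields that the support of $\pi_0\cdot\prod_{j=1}^{n}P^{(k)}_{d_j}(w_j)$ is contained in $\{1,\ldots,i+1\}$, which is precisely the assertion that its $m$-th coordinate vanishes for every $m$ with $i+2\le m\le k+1$. The boundary case $i=0$ (where $L[a_1,\ldots,a_0]=\Sigma^*$ and $w$ simply avoids $a_1$) is covered verbatim.

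I expect the main obstacle to be the careful bookkeeping in the inductive step: one must reason simultaneously about the support propagation of the linear maps and about the combinatorics of the greedy subsequence match, and in particular verify that the single new coordinate right-multiplication can create (namely $\ell(t)+2$) is gated by exactly the same letter-equality condition $a_{\ell(t)+1}=c$ that triggers an increment of $\ell$. Making the phrase \emph{the new value depends only on the old value at that coordinate} rigorous for $P^{(k)}_{\searrow}$ --- where off-pair coordinates are zeroed rather than fixed --- and confirming that this never enlarges the support, is the point that most needs explicit checking.
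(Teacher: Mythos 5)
Your proof is correct, but it is organized around a different induction than the paper's. The paper inducts on the pattern index $i$: it factors $w=w'a_{i+1}w''$ with $w'\in L[a_1,\ldots,a_i]\setminus L[a_1,\ldots,a_{i+1}]$ and $w''$ avoiding $a_{i+2}$, applies the inductive hypothesis to $w'$, observes that $P^{(k)}_{d}(a_{i+1})$ can disturb at most coordinate $i+2$, and then checks that reading $w''$ cannot touch coordinates $i+3,\ldots,k+1$. You instead induct on the length $t$ of the prefix read, carrying the invariant that the support of the intermediate state is contained in $\{1,\ldots,\ell(t)+1\}$, where $\ell(t)$ is the greedy matching length of $a_1a_2\cdots$ against $w_1\cdots w_t$; the lemma falls out at $t=n$ from $\ell(n)=i$. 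Both arguments rest on the same structural fact about the projectors (a null coordinate $j+1$ can only be perturbed when the letter $a_j$ is read, and $P^{(k)}_{\searrow}$ only annihilates off-pair coordinates, so it never enlarges the support). What your version buys is a single uniform loop invariant that proves the statement for all $i$ simultaneously and makes the gating condition $a_{\ell(t)+1}=c$ explicit, avoiding the factorization step; the paper's version stays closer to the statement of the lemma and to the hierarchy of languages $L[a_1,\ldots,a_i]$, at the cost of a separate argument for the tail $w''$. Your closing observation that appending one letter increases the greedy match by at most one, and does so exactly when the new coordinate $\ell(t)+2$ could be created, is precisely the point that makes the invariant close, and you have stated it correctly.
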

\begin{proof}
For every $x\in\Sigma^*$ of length $|x|$ and every $\mathbf{d}=(d_1, \ldots, d_{|x|})\in\{\searrow, \nearrow\}^{|x|}$, we shall denote $P_{\mathbf{d}}(x):=P^{(k)}_{d_1}(x_1)\cdots P^{(k)}_{d_{|x|}}(x_{|x|})$.
We proceed by induction on $i\in\mathbb{N}$. 
If $i=0$, then $w\in \Sigma^*\setminus L[a_1]$ imply that $w$ does not contains the letter $a_1$. Then, because of 
Definition \ref{operatori:estesa}, for every $\mathbf{d}$, it holds that 
\[\text{ if } q\geq 2 \text{ then } (\pi_0 P_{\mathbf{d}}(w))_q=0\] since 
the projectors $P_{\nearrow}(a_1), P_{\searrow}(a_1)$ of $a_1$ are never applied and they are the only projectors of $A$ which can perturb the second coordinate from a null to a non non-null value. This proves the base case of induction.\\\\
Let us suppose that the thesis is true for $i\geq 0$, we analyze the case $i+1$. 
Then $w\in L[a_1, \ldots, a_{i+1} ]\setminus L[a_1, \ldots, a_{i+2}]$ so there exist words $w', w''$ such that 
$w=w'a_{i+1}w''$ where $w'\in L[a_1, \ldots, a_{i}]\setminus L[a_1, \ldots, a_{i+1}]$ and $w''$ does not contains $a_{i+2}$.
By induction hypothesis, for every $\mathbf{d}$ and $k>i$ it holds that 
\[
\text{ if } i+2 \leq q \leq k+1 \text{ then }
\left(\pi_0 P^{(k)}_{\mathbf{d}}(w')\right)_{q} = 0
\]
Then for every $\mathbf{d}$ and $d\in\{\searrow, \nearrow\}$ it holds that 
\[\text{ if } i+3 \leq q \leq k+1 \text{ then } 
\left(\pi_0 P^{(k)}_{\mathbf{d}}(w')P^{(k)}_d(a_{i+1})\right)_{q} = 0\] 
since the projector $P^{(k)}_{d}(a_{i+1})$ disrupt, at most, the {i+2} coordinate 
$(\pi_0 P^{(k)}_{\mathbf{d}}(w')P^{(k)}_{d}(a_{i+1}))_{i+2}$.\\
Since $w''$ does not contains the letter $a_2$, then for every $\mathbf{d'}, \mathbf{d''}$ it holds that 
\[
\text{ if } i+3 \leq q \leq k+1 \text{ then } 
\left(\pi_0 P^{(k)}_{\mathbf{d'}}(w')P^{(k)}_d(a_{i+1})P^{(k)}_{\mathbf{d''}}(w'') \right)_{q} = 0
\]
This concludes the proof.
\end{proof}
Finally, we are in the position to characterize $L[a_1, \ldots, a_k]$ language recognition by $A[a_1, \ldots, a_k]$ automata. We also observe that this generalizes Example 3 in \cite{BMP10}.
\begin{theorem}\label{th:PTtoMON}
The automaton $A[a_1, \ldots, a_k]$ recognizes $L[a_1, \ldots,
a_k]$ with cutpoint $\lambda = \frac{1}{2^{2k+1}}$ isolated by 
$\delta = \frac{1}{2^{2(k+1)}}$.
\end{theorem}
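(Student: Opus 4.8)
The plan is to evaluate the acceptance probability $p_A(w)$ of $A := A[a_1, \ldots, a_k]$ directly through Proposition \ref{prop:sumofsquares}, and then to split into the two cases $w \in L[a_1, \ldots, a_k]$ and $w \notin L[a_1, \ldots, a_k]$, feeding each case into one of the two preparatory lemmas. First I would specialize the formula of Proposition \ref{prop:sumofsquares} to the automaton of Definition \ref{def:automatapicewise}. Since the accepting spectrum is a singleton whose projector $P^{(k)}_{\text{acc}} = (e_{k+1})^T e_{k+1}$ reads off the last coordinate, and since each observable $O_\alpha^{(k)}$ with $\alpha \in S$ contributes the two projectors $P^{(k)}_{\nearrow}(\alpha), P^{(k)}_{\searrow}(\alpha)$ while each $O_c^{(k)}$ with $c \in \Sigma \setminus S$ contributes only the identity, the probability collapses, for $w = w_1\cdots w_n$, to
\[
p_A(w) = \sum_{\mathbf{d}} \Big| \Big( \pi_0 \prod_{i=1}^n P^{(k)}_{d_i}(w_i) \Big)_{k+1} \Big|^2,
\]
where $\mathbf{d}$ ranges over all assignments of $\{\nearrow, \searrow\}$ to the positions carrying a letter of $S$ (with the identity inserted elsewhere). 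All matrices and the initial vector $e_1$ are real, so each summand is the square of a real number and, crucially, the whole expression is a sum of non-negative terms.

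For the acceptance direction, assume $w \in L[a_1, \ldots, a_k]$. I would retain only the summand corresponding to the all-up choice $\mathbf{d} = (\nearrow, \ldots, \nearrow)$ and discard the rest, which is legitimate precisely because the remaining terms are non-negative. For this term, Lemma \ref{lemma:ampiezzapositivalimitata}(i) guarantees that the last coordinate of $\pi_0 \prod_i P^{(k)}_{\nearrow}(w_i)$ is strictly positive, and then Lemma \ref{lemma:ampiezzapositivalimitata}(ii), applied with $j = i = k+1$, upgrades this to the quantitative bound $\geq 2^{-k}$. Squaring yields $p_A(w) \geq 2^{-2k}$. A direct comparison with the stated constants, using $\lambda = 2^{-(2k+1)}$ and $\delta = 2^{-(2k+2)}$, gives $2^{-2k} = \lambda + 2\delta$, so $p_A(w) \geq \lambda + 2\delta > \lambda$, and in particular $|p_A(w) - \lambda| \geq \delta$.

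For the rejection direction, assume $w \notin L[a_1, \ldots, a_k]$. Adopting the convention $L[a_1, \ldots, a_0] := \Sigma^*$ and using the chain of inclusions $L[a_1, \ldots, a_{i+1}] \subseteq L[a_1, \ldots, a_i]$, I would take $i$ to be the largest index with $w \in L[a_1, \ldots, a_i]$; then $0 \leq i < k$ and $w \in L[a_1, \ldots, a_i] \setminus L[a_1, \ldots, a_{i+1}]$, which is exactly the hypothesis of Lemma \ref{lemma:nonperturbazione}. Since $k > i$ forces $i + 2 \leq k+1$, the lemma makes the $(k+1)$-th coordinate of $\pi_0 \prod_j P^{(k)}_{d_j}(w_j)$ vanish for every choice $\mathbf{d}$; hence every summand in the formula above is zero and $p_A(w) = 0$. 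Then $\lambda - p_A(w) = \lambda = 2\delta \geq \delta$, so again the cut-point is isolated, and $p_A(w) = 0 < \lambda$.

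Combining the two cases gives $L_A = \{w \mid p_A(w) > \lambda\} = L[a_1, \ldots, a_k]$ together with $|p_A(w) - \lambda| \geq \delta$ for every $w \in \Sigma^*$, which is the claim. I expect the only genuine obstacle to be the clean specialization of Proposition \ref{prop:sumofsquares} in the first step — correctly accounting for the two-projector observables on $S$, the identity observables off $S$, and the last-coordinate accepting projector — since once $p_A(w)$ is written as a non-negative sum indexed by diagonal choices, Lemmas \ref{lemma:ampiezzapositivalimitata} and \ref{lemma:nonperturbazione} do essentially all the remaining work and the verification reduces to elementary arithmetic with powers of two. I also note that the argument in fact yields the stronger isolation $2\delta$, of which the stated $\delta$ is a valid consequence.
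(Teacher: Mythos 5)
Your proposal is correct and follows essentially the same route as the paper's own proof: specializing Proposition \ref{prop:sumofsquares} to the automaton of Definition \ref{def:automatapicewise}, keeping only the all-$\nearrow$ summand and invoking Lemma \ref{lemma:ampiezzapositivalimitata} to get $p_A(w)\geq 2^{-2k}$ when $w\in L[a_1,\ldots,a_k]$, and invoking Lemma \ref{lemma:nonperturbazione} to get $p_A(w)=0$ otherwise. Your additional details (the explicit choice of the maximal $i$ with $w\in L[a_1,\ldots,a_i]$ before applying Lemma \ref{lemma:nonperturbazione}, and the arithmetic $2^{-2k}=\lambda+2\delta$) only make the argument more complete than the paper's version.
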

\begin{proof} We seek a lower-bound for $p_A(w)$, for any $w\in\Sigma^*$. 
We organize the proof in two cases.\newline
\emph{Case 1: $w\in L[a_1, \ldots, a_k]$}. 
From Lemma \ref{lemma:ampiezzapositivalimitata}, item (i), it follows that 
\[\text{ if } w \in L[a_1, \ldots, a_k] \text{ then } \Big(\pi_0 \cdot\prod_{i=1}^{n}P^{(k)}_{\nearrow}(w_i) \Big)_{k+1} > 0 \]  
We also recall that, according to Proposition \ref{prop:sumofsquares}, we have:
\[
p_A(w) = \sum_{r_j \in F}\sum_{j_1, \ldots, j_n} \big|\big| \pi_0 \cdot \Big(\prod_{i=1}^{n}P^{(k)}_{j_i}(w_i)\Big)\cdot P^{(k)}_{r_{j}}(\#)\big|\big|_2^2
\] 
Then, the following chain of equations and inequalities holds true.\\
\[\begin{array}{c c c c}
p_A(w) &=& \sum_{r_j \in F}\sum_{j_1, \ldots, j_n} \big|\big|
\pi_0 \cdot\Big(\prod_{i=1}^{n} P^{(k)}_{j_i}(w_i)\Big)\cdot P^{(k)}_{r_j}(\#)\big|\big|_2^2 & \\
&=& \sum_{j_1, \ldots, j_n\in \{\nearrow, \searrow\}} \big|\big|
\pi_0\cdot\Big(\prod_{i=1}^{n} P^{(k)}_{j_i}(w_i)\Big)\cdot P^{(k)}_{\text{acc}}(\#)\big|\big|_2^2 & \\
&\geq & \big|\big| \pi_0\cdot \Big(\prod_{i=1}^{n}P^{(k)}_{\nearrow}(w_i)\Big)\cdot P^{(k)}_{\text{acc}}(\#)\big|\big|_2^2 & ||\cdot||_2^2\geq 0 
 \\
& = & \Big(\pi_0 \cdot \prod_{i=1}^{n}P^{(k)}_{\nearrow}(w_i)\Big)_{k+1}^2
& \text{ by definition of } P^{(k)}_{\text{acc}}(\#)
  \\
& \geq & 2^{-2k} & \text{ Lemma \ref{lemma:ampiezzapositivalimitata}}\\
\end{array}
\]
\emph{Case 2: $w\not\in L[a_1, \ldots, a_k]$}. 
From Lemma $\ref{lemma:nonperturbazione}$ we have the following implication: 
\[
\text{if } w \not\in L[a_1, \ldots, a_k] \text{ then } \sum_{j_1, \ldots, j_n\in \{\nearrow,\searrow\}} \big|\big|
 \Big(\pi_0\cdot \prod_{i=1}^{n} P^{(k)}_{j_i}(w_i) \Big)_{k+1} \big|\big|_2^2 = 0\] hence  
\[\text{ if } w \not\in L[a_1, \ldots, a_k] \text{ then } p_A(w)=0 \]
This imply the thesis.
\end{proof}
Since the class \lipt\ is the boolean closure of languages of the form $L[a_1, \ldots, a_k]$, and $\lmo(\Sigma)$ is a boolean algebra, Theorem \ref{th:PTtoMON} implies that all literally idempotent piecewise testable languages can be recognized by \mons.
This concludes the proof of our main result, as stated in Theorem \ref{th:MAIN}.
\section{Algorithmic and logical conclusions}
Theorem \ref{th:MAIN} allows us to prove the existence of a polynomial time algorithm for deciding $\lmo(\Sigma)$ membership. In this way, we solve the polynomial-time 
 decidability question for $\lmo(\Sigma)$.
\begin{theorem}\label{th:LMOalgo}
Given a regular language $L\in\Sigma^*$, the problem of determining whether $L\in\lmo(\Sigma)$ is decidable in time $O((|Q|+|\Sigma|)^2)$, where $|Q|$ is the size of the minimal deterministic automaton for $L$.
\end{theorem}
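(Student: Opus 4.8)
The plan is to reduce the membership question to combinatorial conditions on the minimal deterministic automaton $A_L=\langle \Sigma,Q,\delta,q_0,F\rangle$ of $L$ and to verify each condition within the stated budget. By the equivalence of items (i), (ii) and (iii) of Theorem~\ref{th:MAIN}, we have $L\in\lmo(\Sigma)$ if and only if $L$ is literally idempotent and piecewise testable, i.e. if and only if its syntactic monoid is $J$-trivial and literally satisfies $x^2=x$. Accordingly I would run two tests on $A_L$ — one for literal idempotency, one for piecewise testability — and accept exactly when both succeed. (If the input automaton is not already minimal we first apply Hopcroft minimisation, which does not affect the asymptotics stated in terms of $|Q|$.)

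\textbf{Literal idempotency.} I claim $L\in\id$ if and only if $\delta(\delta(q,a),a)=\delta(q,a)$ for every $q\in Q$ and every $a\in\Sigma$; that is, every letter acts as an idempotent self-map on the states of $A_L$. For necessity, fix $q=\delta(q_0,x)$ and a letter $a$: for each suffix $z$, literal idempotency gives $xaz\in L\iff xa^2z\in L$, hence $\delta(\delta(q,a),z)\in F\iff\delta(\delta(\delta(q,a),a),z)\in F$ for all $z$; since $A_L$ is minimal, the Myhill--Nerode states $\delta(q,a)$ and $\delta(\delta(q,a),a)$ must coincide. Sufficiency is a one-line induction showing $\delta(q_0,xa^2y)=\delta(q_0,xay)$ for all $x,y$. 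This test is one scan of the transition table, at cost $O(|Q|\cdot|\Sigma|)$.

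\textbf{Piecewise testability.} I would first apply the cheap necessary filter furnished by finite variation. By Proposition~\ref{prop:FVRtriviale}, a $J$-trivial — hence $R$-trivial — syntactic monoid forces $A_L$ to be partially ordered: deleting self-loops must leave an acyclic graph, equivalently there is a total order on $Q$ with $qa\geq q$ for all $q,a$. This is decided by a single strongly-connected-components computation in time $O(|Q|+|Q|\cdot|\Sigma|)$; if it fails we reject. On a partially ordered $A_L$, piecewise testability (equivalently $J$-triviality, by Simon's theorem \cite{Si75}) is equivalent to the standard confluence condition: for every state and every sub-alphabet, the states reachable by words over that sub-alphabet possess a unique $\le$-maximal element. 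Here I would exploit literal idempotency, which by the previous test makes each letter $a$ a retraction of $Q$ onto its set of fixed points; consequently the $\le$-maximal states reachable over a sub-alphabet $\Gamma$ are exactly the common fixed points of the letters of $\Gamma$ in the corresponding region, and the enumeration of the $2^{|\Sigma|}$ sub-alphabets is replaced by a propagation comparing, for ordered pairs of states, their images under the $|\Sigma|$ retractions. Specialised to literally idempotent languages this confluence is precisely the effective characterization underlying Theorem~\ref{thm:klimapolak} of Kl\'ima and Pol\'ak \cite{KP08}.

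The hard part is to carry out this confluence check within $O((|Q|+|\Sigma|)^2)$, rather than the $O(|Q|^2\cdot|\Sigma|)$ one would pay by running a product-automaton ($L$-triviality) test: the difficulty is to avoid multiplying the $O(|Q|^2)$ pair-bound by the alphabet size. I expect this to be achievable by precomputing, once per letter, the partial retraction it induces on $Q$ (total cost $O(|Q|\cdot|\Sigma|)$), ordering $Q$ topologically, and then certifying uniqueness of maximal states by a monotone sweep in which each ordered pair of states is examined a bounded number of times while the per-letter work is charged to the $O(|\Sigma|^2)$ and $O(|Q|\cdot|\Sigma|)$ terms rather than to $O(|Q|^2)$. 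Granting this, the three tests cost $O(|Q|\cdot|\Sigma|)$, $O(|Q|+|Q|\cdot|\Sigma|)$ and $O((|Q|+|\Sigma|)^2)$ respectively, so by Theorem~\ref{th:MAIN} the overall procedure decides $L\in\lmo(\Sigma)$ in time $O((|Q|+|\Sigma|)^2)$, as asserted in Theorem~\ref{th:LMOalgo}.
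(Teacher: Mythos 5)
Your overall strategy is exactly the paper's: invoke Theorem~\ref{th:MAIN} to reduce membership in $\lmo(\Sigma)$ to the conjunction of two decidable properties of the minimal DFA (literal idempotency and piecewise testability), and verify each within the stated budget. Your literal-idempotency test is correct and complete: the characterization $\delta(\delta(q,a),a)=\delta(q,a)$ for all reachable $q$ and all $a$, with necessity from Myhill--Nerode minimality and sufficiency by induction, is sound, and the $O(|Q|\cdot|\Sigma|)$ scan fits the bound. The $R$-triviality filter via strongly connected components is likewise fine and mirrors Proposition~\ref{prop:FVRtriviale}.

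The genuine gap is in the piecewise-testability step, and you have flagged it yourself: the entire weight of the theorem's complexity claim rests on deciding $J$-triviality of the syntactic monoid (equivalently, Simon's confluence condition on the partially ordered DFA) in time $O((|Q|+|\Sigma|)^2)$, and your argument for this is an expectation (``I expect this to be achievable\dots Granting this\dots'') rather than a proof. The naive check quantifies over all $2^{|\Sigma|}$ sub-alphabets, and the sketch of a ``monotone sweep'' that charges per-letter work to the $O(|\Sigma|^2)$ and $O(|Q|\cdot|\Sigma|)$ terms is not developed far enough to certify the bound; it is precisely the nontrivial content of the claim. The paper closes this gap not by constructing such an algorithm but by citing Trahtman's result \cite{T01}, which gives an $O((|Q|+|\Sigma|)^2)$ test for piecewise testability of a DFA as a black box. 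So either supply the missing algorithmic argument in full, or replace your third paragraph with an appeal to that known result; as written, the quadratic bound is asserted where it most needs to be proved.
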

\begin{proof}
This algorithm first constructs the minimal deterministic automaton $A_L$ for $L$ in time $O(|Q|\log(|Q|))$ as shown in \cite{H71}. Then, in time $O(|Q| + |\Sigma|)$, it checks whether $L$ is literally idempotent by visiting all the vertices and edges in the graph of $A_L$. Finally, it verifies whether $L$ is piecewise testable in time $O((|Q|+|\Sigma|)^2)$ with the technique shown in \cite{T01}. The fact that $\lmo(\Sigma)=\lipt(\Sigma)$ completes the proof.
\end{proof}
We conclude this last section with some remarks from mathematical logic. The piecewise testable languages are known in literature to be exactly those languages definable in the boolean closure of the existential first-order logic $\Sigma_1[<]$, denoted by $\text{Bool}(\Sigma_1[<])$ \cite{Si75}. 
We would like to characterize it here the logical definability of \lmo\, languages as well. With this in mind, we set a first-order 
syntax with atomic formulae of the following type:
\[
\lambda(x)=a \text{ and }  x < y \text{ and } \top
\] 
where $x,y$ are variables, $a\in\Sigma$ is a letter and $\top$ is a constant which means \emph{true}. If $\varphi, \psi$ are 
first-order formulae, then 
\[
\neg\varphi \text{ and } \varphi\vee \psi \text{ and } \exists x \varphi
\] are first-order formulae as well. In order to introduce a semantics in terms of words of $\Sigma^*$, we let the variables range 
over positions of words. A variable not quantified is called a \emph{free} variable, then a \emph{sentence} is a formula without 
free variables. Let the free variables of the formula $\varphi$ range in a subset of $\{x_1, \ldots, x_n\}$. Each $x_i$ is associated 
with a position $j_i$ of $w$. In this way, we can consider the truth value of $\varphi$ and denote it by 
$w, j_1, \ldots, j_n \models\varphi$. To any first-order sentence $\varphi$ we associate the language 
$L(\varphi) = \{w\in\Sigma^* | w\models \varphi \}$. We recall that the existential first order fragment $\Sigma_1[<]$ is given by 
first-order formulae in which we allow just one block of existential quantifiers and no blocks on universal quantifiers, in the  
prenex-normal form. Now let us consider the language $L[a_1, \ldots, a_k]$ given by \[\Sigma^* a_1 \Sigma^* a_2 \Sigma^*\cdots \Sigma^* a_k \Sigma^*, \text{ for } a_1, a_2, \ldots, a_k\in\Sigma \text{ and } 
 a_i \neq a_{i+1} \text{ for every } 1 \leq i < k \]
 then we shall define the following $\Sigma_1[<]$ formulae, we call them the \emph{easy} 
 formulae of the existential first-order logic: 
\[
\varphi[a_1, \ldots, a_k] := \exists x_1\cdots \exists x_k \left(\bigwedge_{i=1}^{k-1} x_i < x_{i+1} \wedge \bigwedge_{i=1}^{k} \lambda(x_i) = a_i \wedge 
\bigwedge_{i=1}^{k-1} \neg(\lambda(x_i) = a_{i+1})\right)
\]
Then clearly $\varphi[a_1, \ldots, a_k]\in\Sigma_1[<]$ and, by induction on $k$, it is possible to prove that $L(\varphi[a_1, \ldots, a_k])=L[a_1, \ldots, a_k]$. This proves the following logical characterization of \lmo\, :
\begin{proposition} $\lmo$ is exactly the class of languages definable by boolean combinations of \emph{easy} formulae 
of the existential first-order logic $\Sigma_1[<]$.\\
\end{proposition}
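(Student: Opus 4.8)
The plan is to reduce the proposition to the single identity $L(\varphi[a_1, \ldots, a_k]) = L[a_1, \ldots, a_k]$ relating each \emph{easy} formula to the corresponding generator language, and then to transport the boolean structure through the semantic map $\varphi \mapsto L(\varphi)$. Once the identity is in hand, the characterization follows at once from Theorem \ref{th:MAIN}, item (iv), which asserts that $\lmo$ is precisely the boolean closure of the languages $L[a_1, \ldots, a_k]$ with $a_i \neq a_{i+1}$.

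First I would prove the key identity. For the inclusion $L[a_1, \ldots, a_k] \subseteq L(\varphi[a_1, \ldots, a_k])$, take any $w \in L[a_1, \ldots, a_k]$; by definition $w = u_1 a_1 u_2 a_2 \cdots u_k a_k u_{k+1}$ for suitable $u_1, \ldots, u_{k+1} \in \Sigma^*$. Letting $p_i$ be the position occupied by the displayed occurrence of $a_i$, we have $p_1 < \cdots < p_k$ and $\lambda(p_i) = a_i$ for every $i$, so interpreting $x_i$ as $p_i$ satisfies the first two conjuncts of $\varphi[a_1, \ldots, a_k]$. The third conjunct $\bigwedge_{i=1}^{k-1} \neg(\lambda(x_i) = a_{i+1})$ holds as well, since $\lambda(p_i) = a_i \neq a_{i+1}$ by the standing hypothesis $a_i \neq a_{i+1}$; hence $w \models \varphi[a_1, \ldots, a_k]$. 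Conversely, if $w \models \varphi[a_1, \ldots, a_k]$, the witnessing positions $p_1 < \cdots < p_k$ with $\lambda(p_i) = a_i$ exhibit $w$ as a word in $\Sigma^* a_1 \Sigma^* \cdots \Sigma^* a_k \Sigma^* = L[a_1, \ldots, a_k]$, upon reading the letters strictly between consecutive witnesses as the intermediate factors. This direct argument already suffices; alternatively one can run the induction on $k$ suggested in the excerpt, peeling off the first witness and invoking the inductive hypothesis on the corresponding suffix. I would also record explicitly that the third conjunct is semantically redundant under $a_i \neq a_{i+1}$, so it does not change $L(\varphi[a_1, \ldots, a_k])$, while keeping the matrix quantifier-free and the whole formula confined to a single existential block, so that indeed $\varphi[a_1, \ldots, a_k] \in \Sigma_1[<]$.

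Next I would assemble the characterization. The semantic map $\varphi \mapsto L(\varphi)$ respects boolean operations: $L(\neg \varphi) = \Sigma^* \setminus L(\varphi)$, $L(\varphi \vee \psi) = L(\varphi) \cup L(\psi)$, and hence $L(\varphi \wedge \psi) = L(\varphi) \cap L(\psi)$. Consequently a language is definable by a boolean combination of \emph{easy} formulae if and only if it lies in the boolean closure of the family $\{ L(\varphi[a_1, \ldots, a_k]) \}$, which by the identity above coincides with the boolean closure of $\{ L[a_1, \ldots, a_k] : a_i \neq a_{i+1} \}$. By Theorem \ref{th:MAIN} this boolean closure is exactly $\lmo$, which proves the proposition.

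The main obstacle here is bookkeeping rather than depth: one must state the semantic correspondence between boolean operations on formulae and on languages relative to a fixed alphabet $\Sigma$, so that complementation is taken inside the same $\Sigma^*$, and one must dispatch the degenerate cases $k = 0$, where $\varphi[\,] = \top$ defines $\Sigma^*$, and $k = 1$, where the empty word correctly fails $\varphi[a_1]$ because it has no positions. Beyond these edge cases and the already-noted redundancy of the third conjunct, the argument is routine.
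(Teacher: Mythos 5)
Your proposal is correct and follows essentially the same route as the paper: establish the identity $L(\varphi[a_1,\ldots,a_k]) = L[a_1,\ldots,a_k]$, transport boolean combinations through the semantic map, and conclude via Theorem \ref{th:MAIN}, item (iv). The only difference is that you prove the key identity directly (and correctly note the redundancy of the third conjunct under $a_i \neq a_{i+1}$) where the paper merely sketches an induction on $k$, so your write-up supplies details the paper leaves implicit rather than taking a different approach.
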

Also, we would like to take into account the \emph{linear temporal logic without the next operator} (LTLWN). 
The syntax and semantics of LTLWN is the same as the already presented semantics for first-order formulae, 
but we also consider the
$\emph{until}$ binary operator $\mathbb{U}$ whose semantics is defined as follows.
\[
w \models \varphi_1 \mathbb{U} \varphi_2 \iff \exists i\in\mathbb{N}\text{ s.t. } w_i \models \varphi_2\text{ and }
\forall 1 \leq j < i : w_j \models \varphi_1
\] Let $\Sigma$ be an alphabet and let $\Gamma_1, \ldots, \Gamma_k\subseteq \Sigma$ be non-empty sets. 
Define $\varphi_{\Gamma_i}:=\bigvee_{b\in\Gamma_i} b$ and $\varphi_{\epsilon}=\neg\varphi_{\Sigma}$. 
As Kl\'ima and Pol\'ak in \cite{KP08}, we shall consider the following formulae of LTLWN $\varphi[B_1, \ldots, B_k]$ and call them the \emph{easy} formulae of LTLWN:
\[
\varphi([\Gamma_1, \ldots, \Gamma_k]) := \varphi_{\Gamma_1}\mathbb{U}(\varphi_{\Gamma_2}\mathbb{U}(\cdots (\varphi_{\Gamma_n}\mathbb{U}\varphi_{\epsilon}))\cdots )
\]
It is possible to prove that $L(\varphi_{\Gamma_i}) = \Gamma_i\Sigma^*$ and that $\varphi_{\epsilon} = \epsilon$. 
Moreover that $L([\Gamma_1, \ldots, \Gamma_k]) = \Gamma^*_1\cdots \Gamma^*_k$. This imply that $L$ is definable 
as a boolean combination of $\emph{easy}$ formulae of LTLWN if and only if it is a 
literally idempotent piecewise testable language, as proved in \cite{KP08}. 
We get the following immediate corollary, which concludes our work on \mons\,.
\begin{proposition}
$\lmo$ is exactly the class of languages definable by boolean combinations of $\emph{easy}$ formulae of the linear temporal 
logic without the next operator LTLWN.
\end{proposition}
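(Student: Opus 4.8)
The plan is to reduce the statement to the main characterization theorem together with the logical facts recorded in the preceding paragraph. By Theorem \ref{th:MAIN}, a language $L \subseteq \Sigma^*$ belongs to $\lmo$ if and only if $L \in \lipt$, that is, if and only if $L$ lies in the boolean closure of the generating languages $\Sigma^* a_1 \Sigma^* \cdots \Sigma^* a_k \Sigma^*$ with $a_i \neq a_{i+1}$. It therefore suffices to show that the boolean combinations of \emph{easy} LTLWN formulae define exactly the class $\lipt$; the proposition then follows by transitivity.

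First I would pin down the semantics of the easy formulae. Using the definition $\varphi_{\Gamma_i} := \bigvee_{b \in \Gamma_i} b$ and the suffix semantics of the atomic formulae, one checks directly that $L(\varphi_{\Gamma_i}) = \Gamma_i \Sigma^*$ and that $\varphi_{\epsilon} = \neg \varphi_{\Sigma}$ is satisfied only by the empty word, so $L(\varphi_{\epsilon}) = \{\epsilon\}$. Then, unwinding the nested \emph{until} operators in $\varphi([\Gamma_1, \ldots, \Gamma_k]) := \varphi_{\Gamma_1}\mathbb{U}(\cdots(\varphi_{\Gamma_k}\mathbb{U}\varphi_{\epsilon})\cdots)$ and arguing by induction on $k$, I would establish that $L(\varphi([\Gamma_1, \ldots, \Gamma_k])) = \Gamma_1^* \cdots \Gamma_k^*$. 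The inductive step uses the semantics of $\mathbb{U}$ to split a word into a prefix lying in $\Gamma_1^*$ followed by a suffix recognized by the shorter nested formula.

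With the languages defined by easy formulae identified as the products of stars of letter-subsets $\Gamma_1^* \cdots \Gamma_k^*$, the remaining and central ingredient is the result of Kl\'ima and Pol\'ak \cite{KP08}: the boolean closure of the languages $\Gamma_1^* \cdots \Gamma_k^*$ (with each $\Gamma_i \subseteq \Sigma$ non-empty) coincides with the boolean closure of the piecewise generators $\Sigma^* a_1 \Sigma^* \cdots \Sigma^* a_k \Sigma^*$ subject to $a_i \neq a_{i+1}$, i.e.\ with $\lipt$. Invoking this, the languages definable by boolean combinations of easy LTLWN formulae are exactly the languages in $\lipt$.

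Finally, combining the two equalities, namely $\lmo = \lipt$ from Theorem \ref{th:MAIN} and the coincidence of $\lipt$ with the boolean combinations of easy LTLWN formulae from the previous step, yields that $\lmo$ is precisely the class of languages definable by boolean combinations of easy formulae of LTLWN. The main obstacle is the equivalence between the product-of-stars languages and the literally idempotent piecewise testable languages; however, this is exactly the content of the Kl\'ima--Pol\'ak theorem, which may be invoked directly, while the other steps are routine semantic verifications.
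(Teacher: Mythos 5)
Your proposal is correct and follows essentially the same route as the paper: identify the languages of the \emph{easy} LTLWN formulae as the products $\Gamma_1^*\cdots\Gamma_k^*$, invoke the Kl\'ima--Pol\'ak result that boolean combinations of these coincide with \lipt, and conclude via Theorem \ref{th:MAIN} that $\lmo = \lipt$. The only difference is that you spell out the routine semantic verifications (the induction on $k$ unwinding the nested until operators) that the paper leaves implicit.
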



\appendix
\section{Appendix A}
\subsection{Proof of Proposition \ref{prop:linearRep} (i)}
The formal power series generated by a \mon\, $A$ on $\Sigma$ with $m$ states admits a linear representation 
$\langle \xi, (P(c))_{c\in\Sigma}, \eta \rangle$ where $||\xi|| = 1$, $P(c)$ is a projector for all $c\in\Sigma$ and $||\eta||\leq\sqrt{m}$.
\begin{proof}
The proof is almost identical to the one given over $(\Sigma, E)$ in \cite{BMP10}.
Given a matrix $M\in\mathbb{C}^{m\times m}$ with rows $r_1, \ldots, r_m$, let $\varphi(M)$ 
be the vector $(r_1, \ldots, r_m)\in\mathbb{C}^{1\times m^2}$. 
Let $A$ be a \mon\, over $\Sigma$ with with $\pi_0$ as initial state, the observable $O_c$ described by projectors 
$P_1(c), \ldots, P_k(c)$ for any $c\in\Sigma$, the observable $O_{\sharp}$ described by projectors 
$P_1(\sharp), \ldots, P_s(\sharp)$ and $F\subseteq\{r_1, \ldots, r_s\}$. 
For any $w:=w_1\cdots w_n\in\Sigma^*$, Proposition \ref{prop:sumofsquares} imply 
\[p_A(w) = \text{Tr}\left(\sum_{r_j\in F}\sum_{k_1, \ldots, k_n} 
||\pi_0 P_{k_1}(w_1)\cdots P_{k_n}(w_n)P_j(\sharp) ||^2\right)\]
Consider the formal series $\psi:\Sigma^*\rightarrow \mathbb{C}$ whose linear representation is given by: 
\[ \left(\pi_0 \otimes \pi_0^*, \Big(P(c):=\sum_j P_j(c)\otimes P_j(c)^*\Big)_{c\in\Sigma}, 
\eta := \varphi(I)\cdot \sum_{r_k\in F} P_k(\sharp) \otimes P_k(\sharp)\right)\] where $I$ is the identity matrix. Then 
\[
\begin{array}{c c c}
\psi(w_1\cdots w_n) &=&  (\pi_0\otimes \pi_0^*)(P(w_1)\ldots P(w_n))(\varphi(I)\cdot \sum_{r_k\in F}P_k(\sharp)\otimes P_k(\sharp)^*)^T \\ 
 &=& \sum_{r_k\in F, j_1, \ldots, j_n} \Big[ \pi_0 P_{j_1}(w_1)\cdots P_{j_n}(w_n)P_{r_k}(\sharp)\otimes  \\ 
 & & \pi_0^* P_{j_1}(w_1)^*\cdots P_{j_n}(w_n)^*P_{r_k}(\sharp)^* \Big]\varphi(I)^T\\
 &=& \sum_{r_k\in F, j_1, \ldots, j_n}\sum_{j}  (\pi_0 P_{j_1}(w_1)\cdots P_{j_n}(w_n)P_k(\sharp))_j\cdot \\
 & & (\pi_0 P_{j_1}(w_1) \cdots P_{j_n}(w_n)P_k(\sharp))^*_j\\
 &=&  \sum_{r_k\in F, j_1, \ldots, j_n} || \pi_0 P_{j_1}(w_1)\cdots P_{j_n}(w_n)P_k(\sharp)||^2 \\
 &=& p_A(w_1\cdots w_n) \\
\end{array}
\]
Observe that $P(c):=\sum_j P_j(c)\otimes P_j(c)^*$ is Hermitian and idempotent, hence it is a projector. Observe further that $|| \xi ||=||\pi_0\cdot \pi_0^*|| = || \pi_0 ||\cdot || \pi_0 ||=1$ and 
$||\eta|| = ||\varphi(I)\cdot \sum_{r_k\in F} P_k(\sharp)\otimes P_k(\sharp)^* || \leq ||\varphi(I)|| = \sqrt{m}$.
\end{proof}
\subsection{Proof of Proposition \ref{prop:linearRep} (ii)}
The class $\pmo(\Sigma)$ is closed under the operations of Hadamard product and $f$-complement.
\begin{proof}
The proof is almost identical to the one given over $(\Sigma, E)$ in \cite{BMP10}.
Let $A,A'$ be two \mons\, $A=\langle \Sigma\cup\{\#\}, (O_c)_{c\in\Sigma\cup\{\#\}, \pi, F} \rangle$ and 
$A'=\langle \Sigma\cup\{\#\}, (O'_c)_{c\in\Sigma\cup\{\#\}, \pi', F'} \rangle$. Without loss of generality, we assume that 
$V(O_c\otimes O'_c) = V(O_c)\otimes V(O'_c)$ for all $c\in\Sigma\cup\{\#\}$. Let us consider the automaton
\[
A'':=\langle \Sigma\cup\{\#\}, (O_c\otimes O'_c)_{c\in\Sigma\cup\{\#\}}, \pi\otimes \pi', F\times F' \rangle
\]
For any $w\in\Sigma^*$, by induction on the length $n:=|w|$ and the basic property that 
$(A\otimes B)(C\otimes D) = AC\otimes BD$, we see that $\sigma''(w)=\sigma(w)\otimes \sigma'(w)$, for all $w\in\Sigma^*$. 
Hence the following holds true for every $w\in\Sigma^*$:
\[ \displaystyle
\begin{array}{c c l}
p_{A''}(w) &=& \text{Tr}\Big(\sum_{(r_j,r'_k)\in F\times F'}^{} (P_{r_j}(\#)\otimes P'_{r'_k}(\#))\sigma''(w) (P_{r_j}(\#)\otimes P'_{r'_k}(\#))\Big)\\
&=& \text{Tr}\Big(\sum_{r_j\in F}\sum_{r'_k\in F'} P_{r_j}(\#)\sigma(w) P_{r_j}(\#) \otimes P'_{r'_k}(\#)\sigma'(w) P'_{r'_k}(\#) \Big)\\
&=& \text{Tr}\Big(\Big[ \sum_{r_j\in F} P_j(\sharp) \sigma(w) P_j(\sharp)\Big]\otimes \Big[\sum_{r'_k\in F} P_{r'_k}(\sharp) \sigma'(w) P_{r'_k}(\sharp) \Big] \Big)\\
&=& p_A(w)\cdot p_{A'}(w) \\
\end{array}
\]
and this proves closure under Hadamard product. To prove closure under complement, let $B$ the \mon\, defined to be equals 
to the \mon\, $A$ but with $V(O_{\#}^B):= V(O_{\#}^A)\setminus F_A$. Then $p_B(w) = 1-p_A(w)$ for every $w\in\Sigma^*$.

\end{proof}

\subsection{Proof of Proposition \ref{prop:linearRep} (iii)}
Let $L\in\lmo(\Sigma)$ and $\chi_L$ be its characteristic function. For any $\epsilon > 0$, there exists $\phi\in\pmo(\Sigma)$ such that $|\phi(w)-\chi_L(w)|<\epsilon$, for all $w\in\Sigma^*$.
\begin{proof} 
The proof is almost identical to the one given over $(\Sigma, E)$ in \cite{BMP10}.
Let $L\in\lmo(\Sigma)$ and let $A=\langle \Sigma\cup\{\sharp\}, (O_c)_{c\in\Sigma\cup \{\sharp\}}, \pi_0, F \rangle$ be a \mon\, over $\Sigma$ inducing a probabilistic event $p_A$ such that $L=\{w\in\Sigma^* | p_A(w)>\lambda\}$ for some 
$\lambda\in [0, 1)$ and there exists $\delta > 0$ such that, for all $w\in\Sigma^*$ it holds $|p_A(w)-\lambda|\geq \delta$.
Then fix an integer $N\in\mathbb{N}_0$. Without loss of generality, for every $c\in\Sigma\cup\{\sharp\}$, we 
can define new observables $O^{(1)}_c, \ldots, O^{(N)}_c$ such that $O^{(i)}_c$ possesses the same set of projectors as $O_c$ but different eigenvalues, so that $V(O^{(1)}_c\otimes\cdots\otimes O^{(N)}_c)=V(O^{(1)}_c)\times \cdots \times V(O^{(N)}_c)$. 
Moreover, for $1\leq i\leq N$, let $F^{(i)}$ be the set of results of $O^{(i)}_{\sharp}$ corresponding to results of $O_{\sharp}$ in F. 
Then, the \mon\, $A^{(i)} = \langle \Sigma\cup\{\sharp\}, (O^{(i)}_c)_{c\in\Sigma\cup\{\sharp\}}, \pi_0, F^{(i)} \rangle$ over $\Sigma$ satisfies $p_{A^{(i)}}=p_A$. 
Now, consider the \mon\, over $\Sigma$ defined as follows 
\[
A_N = \langle \Sigma\cup\{\sharp\}, (O^{(1)}_c\otimes \cdots \otimes O^{(N)}_c)_{c\in\Sigma\cup\{\sharp\}}, 
\pi_0\otimes\cdots\otimes\pi_0, F_N\rangle
\] where $(r_{j_1}^{(1)}, \ldots, r_{j_n}^{(N)})$ belongs to $F_N$ if and only if there exist at least $\lambda N$ indexes $i \in \{1, \ldots, N\}$ such 
that $r^{(i)}_{j_i}\in F^{(i)}$. Then it holds that \[p_{A_N}(w) = \sum_{k\geq \lambda N} \binom{N}{k} p_A^k(w)(1-p_A(w))^{N-k} = \text{Pr}\Big[ \frac{\sum_{i=1}^N X_i}{N} \geq \lambda \Big]\] for every $w\in\Sigma^*$, where $X_i$'s are i.i.d. random variables over $\{0, 1\}$ with $\text{Pr}(X_i=1)=p_A(w)$. If $w\not\in L$, then $p_A(w)\leq \lambda -\delta$. Thus by H\"offdings' inequality, it holds that 
\[\text{Pr}\Big[ \frac{\sum_{i=1}^N X_i}{N} \Big]\leq \text{Pr}\Big[ \frac{\sum_{i=1}^N X_i}{N}-p_A(t) \geq \delta \Big]\leq e^{-2\delta^2 N}\] so we get $p_{A_N}(w)\leq e^{-2\delta^2 N}$. If $w\in L$, by the same reasoning we obtain $p_{A_N}(w)\geq 1-e^{-2\delta^2 N}$. This imply 
that for any $\epsilon > 0$, for every $N$ such that $\epsilon \geq e^{-2\delta^2 N}$, we have $|\chi_L(w)-p_{A_N}| \leq \epsilon$, where $\chi_L$ is the characteristic function of $L$.
\end{proof}
\subsection{Proof of Proposition \ref{prop:linearRep} (iv)}
Given a formal series $\phi:\Sigma^*\rightarrow [0,1]$, let $\langle \xi, (P(c))_{c\in\Sigma}, \eta \rangle$ be a linear representation of $\phi$, where $||\xi|| = 1$ and $P(c)$ is a projector for $c\in\Sigma$, $P(\epsilon)=I$. Suppose that $|\phi(w)-\lambda|\geq \delta > 0$ for all $w\in\Sigma^*$, and let $L:=\{w\in\Sigma^* | \phi(w)>\lambda\}$. 
\begin{enumerate}
\renewcommand\labelenumi{\theenumi}
\renewcommand{\theenumi}{(\roman{enumi})}
\item $T$ is a regular language on $\Sigma^*$.
\item there exists a finite-state automaton $\langle \Sigma, Q, (\underline{\delta}_c)_{c\in\Sigma}, q_0, F \rangle$ recognizing $L$ such that, for any $w,u\in\Sigma^*$, the following holds: 
$\underline{\delta}_w(q_0)\neq \underline{\delta}_u(q_0)$ implies $||\xi P(w) - \xi P(u)||\geq \frac{\delta}{|| \eta ||}$.
\end{enumerate}
\begin{proof} The proof technique is classical and it is almost identical to the one given in \cite{BMP10} for formal series over 
$\text{FI}(\Sigma, E)$. Consider the automaton $B$ whose (possibly infinite) state set is $\overline{Q}:=\{\xi P(w) | w \in\Sigma^*\}$, the 
transition on $c\in\Sigma$ is $\overline{\delta}_c(\xi P(w)):=\xi P(wc)$, the initial state is $\xi$ and the set of final states is 
$\overline{F} := \{\xi P(w) | \xi P(w)\eta^T > \lambda\}$. Then $w\in L$ if and only if $\overline{\delta}(\xi, w)\in\overline{F}$. Now define the binary 
relation $\sim \subseteq\overline{Q}\times\overline{Q}$ as $\xi P(w) \sim \xi P(u)$ if and only if there exists $x_1, \ldots, x_M\in\Sigma^*$ such 
that $x_1=w, x_M=u$ and $||\xi P(x_{i+1}) - \xi P(x_{i})||\leq \frac{\delta}{||\eta||}$ for every $1\leq i < M$. 
Since $P(c)$'s are a projectors, they can only decrease distances, then $\sim$ is a congruence. Consider automaton $B/\sim$, 
this automaton recognizes $L$. Also, for any $w,u\in\Sigma^*$, $[\xi P(w)]_{\sim}\neq [\xi P(u)]_{\sim}$ imply 
$||\xi P(w) - \xi P(u)|| \geq \frac{\delta}{||\eta||}$ and this proves the inequality stated in (ii). The cardinality of $\Sigma^*$ is denumerable, for $i\in\mathbb{N}$ let $w_i$ be the representative of the $i$-th equivalence class $[\xi P(w_i)]$. Observe that the $\xi P(w_i)$'s lie inside the sphere of radius $1$, which is a compact set, but if $i\neq j$ then $||\xi P(w_i) - \xi P(w_j) || \geq \frac{\delta}{||\eta||}$. This means that there are a finite number of equivalence classes of $\sim$, since otherwise they breach the unitary sphere. The thesis follows.
\end{proof}
\subsection{Proof of Proposition \ref{prop:linearRep} (v)}
Let $\Sigma$ be a finite non-empty set. The class $\lmo(\Sigma)$ is closed under intersection, complement and union.
\begin{proof} The proof is almost identical to the one given over $(\Sigma, E)$ in \cite{BMP10}.
Let $L_1, L_2 \in \lmo(\Sigma)$. By Proposition \ref{prop:linearRep} (iii) there are $\phi_1, \phi_2\in\pmo(\Sigma)$ such that 
$|\phi_1(w)-\chi_{L_1}(t)|<\frac{1}{4}$ and $|\phi_2(w)-\chi_{L_2}(w)|<\frac{1}{4}$, for all $w\in\Sigma^*$. By Proposition \ref{prop:linearRep} (ii), 
$\phi_1\odot\phi_2\in\pmo(\Sigma)$. The probabilistic event $\phi_1\odot\phi_2$, with isolated cut point $\frac{1}{2}$, defines the language 
$L_1\cap L_2$. Moreover, $\mathbf{1}-\phi_1\in\pmo(\Sigma)$ and the complement $L^c_1$ is defined by $\mathbf{1}-\phi_1$ with isolated 
cut point $\frac{1}{2}$. Therefore $\lmo(\Sigma)$ is a boolean algebra.
\end{proof}
\subsection{Proof of Proposition \ref{prop:linearRep} (vi)}
$\lmo(\Sigma)$ is a boolean algebra of regular languages in $\Sigma^*$ with finite variation. In particular, if $L$ is a 
language recognized by a \mon\, over $\Sigma$ with $m$ states and isolation $\delta$, then 
$\sup_{x\in\Sigma^*}\text{var}_L(x)\leq \frac{m}{\delta^2}$.
\begin{proof} The proof is almost identical to the one given over $(\Sigma, E)$ in \cite{BMP10}. 
Combining Proposition \ref{prop:linearRep} (iv) with Proposition \ref{prop:linearRep} (v), $\lmo(\Sigma)$ is a 
is a boolean algebra of regular languages. Now, let $L\in\lmo(\Sigma)$, there exists a cut-point $\lambda$, a real $\delta>0$ and 
a \mon\, over $\Sigma$, let denote it by $A=\langle \Sigma\cup\{\sharp\}, (O_c)_{c\in\Sigma\cup\{\sharp\}}, \pi_0, F \rangle$, with $m$ pure states 
such that $L=\{w\in\Sigma^* | p_A(w) > \lambda \}$ and $|p_A(w)-\lambda | \geq \delta$ for all $w\in\Sigma^*$. By Proposition 
\ref{prop:linearRep} (i) there exists a linear representation $\langle \xi, (P(c))_{c\in\Sigma}, \eta \rangle$ of $p_A$, where 
$|| \xi || =1, P(c)\in\mathbb{C}^{m^2\times m^2}$ is a projector for every $c\in\Sigma$ and $||\eta||\leq \sqrt{m}$. By Proposition \ref{prop:linearRep} (iv), there 
exists an automaton recognizing $L$ satisfying the following property: for any $w,u\in\Sigma^*$, if 
$\underline{\delta}_w(q_0)\neq \underline{\delta}_u(q_0)$ then $||\xi P(w) - \xi P(u) || \geq \frac{\delta}{||\eta||}$. As a consequence, by 
considering the minimum automaton $\langle \Sigma, Q, \overline{\delta}, q_0, F \rangle$ for $L$, we have:
\[
\overline{\delta}(q_0, w)\neq \overline{\delta}(q_0, u) \text{ implies } ||\xi P(w) - \xi P(u) || \geq \frac{\delta}{||\eta||}
\]
Since $P(c)$ is a projector, we have $||\xi P(w)||^2 = ||\xi P(wc)||^2 + ||\xi P(w) - \xi P(wc)||^2$, for all $w\in\Sigma^*$.
Therefore, if $\delta(q_0, w_1\cdots w_{k-1})\neq \delta(q_0, w_1\cdots w_{k})$ for $w_i$'s in $\Sigma$, then it follows that:
\[
\begin{array}{c c c}
||\xi P(w_1\cdots w_k) ||^2 &=& || \xi P(w_1\cdots w_{k-1}) ||^2 - || \xi P(w_1\cdots w_{k-1}) - \xi P(w_1\cdots w_k) ||^2	\\
& \leq & ||\xi P(w_1\cdots w_{k-1}) ||^2 - \frac{\delta^2}{||\eta^2||}\\
& \leq & ||\xi P(w_1\cdots w_{k-1}) ||^2 - \frac{\delta^2}{m}\\
\end{array}
\]
By iterating this way, the following holds
\[
0\leq ||\xi P(w_1\cdots w_k) ||^2 \leq ||\xi ||^2 - \text{var}_L(w_1\cdots w_k)\frac{\delta^2}{m} = 1-\text{var}_L(w_1\cdots w_k)\frac{\delta^2}{m}, 
\]
This imply $\text{var}_L(w_1\cdots w_k)\leq \frac{m}{\delta^2}$, for all $w_1\cdots w_k\in\Sigma^*$.
\end{proof}

\end{document}